\newtheorem{theorem}{Theorem}
\newtheorem{lemma}{Lemma}
\newtheorem{corollary}{Corollary}
\newtheorem{definition}{Definition}%
\newenvironment{customthm}[1]
  {\innercustomthm}
  {\endinnercustomthm}
\begin{document}

\title[]{The Role of Sequence Information in Minimal Models of Molecular Assembly}

\author{Jeremy Guntoro and Thomas E. Ouldridge}

\address{Department of Bioengineering, Imperial College London, Exhibition Road, London SW7 2AZ, United Kingdom}
\ead{t.ouldridge@imperial.ac.uk}
\vspace{10pt}

\begin{abstract}
Sequence-directed assembly processes -- such as protein folding  -- allow the assembly of a large number of structures with high accuracy from only a small handful of fundamental building blocks.  We aim to explore how efficiently sequence information can be used to direct assembly by studying variants of the temperature-1 abstract tile assembly model (aTAM).  We ask whether, for each variant, their exists a finite set of tile types that can deterministically assemble any shape producible by a given assembly model; we call such tile type sets ``universal assembly kits''. Our first model, which we call the ``backboned aTAM", generates backbone-assisted assembly by forcing tiles to be added to lattice positions neighbouring the immediately preceding tile, using a predetermined sequence of tile types. We demonstrate the existence of universal assembly kit for the backboned aTAM, and show that the existence of this set is maintained even under stringent restrictions to the rules of assembly. We compare these results to a less constrained model that we call sequenced aTAM, which also uses a predetermined sequence of tiles, but does not constrain a tile to neighbour the immediately preceding tiles. We prove that this model has no universal assembly kit in the stringent case. The lack of such a kit is surprising, given that the number of tile sequences of length $N$ scales faster than both the number and worst-case Kolmogorov complexity of producible shapes of size $N$ for a sufficiently large -- but finite -- set of tiles. Our results demonstrate the importance of physical mechanisms, and specifically geometric constraints, in facilitating efficient use of the information in molecular programs for structure assembly. 
\end{abstract}

%
%
%
%
%

\section{Introduction}\label{sec1}

\subsection{Motivation and Aims}

Biological systems apply a wide range of molecular assembly maps that accept some generalised ``genotype'' as input and produce some phenotpye as output -- for example, the RNA genotype-phenotype map \cite{Dingle2015} --
 to achieve their biochemical complexity. One type of assembly map involves multiple subunits in solution spontaneously coming together to create a more complex structure, with that structure exclusively determined by the interactions between the subunits; here, the input genotype is an unordered set of building blocks. The archetypal example is the joining of protein subunits to produce a larger quaternary structure \cite{Greenbury2014}. Contrast this map type with sequence-directed assembly processes where subunits are first assembled into a sequence connected by a backbone, before folding into the final structure; here, the input genotype is a {\em sequence} of building blocs. A quintessential example is the folding of polypeptide sequences into protein secondary and tertiary structures, with the sequence specified by an mRNA template \cite{Anfinsen1972}. 

Sequence-directed assembly comes with some unique advantages. For example, it tends to utilize a relatively small set of building blocks to create a vast array of specific structures through sequence variation. This advantage is evident in protein folding, where just 20 amino acid types can produce countless distinct functional structures \cite{Poulton2021,CabelloGarcia2023,Guntoro2025}. Conversely, non-sequenced assembly typically requires a larger set of building blocks to achieve similar structural specificity \cite{Rothemund2000}. If we had a set of 20 building blocks that spontaneously formed a specific structure, they would struggle to form alternative structures containing the same basic subunits \cite{Sartori2020}. While this argument is intuitive, exactly which features facilitate the efficient use of building blocks for a given sequence-directed assembly map is less well understood.

The importance of sequence-directed assembly in biology is appreciated, and a number of models have been put forward to investigate different aspects of sequence-directed assembly. Early examples include 1D lattice models, such as the Lifson-Roig model \cite{Lifson1961}, which were used to characterize helix-coild secondary structure phase transitions. The HP-Lattice model \cite{Lau1989}, an approximate model of protein folding, has been used to understand the role of hydrophobic and hydrophilic amino acids in protein folding. RNA secondary structure folding has been used to evaluate the genotype-phenotype map properties of sequence-directed assembly maps due to the speed with which low free-energy configurations can be identified \cite{Dingle2015}. The computational capabilities of backbone-directed assembly have been explored using the Oritatami model \cite{Geary2018,Han2018,Demaine2018}. 

One important motivation for investigating these theoretical models of folding is to further develop artificial molecular assembly systems, including DNA nanotechnology-based approaches \cite{Rothemund2006,Seeman2017}. Notably, while DNA itself is a copolymer of backbone-linked nucleotides, DNA assembly systems utilize both sequence-directed and sequence-free assembly. Some assembly systems rely primarily on the free assembly of short oligonucleotide strands \cite{Videbaek2022,Ke2012,Mohammed2013}. DNA origami \cite{Rothemund2006} represents an interesting hybrid approach, using a long scaffold strand folded by short staples that bind to non-contiguous domains; systems with reusable or reconfigurable modules attempt to more heavily exploit features of sequence-directed assembly \cite{Young2020,Dunn2015}. Single-stranded nucleic acid nanotechnology \cite{Shih2004,Geary2014,Zhou2020,Kocar2016,Han2017} and alternative techniques such as programmable droplets \cite{McMullen2022} take greater advantage of sequence-directed assembly. 

The models we explore in this work are based on the polyomino assembly model applied in \cite{Ahnert2010}, itself based on the temperature-1 abstract Tile Assembly Model (aTAM) \cite{Rothemund2000} allowing for tile rotations \cite{Demaine2012}. aTAM variants are frequently employed to explore the computational capabilities of self-assembly systems. In particular, much of the existing literature on the aTAM has focused on the question of intrinsic universality,  the ability of an assembly system to simulate any other instance of the assembly scheme at some scaling factor \cite{Woods2015}. The temperature-1 aTAM is not intrinsically universal \cite{Doty2011,Meunier2017,Meunier2020}, while the temperature-2 aTAM is \cite{Doty2010,Demaine2012,Woods2015}. Other work has investigated the minimum number of tiles required to produce specific shapes \cite{Soloveichik2007,Meunier2020,Patitz2011} from some starting seed and at some scale factor. Yet another line of literature considers minimal tilesets without starting seeds or scale factors \cite{Ahnert2010,Greenbury2014,Johnston2022}; these works consider the physical properties of the resulting maps from the ``genotype'' of tile types to the assembly phenotype, and their relationship to the Kolmogorov complexities of polyominoes. This line of work has additionally resulted in Boolean Satisfiability algorithms for tile minimization \cite{Russo2022,Bohlin2023}.  

In this paper, we compare different sequence-directed assembly models and consider their ability to deterministically assemble complex shapes using only a finite set of tiles; in doing so, we hope to develop an understanding of the features of efficient sequence-directed assembly maps. Our choice of the aTAM as a base model is motivated by this aim, as the underlying rules of the aTAM are simple enough that we can construct toy models that isolate specific features of real-world, sequence-directed assembly maps while selectively omitting others. For these toy models of assembly, we consider whether the information in a sequence can be used to direct the assembly of any possible shape, and whether this information can be used efficiently. We consider two measures of efficiency, one related to the scaling of the number of possible shapes with size, and the other related to the scaling in Kolmogorov complexity of shapes with size. Kolmogorov complexity is the length of a minimal program (under some universal language) required to produce a certain output \cite{Kolmogorov1968}. While the Kolmogorov complexity is formally incomputable, progress can be made by considering its upper bounds and scaling \cite{Ahnert2010,Greenbury2014,Johnston2022}. 

This paper is organized as follows. In Section \ref{sec:model_def}, we present definitions of models and terms we use. In Section \ref{sec:results}, we present our results (these are briefly summarized in Section \ref{sec:summary}) and discuss these results in light of the properties of our models. We discuss the conclusions and implications of this work in Section \ref{sec:conclusion}.

\subsection{Summary of Results}
\label{sec:summary}

We first consider an aTAM-based model that simulates a backbone-assisted assembly process, which we call the backboned aTAM. Like the base aTAM, the fundamental building blocks of our system are tiles with labeled faces. However, tiles can only be added in a specified order of tile types, and added tiles must neighbour the last added tile. The model can be thought of as mimicking idealized co-transcriptional folding; the added sequence information present in the backboned aTAM means it can be more powerful at selectively producing specific shapes, allowing for a  more repetitive use of tiles.

We provide here an informal definition of a \textbf{universal assembly kit} as this concept is vital to summarizing our results (a more thorough definition is presented in Section \ref{sec:backboned_proof}). Let an assembly scheme be a set of rules that constructs shapes from tiles with numbered faces and possible additional inputs (in our case, these additional inputs are usually a sequence in which tiles can be added). Then, a universal assembly kit is a finite set of tile types and interaction rules that allow any finite shape to be constructed deterministically, if that shape can be constructed at all by the assembly scheme (i.e. without imposing constraints on the set of tile types that may be used, some input to the assembly scheme exists that constructs the shape). 

Our first result is the existence of a universal assembly kit for the backboned aTAM.
\begin{customthm}{1}
A universal assembly kit for the backboned aTAM exists.
\end{customthm} 
The construction we arrive at to prove Theorem {\ref{theorem:walk_main}} is in some sense artificial as it relies exclusively on backbone routing to define shapes while disregarding non-backbone interactions; interactions between tiles that are not connected by the backbone are all ``neutral''. To circumvent this problem, and get closer to protein folding in which interactions with non-neighbouring amino acids are essential in determining the fold, we show that the existence of a universal assembly kit is preserved even when neutral interactions are forbidden and all adjacent tile faces are required to have attractive glue interactions. This finding recalls existing work on the aTAM, where mismatches and rotations were shown to have a weak effect on computational power \cite{Mavnuch2009}.

\begin{customthm}{2}
A universal assembly kit for the backboned aTAM exists such that all inter-tile interactions in any configuration are required to be attractive. 
\end{customthm}

To help us understand the role of sequence information in very general cases, we develop a model that we call the sequenced aTAM, where tiles are added in a predetermined sequence, as in the backboned aTAM, but added tiles are no longer constrained to neighbor preceding tiles. While not reflective of an {\it autonomous} real-world process, such a model could describe a generalisation of the step-by-step synthesis protocols that are used to construct synthetic DNA and proteins \cite{Beaucage1981,Dawson1994}. For our purposes, this construct serves as comparative model in which sequence information is not paired with geometric constraints. We find that unlike the backboned aTAM, a universal assembly kit does not exist in the case of the sequenced aTAM.

\begin{customthm}{4}
The sequenced aTAM does not admit a universal assembly kit.
\end{customthm}

These results on assembly kits can be interpreted in light of the scaling of shape space size and worst-case Kolmogorov complexity of the underlying shape space. Neither the shape space size nor the Kolmogorov complexity prohibit a universal assembly kit for either the backboned or sequenced aTAM. Indeed, the shape space provides a relatively small lower bound on the number of tile types. We find that, although the backboned aTAM does possess a universal assembly kit, requiring that all interactions are attractive makes the resultant ``program'' rather inefficient, due to both redundancy and failed assemblies. These inefficiencies are so large for the sequenced aTAM that no universal assembly kit is possible at all. This result hints at a role of the backbone in reducing assembly complexity beyond coupling shape assembly to a sequence. 

Note, however, that the absence of backbone restrictions means that the sequenced aTAM can assemble shapes that are not formed by a single self-avoiding walk, whereas the backboned aTAM cannot, and therefore the sequenced aTAM has a larger structure space. We cannot therefore conclude that the backbone exclusively confers advantages to assembly. Moreover, the sequenced aTAM can, with a finite asssembly kit, deterministically assemble any shape in the structure space of the backboned aTAM, provided that neutral interactions are allowed.  However, we have been unable to prove whether the sequenced aTAM can -- like the backboned aTAM -- do so even when we impose attractive inter-tile interactions. We conclude by presenting some partial results in this direction. 

\section{Model Definitions}
\label{sec:model_def}

We begin with definitions borrowed from the aTAM literature \cite{Rothemund2000,Doty2011}.  As tiles are allowed to rotate, we differentiate between an oriented tile, which is  a 4-tuple of glue types $(\sigma_N, \sigma_E, \sigma_S, \sigma_W)$, and the orientation-free tile type which is an equivalence class of all cyclic permutations of any of its given tiles. In practice, the tile type can be associated with a default orientation, and the tile can be conceived of a tile type $\Theta$ placed in some specific orientation $\rho = \{N,E,S,W\}$ corresponding to the direction faced by the face that would face north in the default orientation. 

\begin{figure}
    \centering
    \includegraphics[width=0.8\textwidth]{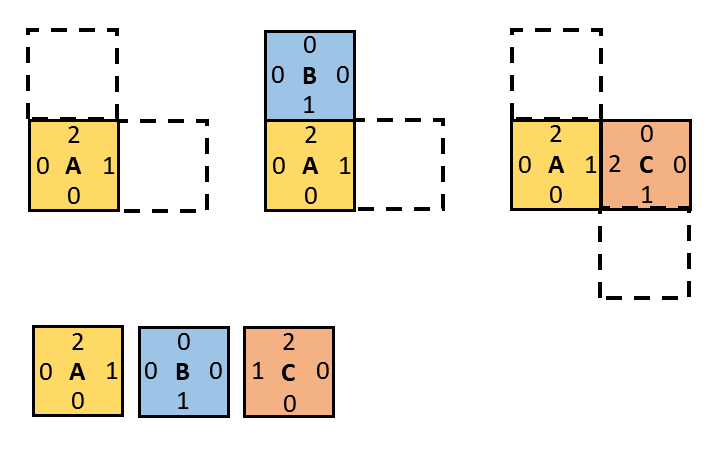}
    \caption{ A diagram illustrating aTAM assembly. The fundamental building blocks of assembly are square tiles with numbered faces. At each step, a tile drawing from the assigned set of tile types \textbf{(bottom)} is added to a random position neighbouring an existing tile, with the possible locations being restricted to those where the resulting sum of strengths of interactions formed is greater than or equal 1 (for temperature-1 aTAM). Here, the glue pair $(1,2)$ are predetermined to have strength of 1. Note that we use a variant of the aTAM in which tiles can be rotated. Example assembly steps steps starting from the state at the \textbf{top left} of the figure are given in \textbf{top centre} and \textbf{top right}.  }
    \label{fig:polyomino_assembly}
\end{figure}

A \textbf{configuration} is a partial function $A:\mathbb{Z}^2 \rightarrow \mathbb{T}$, where $\mathbb{T}$ is the set of all possible tiles. $dom(A)$ is the set of points in configuration $A$ with a tile. A coordinate $z \notin dom(A)$ is empty in $A$. $A$ is a \textbf{subconfiguration} of $A'$ if $dom(A) \subseteq dom(A')$. For convenience, we frequently use single-tile configurations $a = (\Theta, \rho, z)$ for a tile type $\Theta$, an orientation $\rho$ and a coordinate $z$, which we call \textbf{coordinated tiles}. The empty configuration is defined as $A_{empty}$ such that $dom(A_{empty}) = \emptyset$. The addition of configurations $A'' = A + A'$ is well-defined if $dom(A) \bigcap dom(A') = \emptyset$, otherwise $A'' = \infty$ \cite{Rothemund2000}. In the former case, 

 \[
        A''(z) = 
        \begin{cases}
            A(z) \;\: if \;\: z \in dom(A),\\ 
            A'(z) \;\: if \;\: z \in dom(A').\\
        \end{cases}
\]

The \textbf{strength function} is a partial function $g:\mathbb{Z}_+^2 \rightarrow \mathbb{Z}$. The strength function determines the type of interaction between two glues. Two glues $\sigma$ and $\sigma'$ have an attractive interaction if $g(\sigma,\sigma') > 0$, do not interact if $g(\sigma,\sigma') = 0$ (also called a neutral interaction) and have a repulsive interaction if $g(\sigma,\sigma') < 0$. Newly added tiles are allowed to form neutral or repulsive interactions (see  \cite{Mavnuch2009} for a further discussion of negative strengths), but only if the sum of interactions of each of their edges is equal to or greater than the temperature $\tau$, which we have set equal to $1$ for all the models that we consider. 

The aim of any instance of an aTAM system is to produce shapes. For the base aTAM, shapes are assembled through the addition of tiles drawing from the tile type set $T$ until no further tiles can be added (Refer to Figure \ref{fig:polyomino_assembly} for an illustration, and reference \cite{Rothemund2000} for formal definitions). We describe below the way in which our two models, the \textbf{backboned aTAM} and the \textbf{sequenced aTAM}, assemble their shapes. 

We refer to the set of non-empty points $dom(A)$ as the \textbf{oriented shape} of a configuration $A$. We consider all outputs that can be transformed through rotations and translations as equivalent; hence we rely on a more general notion than oriented shape. A \textbf{shape} is thus an equivalence class of oriented shapes, containing oriented shapes that can be transformed to each other via rotations and translations, and the shape of a configuration is the shape to which its \textbf{oriented shape} belongs. We denote by $\simeq$ the equivalence relation that defines a shape, such that if two oriented shapes $S$ and $S'$ have the same shape, then $S \simeq S'$. Hence, the only ambiguity in shape is overall rotation or translation, which do not violate the equivalence class.

\begin{figure}
    \centering   \includegraphics[width=0.8\textwidth]{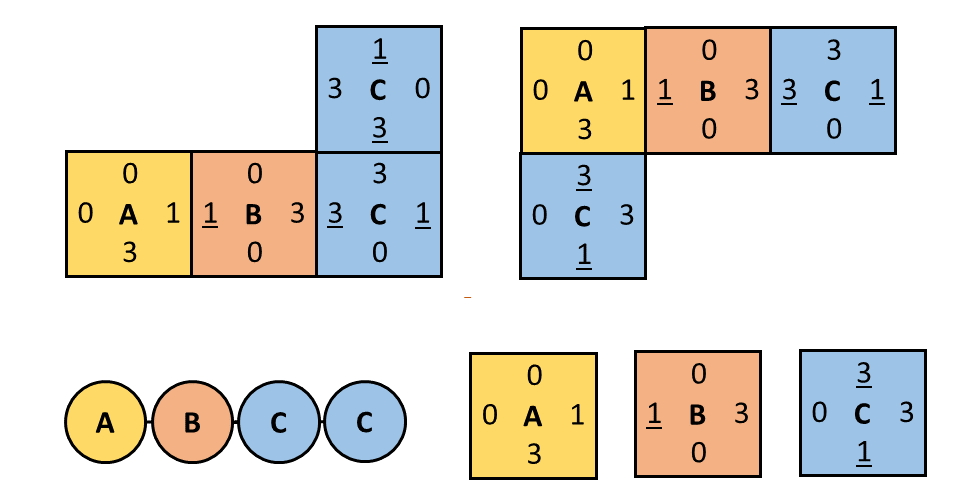}
    \caption{A figure illustrating assembly via the backboned aTAM and the sequenced aTAM. Consider an instance $(A_{empty},Q,g)$ of either the backboned aTAM or the sequenced aTAM, and where $g(x,\underline{x}) = 1$. An example sequence $Q$ is provided in the \textbf{bottom left}, with the letters of each tile type corresponding to tiles found in the \textbf{bottom right}. For a backboned aTAM instance, added tiles must neighbour the tile added in the immediately preceding step, and hence only the \textbf{top left} configuration can result from the backboned aTAM. By contrast, the sequenced aTAM has no such restriction, and both top left and top right configurations can be the final configuration in a trajectory of a sequenced aTAM instance.}
    \label{fig:backboned_sequenced}
\end{figure}

Intuitively, the \textbf{backboned aTAM} aims to mimic a cotranslational folding system. Unlike the base aTAM, tiles can only be added in a predetermined order (a sequence of tile types is provided as an input to the assembly system), and any added tiles must neighbour the last added tile (Figure \ref{fig:backboned_sequenced}).  Formally, a \textbf{backboned aTAM instance} is a 3-tuple $(\mathcal{A}, Q, g)$, where $\mathcal{A}$ is an initial configuration, $Q$ is an input sequence of tile types and $g$ is a strength function, that generates a set of complete trajectories, with trajectories being defined in Definition 1, and with the set of complete trajectories obeying Definition \ref{def:backbone_complete}. Note that in our treatment of the interaction function $g$, we have allowed negative (respulsive) interactions. However, unlike the approach taken by \cite{Doty2011b} and \cite{Patitz2011}, we do not allow negative interactions to displace existing tiles - rather, tiles that would form strong enough negative interactions to destabilize the configuration are simply blocked.

Note that outside of specific theoretical constructs, $\mathcal{A}$ is usually taken as the empty configuration, so that the backboned aTAM is normally conceived as a seedless assembly system. Compared to the base aTAM, the backboned aTAM accepts a sequence of tile types as opposed to a set of tile types as input. For the base aTAM, an operator $\rightarrow^*_{\mathbf{T}}$ was used to define the assembly of shapes. However, this is not an ideal descriptor for the backboned aTAM as it would obscure the contribution of the sequence. Rather, we build our definition from trajectories, which we define as follows:

\begin{definition}
A \textbf{trajectory} $\Psi = (A_0,A_1,A_2,..)$ is a sequence whose elements are either configurations or $\infty$.
\end{definition}
$\infty$ is generally permitted only in the context of describing forbidden trajectories (that lead to two overlapping tiles). 

A specific instance of the backboned aTAM can generate trajectories through the addition of tiles with types given by the sequence $Q$ of the backboned aTAM system, while obeying the strength function $g$ and rules about tile placement. A trajectory that can be generated by a backboned aTAM system is said to be complete with respect to the backboned aTAM system. We can now formally define a complete trajectory. Note that in using cuts and cut-edges, we follow the approach in \cite{Doty2011b} and \cite{Patitz2011}.

\begin{definition}
Consider a backboned aTAM instance $(\mathcal{A},Q,g)$, with the sequence of tile types $Q = (\Theta_1, \Theta_2,.., \Theta_\zeta)$. A trajectory $\Psi = (A_0,A_1,A_2,..)$ is said to be complete with respect to $(\mathcal{A},Q,g)$ if the following hold. 
\begin{enumerate}
\item Starting configurations are consistent, that is $A_0 = \mathcal{A}$.
\item $A_{t} = A_{t-1} + a_t$ with $a_t$ chosen randomly under the constraints:
\begin{enumerate}
    \item $A_t \neq \infty$ for any $t$.
    \item $a_t$ is the coordinated tile formed by $\Theta_t$ after undergoing some rotation and placed on a coordinate $z$. 
    \item Define $G_A(A)$ of a configuration $A$ to be a graph whose vertices are non-empty coordinated tiles and edges are drawn between neighbouring coordinated tiles. Let a \textbf{cut} be defined in the usual graph-theoretic way, as a partition of graph vertices into two disjoint subsets. The set of \textbf{cut-edges} is then the set of edges with one vertex in each partition. Then, any cut of $G_A(A_t)$ is such that the sum of interaction strengths over cut-edges $\sum_i g(\sigma_i,\sigma_i') > 0$, where $\sigma_i$ and $\sigma_i'$ are neighbouring faces corresponding to an edge. A configuration $A_t$ that obeys this assumption for some interaction function $g$ is called \textbf{g-valid}. Informally, a g-valid configuration is one in which any subconfiguration of tiles in a configuration will be bound to the rest of the configuration with a total strength of at least 1.
    \item  Each added coordinated tile $a_t$ forms an attractive interaction with the coordinated tile $a_{t-1}$ added in the last time step unless $t = 0$ or $A_{t-1}$ is empty.
\end{enumerate}
\item The trajectory terminates upon reaching $A_\zeta$ or when no such single tile configuration $a_t$ can be added. In the latter case, the trajectory is said to have been prematurely terminated.
\end{enumerate}
\label{def:backbone_complete}
\end{definition}

By this definition, the first tile added in an empty configuration can be placed at any coordinate, in any orientation. The set of assembled configurations for an instance of the backboned aTAM is defined as the set containing all terminal configurations for all complete trajectories generated by the backboned aTAM instance. The set of assembled shapes for a backboned aTAM instance is thus the set containing all shapes of assembled configurations of the backboned aTAM instance. 

We further define a model that we call the sequenced aTAM, which retains the sequence of the backboned aTAM, but not the backbone constraints. The definition of a sequenced aTAM instance, along with the definition of its complete trajectories, are as follows.  A \textbf{sequenced aTAM instance} is a 3-tuple $(\mathcal{A}, Q, g)$ (with $\mathcal{A}, Q, g$ defined identically to a backboned aTAM instance) that generates a set of complete trajectories obeying Definition \ref{def:sequence_complete}, rather than Definition \ref{def:backbone_complete}. 

\begin{definition}
Consider a sequenced aTAM instance $(\mathcal{A},Q,g)$, with the sequence of tile types $Q = (\Theta_1, \Theta_2,.., \Theta_\zeta)$. A trajectory $\Psi = (A_0,A_1,A_2,..)$ is said to be complete with respect to $(\mathcal{A},Q,g)$ if the following hold. 
\begin{enumerate}
\item Starting configurations are consistent, that is $A_0 = \mathcal{A}$.
\item $A_{t} = A_{t-1} + a_t$ with $a_t$ chosen randomly under the constraints:
\begin{enumerate}
    \item $A_t \neq \infty$ for any $t$.
    \item $a_t$ is the coordinated tile formed by $\Theta_t$ after undergoing some rotation and placed on a coordinate $z$. 
    \item Any cut of $G_A(A_t)$ is such that the sum of interaction strengths over cut-edges $\sum_i g(\sigma_i,\sigma_i') > 0$, where $\sigma_i$ and $\sigma_i'$ are neighbouring faces corresponding to an edge. A configuration $A_t$ that obeys this assumption for some interaction function $g$ is called \textbf{g-valid}.
\end{enumerate}
\item The trajectory terminates upon reaching $A_\zeta$ or when no such single tile configuration $a_t$ can be added. In the latter case, the trajectory is said to have been prematurely terminated.
\end{enumerate}
\label{def:sequence_complete}
\end{definition}

The key difference in the definition of a complete trajectory for the sequenced aTAM relative to the backboned aTAM is that it does not mandate that new tiles are placed adjacent to the previous tile. This difference allows for the same $\mathcal{A}, Q, g$ to generate a distinct set of trajectories

We end this section by defining a few ideas necessary to build towards a notion of a universal assembly kit. The set of all shapes that are defined by terminal configurations in any instance of an assembly model (the backboned or sequenced aTAM) is called the \textbf{shape space} of assembly model; informally, it is the set of shapes that can be assembled by the model, allowing any sequence and strength function. A given instance of the backboned or sequenced aTAM, with a specific sequence and strength function, is called \textbf{deterministic} if and only if its set of assembled shapes contains exactly one element. We use \textbf{oriented determinism} to refer to the stronger condition of a system assembling only a single oriented shape (the latter is only possible if the starting configuration is non-empty).

\section{Results}
\label{sec:results}

To address the question of whether various aTAM models can efficiently encode the deterministic assembly of shapes into their sequences, we consider the existence of universal assembly kits, first for the backboned aTAM, and eventually for the sequenced aTAM. A \textbf{universal assembly kit} is defined as a finite tile type set and corresponding strength function that allows the deterministic assembly of any finite shape  within the shape space of the model.  Here, the starting configuration must be the empty set and the sequence of tiles can be arbitrary, within those sequences permitted by the finite tile set.

\subsection{Universal Assembly Kits for Backboned aTAM}
\label{sec:backboned_proof}
 We begin our results with the following Lemma.


\begin{lemma}
The shape space of the backboned aTAM is a subset of the shapes of self-avoiding paths. 
\end{lemma}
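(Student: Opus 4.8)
The plan is to unwind Definition~\ref{def:backbone_complete} and observe that the coordinates at which tiles are placed along any complete trajectory form a self-avoiding walk on $\mathbb{Z}^2$, so that the domain of the terminal configuration is exactly the vertex set of a self-avoiding path.

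Concretely, I would start from an arbitrary shape $S$ in the shape space of the backboned aTAM. By definition there is a backboned aTAM instance $(A_{empty}, Q, g)$ (the shape space being understood, as for the universal assembly kit, with empty starting configuration) admitting a complete trajectory $\Psi = (A_0, A_1, \ldots, A_n)$ with $A_0 = A_{empty}$ whose terminal configuration $A_n$ has shape $S$. For each $t \ge 1$ we may write $A_t = A_{t-1} + a_t$ with $a_t = (\Theta_t, \rho_t, z_t)$ a coordinated tile. Two facts are then read directly off the definition. First, since $A_t \ne \infty$ for all $t$, every addition $A_{t-1} + a_t$ is well-defined, so $z_t \notin dom(A_{t-1}) = \{z_1, \ldots, z_{t-1}\}$; hence $z_1, \ldots, z_n$ are pairwise distinct and $dom(A_n) = \{z_1, \ldots, z_n\}$. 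Second, condition~2(d) requires each $a_t$ with $t \ge 2$ (the case $t = 1$ being exempt because $A_0$ is empty) to form an attractive interaction with $a_{t-1}$, and two coordinated tiles can share any glue interaction at all only when their coordinates are lattice-adjacent; hence $z_t$ and $z_{t-1}$ are neighbours in $\mathbb{Z}^2$ for every $t \ge 2$.

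Combining these, $(z_1, \ldots, z_n)$ is a sequence of distinct lattice points in which consecutive points are adjacent, i.e.\ a self-avoiding path, whose vertex set is exactly $dom(A_n)$. Consequently $S$, the shape of $A_n$, is the shape of a self-avoiding path, which is the claim. The degenerate cases cause no trouble: a single-tile terminal configuration is a one-vertex path, and a prematurely terminated trajectory simply produces a shorter self-avoiding path.

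I do not expect a genuine obstacle here, since the lemma is essentially a restatement of the adjacency and non-overlap clauses of Definition~\ref{def:backbone_complete}. The only points deserving explicit mention are (i) the elementary observation that ``$a_t$ forms an attractive interaction with $a_{t-1}$'' forces $z_t$ and $z_{t-1}$ to be lattice-adjacent, because the strength function only ever acts on pairs of glues lying on a shared edge; and (ii) the fact that the shape space is taken with an empty seed, so that the entire terminal configuration is grown tile-by-tile under the predecessor-adjacency rule rather than inherited from $\mathcal{A}$ (without this, an arbitrary seed would immediately falsify the statement).
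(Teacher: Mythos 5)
Your proof is correct and takes essentially the same approach as the paper's, which simply notes that the claim follows directly from Definition~\ref{def:backbone_complete} because added tiles must neighbour the last added tile; you have merely spelled out in detail the two observations (non-overlap from $A_t \ne \infty$, adjacency from condition~2(d)) that the paper leaves implicit.
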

\begin{proof}

This lemma trivially proceeds from definition $\ref{def:backbone_complete}$, as added tiles must neighbour the last added tile.
\end{proof}

\begin{theorem}
A universal assembly kit for the backboned aTAM exists.
\label{theorem:walk_main}
\end{theorem}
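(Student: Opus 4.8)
The plan is to exhibit an explicit finite set of tile types, together with a strength function, that deterministically assembles any self-avoiding path shape. By the preceding Lemma, the shape space of the backboned aTAM is contained in the set of self-avoiding path shapes, so it suffices to realise every such shape. Conversely every self-avoiding path shape is clearly achievable by \emph{some} backboned aTAM instance (take one tile type per step with bespoke glues), so the shape space is exactly the self-avoiding path shapes, and a universal assembly kit must cover all of them. The natural idea, suggested by Figure~\ref{fig:blocks_walk}, is to use a small family of ``directed'' tiles that at each step encode the local turn direction of the path (go straight, turn left, turn right), so that the sequence $Q$ literally spells out the itinerary of the walk.

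The construction I would use: introduce a handful of tile types whose glues come in matched ``entry/exit'' pairs, arranged so that each tile has exactly one distinguished incoming face and one distinguished outgoing face, and the tile type determines the angle between them (e.g. a ``straight'' tile, a ``left-turn'' tile, a ``right-turn'' tile; one may also need a distinct seed/initial tile and a terminal tile, and possibly a constant number of glue ``colours'' cycled through to prevent a later segment of the path from illegitimately binding an earlier segment). The strength function makes an exit glue attractive only to the correspondingly-indexed entry glue and neutral otherwise. Because the backboned aTAM forces tile $t$ to be placed adjacent to tile $t-1$ and forces an attractive interaction between them, the outgoing face of $a_{t-1}$ must meet the incoming face of $a_t$; this pins down the position and orientation of $a_t$ relative to $a_{t-1}$ up to the choice dictated by the tile type. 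Thus, given a target self-avoiding path, reading off its sequence of local moves yields a sequence $Q$ over this finite alphabet whose \emph{unique} complete trajectory (if it terminates at $A_\omega$ rather than prematurely) traces out exactly that path.

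The key steps, in order, are: (1) define the tile set and $g$ precisely and verify it is finite; (2) given an arbitrary self-avoiding path shape $P$ of length $N$, construct the corresponding sequence $Q$ of length $N$ from the turn sequence of $P$; (3) show existence — that the complete trajectory following $P$ is indeed permitted (no overlaps, each step forms the required attractive bond, and it is not prematurely terminated) — which is where self-avoidance of $P$ is used; (4) show uniqueness/determinism — that \emph{every} complete trajectory of $(A_{empty},Q,g)$ produces the shape $P$. For step~(4) one argues inductively on $t$: the first tile's placement is irrelevant up to the rotation/translation equivalence $\simeq$, and at each subsequent step the requirement that $a_t$ bond attractively to $a_{t-1}$, combined with the glue-matching structure and the tile type $\Theta_t$, forces the displacement and orientation of $a_t$ to be the one prescribed by $P$ — the only freedom a turn tile could have (turning left vs.\ right into an already-occupied cell) is eliminated because the forced attractive bond is with the immediately preceding tile only, so the tile type alone fixes the turn.

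The main obstacle I anticipate is step~(4), specifically ruling out ``spurious'' determinism failures: one must ensure that no alternative orientation of $a_t$ can also form an attractive bond with $a_{t-1}$ (so the entry/exit glue scheme must be asymmetric enough that a tile cannot bind ``backwards'' or sideways), and that a later tile $a_s$ cannot find an alternative legal placement by bonding to some earlier tile $a_j$ with $j<s-1$ — the backbone rule already forbids $a_s$ from bonding to $a_j$ as its \emph{required} bond, but one must still check that if $a_s$ is placed at its prescribed location adjacent to $a_{s-1}$, any incidental contact with an earlier tile is neutral (not repulsive, so the configuration stays $g$-valid, and not attractive in a way that would permit an off-path placement). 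Handling this cleanly is what may require cycling through a bounded number of glue labels along the path, or an argument that self-avoidance plus the local geometry already precludes problematic contacts; establishing that bounded palette suffices is the delicate point of the proof.
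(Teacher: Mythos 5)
Your proposal takes essentially the same route as the paper: the universal kit is exactly the four directed tiles you describe (a start tile plus straight/left/right turn tiles), with a single attractive glue pair $(1,2)$ on entry/exit faces and neutral glue $0$ on all other faces, and the sequence $Q$ spells out the turn itinerary of the target self-avoiding walk. The delicate point you anticipate in step (4) dissolves because this theorem permits neutral interactions: with every non-$(1,2)$ pair given strength $0$ there is no repulsion to threaten $g$-validity, and since the backbone rule forces $a_t$ to bond attractively to $a_{t-1}$ specifically, the only legal placement puts $a_t$'s entry face (glue $2$) against $a_{t-1}$'s unique free exit face (glue $1$), fixing position and orientation without any glue-colour cycling (the harder question of doing this with no neutral interactions is deferred to Theorem~\ref{theorem:interacting_walk}).
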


\begin{proof} Since any shape that can be assembled by a backboned aTAM system is a shape of some self-avoiding walk, we can always assemble any shape that is achievable by a backboned aTAM system by encoding the set of left, right or forward moves of the underlying self-avoiding walk. We use glues $0, 1$ and $2$, along with the following strength function:
\begin{equation}
    g =
    \left\{
    \begin{array}{ll}
      1, & \mbox{if}\ (\sigma,\sigma') = (1,2) \mbox{ or } (\sigma,\sigma') = (2,1),\\
      0, & \mbox{otherwise}.
    \end{array}
    \right.
\label{eq:g_1}
\end{equation}
Then, each left, right or forward move can  be performed by a specific  ``directed'' tile, shown in Figure \ref{fig:blocks_walk}. An additional tile encoding the start of the shape is required, but there is no need for an end tile as any of the directed tiles can be placed at the end of any self-avoiding walk without impacting the final shape. Hence, a universal assembly kit with 4 tiles is sufficient to assemble any shape achievable by the backboned aTAM, without requiring a starting seed configuration or needing to scale up the shape.
\end{proof}

Given this theorem, the following corollary trivially holds. 

\begin{corollary}
The shape space of the backboned aTAM is equal to the set of shapes of self-avoiding paths.
\label{corollary:shape_space}
\end{corollary}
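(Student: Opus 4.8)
The plan is to establish the claimed equality by proving the two inclusions separately. One direction, that the shape space of the backboned aTAM is contained in the set of shapes of self-avoiding paths, is exactly Lemma~1, so nothing further is needed there. The real work is the reverse inclusion: every shape of a self-avoiding path lies in the shape space, i.e.\ is the unique assembled shape of some backboned aTAM instance with empty seed. For this I would simply reuse the directed-tile construction already built for Theorem~\ref{theorem:walk_main}.

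Concretely, given an arbitrary self-avoiding path I would fix a traversal of it, obtaining cells $c_1,\dots,c_n$ with consecutive cells lattice-adjacent and all cells distinct, and read off for each $2 \le i \le n-1$ whether the step $c_i \to c_{i+1}$ is a left turn, a right turn, or a straight move relative to $c_{i-1}\to c_i$. I would then form the sequence $Q$ by placing the ``start'' tile of Figure~\ref{fig:blocks_walk} first, followed by the directed tiles encoding the recorded moves, and equip the instance with the empty initial configuration and the strength function $g$ of \eqref{eq:g_1}. The claim to be checked is that this instance assembles exactly the shape of the chosen path, deterministically.

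The main thing to verify --- and where I expect the only genuine content to lie --- is that this instance is deterministic in the sense of Section~\ref{sec:model_def}. First, since the initial configuration is empty, the start tile may be placed at any coordinate and any orientation, which introduces only an overall translation and rotation and hence does not change the shape. Then, inductively, I would argue that once the tile at time $t-1$ is fixed the tile at time $t$ is forced: each directed tile carries exactly one face with a glue from the attractive pair $(1,2)$ acting as an ``entry'' and exactly one face with the complementary glue acting as an ``exit'', so the requirement in Definition~\ref{def:backbone_complete} that the new tile bind attractively to the immediately preceding tile pins down both the cell it occupies (on the exit side of the previous tile) and its rotation (entry face abutting that exit face), and the internal design of the left/right/forward tiles then sends the exit face in the intended direction. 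Because $g$ takes values only in $\{0,1\}$ there are no repulsive interactions, so no otherwise-legal placement can be blocked and the $g$-validity condition collapses to the existence of that one attractive bond; because the traversal is self-avoiding, the forced cell is always empty, so $A_t \neq \infty$ and the trajectory never prematurely terminates, reaching $A_\omega$ with oriented shape equal --- up to the initial translation and rotation --- to the target path. I would dispatch the degenerate one-cell and two-cell paths directly (start tile alone; start tile plus one directed tile), and note, as already observed in the proof of Theorem~\ref{theorem:walk_main}, that any directed tile may occupy the final position so that no end tile is required. Combining this with Lemma~1 yields the stated equality.
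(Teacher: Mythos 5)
Your proof is correct and takes essentially the same approach as the paper, which proves the reverse inclusion by invoking the directed-tile construction of Theorem~\ref{theorem:walk_main} (the paper's corollary proof is a one-liner; you have simply spelled out the determinism and non-premature-termination arguments in detail). The only cosmetic wrinkle is a small off-by-one in your sequence construction: recording moves for $2 \le i \le n-1$ gives $n-2$ directed tiles after the start tile, i.e.\ $n-1$ tiles for an $n$-cell path, so you should explicitly append one final (arbitrary) directed tile for $c_n$ rather than only remarking that it may be arbitrary.
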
 
\begin{proof}
Any self-avoiding path can be constructed using the method in Theorem \ref{theorem:walk_main}.
\end{proof}

\begin{figure}
\centering
\includegraphics[width=0.7\textwidth]{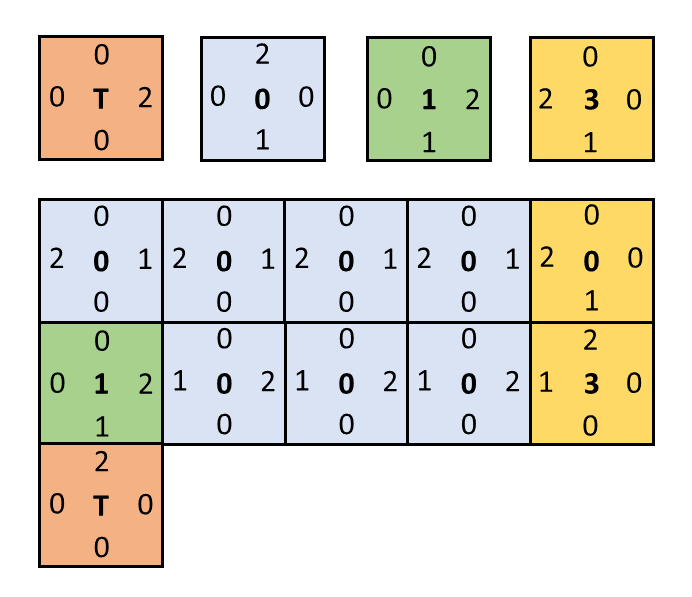}
    \caption{ A figure illustrating a finite set of ``directed'' tiles that comprise a universal assembly kit of the backboned aTAM (\textbf{top}), as well as an example configuration utilizing these tiles (\textbf{bottom}). }  
    \label{fig:blocks_walk}
\end{figure}

\begin{figure}
    \centering
    \includegraphics[width=0.7\textwidth]{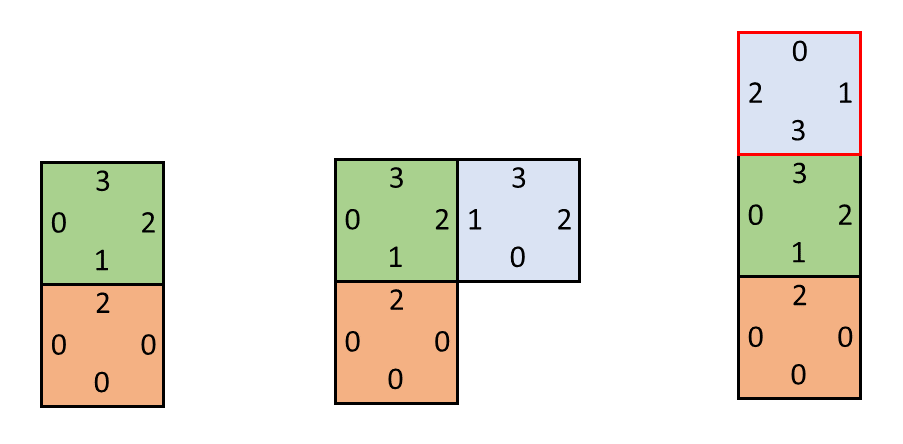}
    \caption{A figure illustrating the difficulties associated with assembly with only attractive (i.e. without neutral) inter-tile interactions. While in principle one can replace the neutral interface type $0$ with a  self-attractive interface type $3$, such attractive interfaces can pull tiles towards unintended positions (\textbf{right}).} 
    \label{fig:incorrect_placement}
\end{figure}

These initial results are fairly straightforward. However, the construct we have arrived at is somewhat artificial when considering the physical system being emulated (co-translational folding), since the shape is completely determined by the backbone routing and non-backbone-adjacent interactions are irrelevant. This result is only possible due to the use of neutral interactions, or by allowing adjacent tiles to have weak repulsive interactions (that are not sufficiently strong to prevent the addition of a tile). We therefore now consider the backboned aTAM with an added restriction that all inter-tile interactions in any configuration are attractive. There is a sense in which this variant is closer to biological folding maps, where matching non-backbone interactions are necessary for accurate folding. One strategy for devising a universal assembly kit is to use the left-right-forward tile types in \ref{fig:blocks_walk}, but to replace the $0$ glues with custom glues depending on interactions. Immediately from Figure \ref{fig:incorrect_placement}, more glue types, and hence tile types, would be required to avoid incorrect assembly. We now proceed to identify a set of glues and tile types that can assemble all possible shapes, thereby proving the existence of a universal assembly kit. 

\begin{theorem}
A universal assembly kit for the backboned aTAM exists such that all inter-tile interactions in any configuration are required to be attractive. 
\label{theorem:interacting_walk}
\end{theorem}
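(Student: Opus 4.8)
The plan is to build a universal assembly kit whose tiles encode the "turn sequence" of a self-avoiding path exactly as in Theorem \ref{theorem:walk_main}, but with glues carefully arranged so that whenever two tiles of the path become lattice-adjacent — whether they are consecutive along the backbone or not — the glues they present to each other are guaranteed to form an attractive pair. The obstacle to reusing the four-tile construction of Figure \ref{fig:blocks_walk} directly is that two non-consecutive portions of the path may be brought into contact, and nothing in that construction forces the faces that meet there to be compatible; indeed they will generically mismatch, and with the all-attractive requirement such a configuration is simply forbidden, so the system might prematurely terminate or, worse, fail to be deterministic. The key idea is therefore to use a \emph{single} glue value, say $1$, on every non-backbone face of every tile, together with a strength function for which $g(1,1) > 0$. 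Then any two tiles that end up adjacent along a face that is not used for backbone continuation automatically interact attractively, so no legal turn sequence is ever blocked by its own geometry.

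Concretely, I would keep a small set of directed tiles — one for each of the moves forward, left, right, plus a start tile — but now each tile carries exactly two distinguished faces (an "incoming" face and an "outgoing" face) bearing the backbone glues that enforce the $(1,2)$-type attraction with the previous and next tile in the sequence, while its remaining two faces carry the universal glue $1$. The strength function is set so that the backbone glue pair has strength $1$ and the pair $(1,1)$ also has strength $1$; one must check that the backbone glue does not accidentally bind to the universal glue in a way that would allow an off-pathway attachment, which is handled by giving the backbone glues their own dedicated values and setting $g$ to $0$ on the cross pairs. Because every adjacency is now attractive, the g-validity condition and the "attractive interaction with $a_{t-1}$" condition of Definition \ref{def:backbone_complete} are met automatically along the intended trajectory, and — crucially — the shape produced is still determined solely by the turn sequence, so determinism is inherited verbatim from the proof of Theorem \ref{theorem:walk_main}.

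The steps, in order, are: (1) fix the glue alphabet and the strength function, verifying it encodes no neutral interactions on any pair that can actually occur between faces of the kit's tiles; (2) specify the (finitely many) directed tile types, identifying on each the backbone-in, backbone-out, and universal faces, and argue that rotations of these tiles suffice to realise an arbitrary turn at each step; (3) show that for any self-avoiding path $P$ there is a sequence $Q$ over this kit whose unique assembled shape is the shape of $P$ — existence of at least one complete trajectory, and then uniqueness, i.e. that every complete trajectory of $(\mathcal{A}_{empty},Q,g)$ yields the same shape; (4) invoke Corollary \ref{corollary:shape_space} to conclude that this exhausts the shape space of the backboned aTAM, so the kit is universal.

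The main obstacle I anticipate is step (3)'s uniqueness claim under the all-attractive regime. In the original construction, neutral interactions meant a directed tile could always be dropped at the growing end regardless of context; here, if the forced incoming/outgoing backbone glue of the next tile could in principle also bind attractively to a stray universal face it passes near, a trajectory might place a tile in an unintended spot and still be g-valid. I would close this gap by making the backbone glues strictly asymmetric and "orientation-locked" — the outgoing glue of step $t$ matches only the incoming glue of step $t+1$ and nothing else in the kit — so that at each step there is exactly one face on $A_{t-1}$ to which $a_t$ can attach via a backbone bond, while the universal-$1$ faces, though always attractive when adjacent, never \emph{enable} a new attachment on their own because a lone $(1,1)$ contact still has strength meeting $\tau=1$ and so could in fact trigger placement. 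This last point is the genuinely delicate spot: I would resolve it by setting $g(1,1)$ to a value that is positive but less than $1$ is not allowed since the codomain is $\mathbb{Z}$, so instead I would route around it by having the universal faces carry a glue that is attractive only in the specific pairing that the path geometry forces, e.g. distinct "colours" assigned so that the colour on a face depends on the local turn type, chosen so that two faces meeting along a non-backbone adjacency always carry a matching colour pair by a parity/colouring argument on $\mathbb{Z}^2$, while no single such contact ever suffices to attach a tile out of sequence. Verifying that such a finite colouring exists — essentially a bounded-degree graph-colouring fact about which pairs of directed tiles can be lattice-adjacent in a self-avoiding path — is, I expect, the crux of the argument and where most of the real work lies.
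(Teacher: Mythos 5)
You correctly identify the central obstacle: if every non-backbone face carries an attractive glue, a lone contact between two such faces already meets $\tau = 1$, so a tile could attach out of sequence. Your pivot to a small palette of ``colours,'' sized by a bounded-degree argument about which tile pairs can become lattice-adjacent, is also aligned with the paper's construction, which uses four attractive non-backbone glue types $\{3,4,5,6\}$ and shows that at most three inequality constraints on the unassigned glue arise at each step.

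There are, however, two genuine gaps. First, you never address what glue goes on a non-backbone face that has \emph{no} neighbour in the final shape. If such a face carries an attractive glue (as your ``universal colour'' scheme would give it), then the very next tile in the sequence --- which need only attract to $a_{t-1}$ --- can be pulled onto it, breaking determinism. The paper resolves this by assigning a \emph{strongly repulsive} glue $0$ with $g(0,\cdot) = -3$ to all such faces: a placement exposing that face to a neighbour is automatically $g$-invalid, so the face never forms an inter-tile contact, and the ``all interactions attractive'' requirement is vacuously preserved there. Without this repulsive glue the construction cannot work, because the integer-valued strength function and $\tau = 1$ rule out a ``weakly attractive but non-triggering'' glue, exactly as you observed. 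Second, a \emph{static} colouring ``depending on the local turn type'' or on lattice parity cannot distinguish faces that will eventually have a neighbour from those that will not --- that distinction depends on the global shape, not on any bounded local window of the turn sequence. The paper's assignment is instead a \emph{sequential greedy} choice made tile by tile along a fixed Hamiltonian path through the target, using knowledge of $A_{t-1}$ and $A_{\omega}$ to pick the one free glue of the current tile, subject to at most three exclusion constraints (one from exposed non-backbone glues of $a_{t-1}$, two from glues that $a_{t+1}$ will need to match). Your sketch would need to be reorganised around that adaptive, per-trajectory assignment and augmented with the repulsive glue to close the proof.
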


\begin{proof} The existence of some infinite tile set is trivial, as in the construction of all possible shapes, one can use the tile set in Theorem \ref{theorem:walk_main}, but replace all glue types of all faces that neighbour another tile face with some unique attractive glue type. However,a universal assembly kit requires a finite number of tiles and reusing the same attractive glue type multiples times within a configuration can result in non-deterministic assembly as tiles can be attracted into incorrect positions (Figure \ref{fig:incorrect_placement}). Hence, we must proceed to construct a scheme for numbering tile types such that only a finite number of glue types $N$ are required while still guaranteeing deterministic shape assembly. Let the glue types be labeled $\{0,1,2,3,4,5...N-1\}$, and let the strength function $g$ be defined as follows. 
\begin{equation}
    g(\sigma,\sigma')=
    \left\{
    \begin{array}{ll}
      1, & \mbox{if}\ (\sigma,\sigma') = (1,2) \mbox{ or } (\sigma,\sigma') = (2,1) \mbox{ or } ( \sigma = \sigma' \mbox{ and } \sigma > 2 ),\\
      -3, & \mbox{otherwise}.
    \end{array}
    \right.
\label{eq:g_2}
\end{equation}
For this $g(\sigma,\sigma')$, interactions are either attractive or so repulsive that they would preclude tile placement. In any configuration formed by this tile set, all interactions must therefore be attractive, as required.

We call the fundamental tile types for our construction ``interacting directed tile types'',  and define them as follows.
\begin{definition}
\label{def:tiles}
    Consider the following set of tile types:
    \begin{enumerate}
    \item $\hat{\Theta}_T(\sigma, \sigma', \sigma'') = ( 2,\sigma, \sigma', \sigma'' )$, $\sigma, \sigma', \sigma'' = 0,3$;
    \item $\hat{\Theta}_0(\sigma, \sigma') = ( 2,\sigma,1,\sigma' )$, $\sigma, \sigma'=0,3,4,5,...N-1$;
    \item $\hat{\Theta}_1(\sigma, \sigma') = ( 2,1,\sigma,\sigma' )$, $\sigma, \sigma'=0,3,4,5,...N-1$; 
    \item $\hat{\Theta}_3(\sigma, \sigma') = ( 2,\sigma,\sigma',1 )$, $\sigma, \sigma'=0,3,4,5,...N-1$;
    \item $\hat{\Theta}_H(\sigma, \sigma', \sigma'') = ( 1, \sigma, \sigma', \sigma'' )$, $\sigma, \sigma', \sigma''=0,3,4,5,...N-1$.
    \end{enumerate}
    We describe these tile types as \textbf{interacting directed tile types} (Figure \ref{fig:interacting_directed}).
\end{definition} 
     These tile types are constructed by analogy with the tile types used in the proof of Theorem~\ref{theorem:walk_main}. For example, $\hat{\Theta}_0$ is a set of tile types with glue types 1 and 2 (\textbf{backbone glue types}) in the same pattern as tile type 0 in Fig.~\ref{fig:blocks_walk}, but with variable glues on the other faces; $\hat{\Theta}_0(\sigma, \sigma')$ identifies a specific member of that set. 
     Non-backbone glue types are divided into zero and non-zero types; we denote by $B^c({\Theta})$ the set of non-zero, non-backbone glue types on tile type ${\Theta}$. 
    Tile faces endowed with backbone and non-backbone glue types are correspondingly called \textbf{backbone faces} and \textbf{non-backbone faces}, respectively.

\begin{figure}
    \centering
    \includegraphics[width=0.7\textwidth]{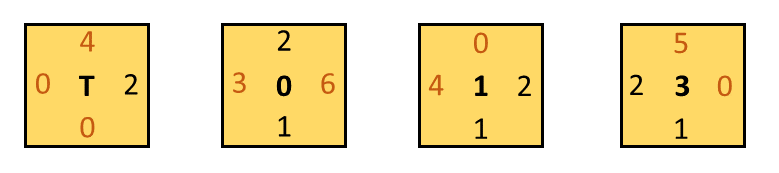}
    \caption{ Example interacting directed tiles, with red/brown glue types representing $\sigma$, $\sigma'$ or $\sigma''$ and black glue types representing backbone faces. }
    \label{fig:interacting_directed}
\end{figure}

We present an algorithm for selecting a sequence $Q$, using a subset of the tiles from Definition~\ref{def:tiles} with $N=7$, to define a backboned aTAM system $(\mathcal{A}, Q, g)$ that can deterministically assemble any given shape within the shape space of the backboned aTAM. The interacting directed tile types in Definition~\ref{def:tiles}, along with the strength function in Equation~\ref{eq:g_2}, therefore define a universal assembly kit for the backboned aTAM, proving the existence of such a kit by construction.

Let $\Psi = (A_{\rm empty}, A_1, ... A_\zeta)$ be a complete trajectory that assembles a shape $\overline{S}$, defining an arbitrary Hamiltonian path through the shape. A sequence of backbone faces consistent with this trajectory can then be selected as in the proof of Theorem~\ref{theorem:walk_main}, fixing whether each tile in the sequence is drawn from the set $\hat{\Theta}_T$, $\hat{\Theta}_0$, $\hat{\Theta}_1$, $\hat{\Theta}_3$ or $\hat{\Theta}_H$. Hence, we only need to select the non-backbone interactions of tile types in the sequence in such a way as to ensure deterministic production of the desired shape.

Assume that the subtrajectory $(A_{\rm empty}, A_1, ..., A_{t-1})$ is given, and that we wish to obtain the next coordinated tile $a_t$ such that $A_t = A_{t-1} + a_t$. 
We apply the following rules to select the non-backbone faces of $a_t$ and hence specify $\Theta_t$, the $t$th tile type in the sequence $Q$:
\begin{enumerate}
    \item Non-backbone faces of $a_t$ with a neighbour in $A_t$ are made to match their neighbouring glue type. 
    \item Non-backbone faces of $a_t$ with no neighbour in $A_\zeta$ (the terminal configuration of $\Psi$) are assigned the repulsive glue type 0. 
    \item Non-backbone faces of $a_t$ with no neighbour in $A_t$ but with a neighbour in $A_\zeta$ are assigned an unknown glue type $\sigma_u$
    \item If the coordinated tile $a_t$ has two (or more) unknown glue types $\sigma_u$ and $\sigma_u'$, then set $\sigma_u = \sigma_u'$
\end{enumerate}

\begin{figure}
    \centering
    \includegraphics[width=0.6\textwidth]{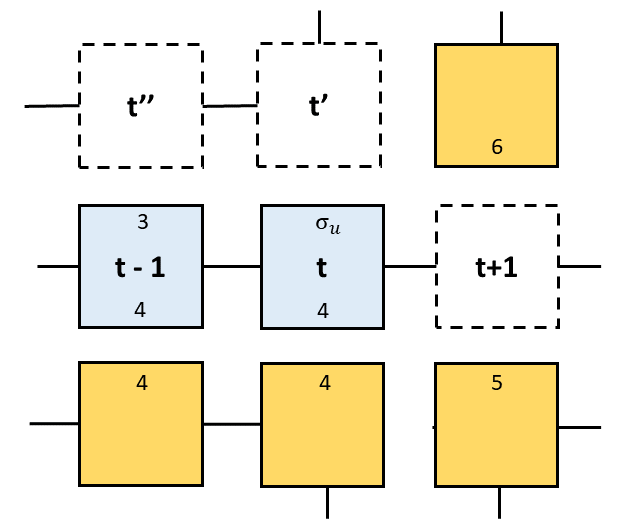}
    \caption{An illustration of the inequalities on the faces of tiles when constructing an assembly kit for the backboned aTAM with only attractive inter-tile interactions. After setting the glue types of known faces, there remains one face with unknown glue type $\sigma_u$. $\sigma_u \neq 3$, $\sigma_u \neq 5$ and $\sigma_u \neq 6$, so $\sigma_u = 4$ is the only correct option if we restrict ourselves to $4$ attractive non-backbone glue types.     }
    \label{fig:inequalities}
\end{figure}

There are two sets of inequalities that must be fulfilled by $\sigma_u$ (see example in Figure \ref{fig:inequalities}):
\begin{enumerate}
\item Define by $\tilde{B^c}(a_{t-1}, A_{t-1}) \subseteq B^c(\Theta_{t-1})$ the set of non-backbone, non-zero glue types of $a_{t-1}$ assigned to at least one face of $a_{t-1}$ with an empty adjacent position in $A_{t-1}$. Then, $\sigma_u \not\in \tilde{B^c}(a_{t-1}, A_{t-1})$ to stop a tile with type $\Theta_t$ binding to $a_{t-1}$ in the incorrect location. 
\item Define by $M(a_{t+1}, A_{t+1})$ the set of non-backbone, non-zero glue types of faces adjacent to $a_{t+1}$ in $A_{t+1}$.
Then, $\sigma_u \not\in M(a_{t+1}, A_{t+1})$ to stop a tile with type $\Theta_{t+1}$ binding to $a_{t}$ incorrectly. 
\end{enumerate}

Our construction means that $\tilde{B^c}(\Theta_{t-1})$ has at most $1$ member, while lattice placement rules mean that $M(a_{t+1}, A_{t+1})$ has at most $2$ members, since the next tile can only have two non-backbone-connected neighbours. Hence, there are at most 3 inequalities on the sole unassigned glue type $\sigma_u$, and hence $4$ attractive glue types are always sufficient to ensure that $\sigma_u$ has an assignment that allows deterministic production of the desired shape. A worst-case scenario is illustrated in Figure \ref{fig:inequalities}.

The arguments above break down for the penultimate tile as the final tile can have three non-backbone-connected neighbours. However, in this case all of the final neighbours of $a_\zeta$ and $a_{\zeta-1}$ are already in place. Therefore the non-backbone glue types of 
$a_{\zeta-1}$ are either specified by neighbouring tiles that are present in the configuration $A_{\zeta-1}$, or 0. They cannot, therefore, cause any ambiguity when $a_\zeta$ is placed, with glue types set either to match neighbours present in configuration $A_{\zeta}$, or 0. Taking $N=7$, 208 tiles are defined by Definition~\ref{def:tiles}. This number provides an upper bound on the size of the minimal universal assembly kit for the backboned aTAM. 

\end{proof}


\subsection{Production of Specific Shapes with the Backboned aTAM}
\label{section: individual_kits}

Having derived universal assembly kits for different variants of the backboned aTAM, we now consider how different assembly models perform when assembling specific shapes. In doing so, we hope to further verify our intuition that sequence information facilitates the assembly of shapes with fewer distinct tile types, and also observe differences in the performances of different sequence-directed assembly models. We consider the following three assembly models in this section. 

\begin{enumerate}
\item Backboned aTAM with arbitrary interactions ({\it i.e., allowing neutral inter-tile interactions}).
\item Backboned aTAM with only attractive inter-tile interactions.
\item Sequence-free polyomino assembly (temperature-1 aTAM with tile rotation).
\end{enumerate}

\begin{figure}[t]
\centering
\begin{subfigure}{0.4\textwidth}
\centering
  \caption{\label{fig:rectangle}}
  \includegraphics[width=\linewidth]{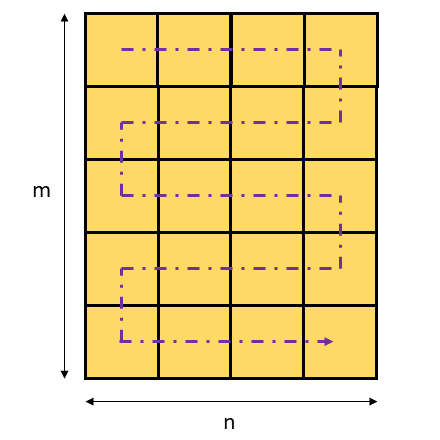}
\end{subfigure}
\begin{subfigure}{0.4\textwidth}
\centering
  \caption{\label{fig:bulge_rectangle}}
  \includegraphics[width=\linewidth]{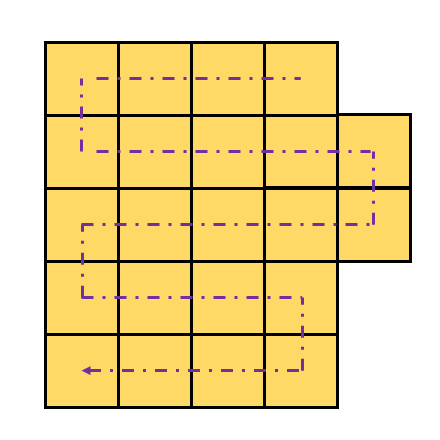}
\end{subfigure}
\caption{ We consider the construction of rectangles \textbf{(a)} and bulged rectangles \textbf{(b)} with $m$ rows and $n$ columns. The backbone path for the backboned aTAM is given in purple. The bulge location is fixed to the second and third rows of the rightmost column. }
\label{fig:individual_assembly_kit}
\end{figure}

\begin{figure}[t]
\centering
\begin{subfigure}{\textwidth}
\centering
  \caption{\label{fig:rectangle_backbone}}
  \includegraphics[width=\linewidth]{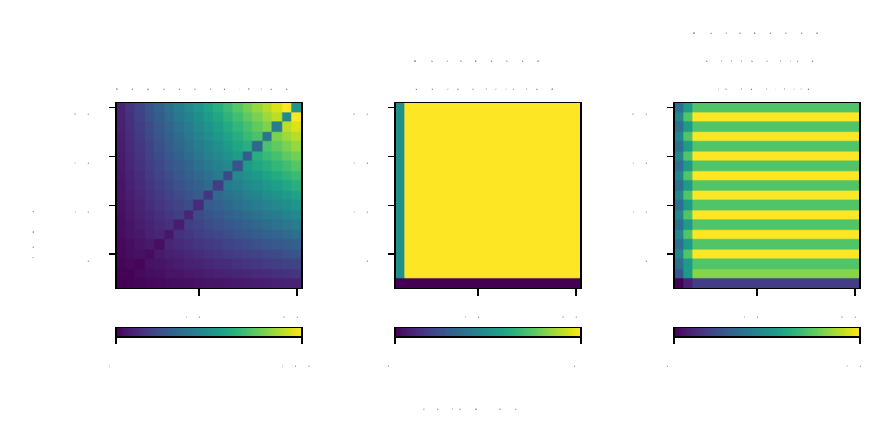}
\end{subfigure}
\begin{subfigure}{\textwidth}
\centering
  \caption{\label{fig:bulge_rectangle_backbone}}
  \includegraphics[width=\linewidth]{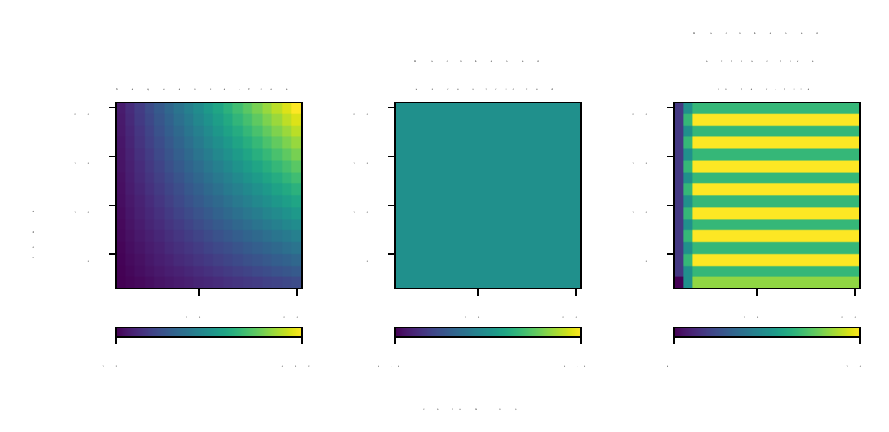}
\end{subfigure}
\caption{ Upper bounds on the tile complexity, or the number of tiles needed to construct \textbf{(a)} rectangles and \textbf{(b)} bulged rectangles of various row and column numbers using different assembly maps.  Backboned maps tend to be capped in the number of tiles they require for arbitrarily large shapes (of the given types), while the sequence-free assembly map requires an unbounded number of tile types. Forcing attractive inter-tile interactions tends to increase the number of tile types required. }
\label{fig:individual_assembly_kit}
\end{figure}

For these models, we consider upper bounds on the minimum number of tiles, also called the tile complexity, needed to construct two types of target shape, rectangles and rectangles with a bulge (Figures \ref{fig:rectangle} and \ref{fig:bulge_rectangle}), starting from the empty configuration. We apply the method in \cite{Ahnert2010} to upper bound a minimal tile complexity for a given shape using sequence-free polyomino assembly. For the backboned aTAM with arbitrary interactions, an upper bound on the minimum tile set is easy to derive, as we can follow some arbitrary hamiltonian path and use the tile types in Theorem \ref{theorem:walk_main}. 
For the backboned aTAM with only attractive inter-tile interactions, we can follow a face assignment algorithm implied by Theorem \ref{theorem:interacting_walk} to find an upper bound on the tile complexity. For these assembly schemes, the hamiltonian path taken is assumed to zig-zag down the rectangle along each row starting from the top row of the rectangle (Figure \ref{fig:individual_assembly_kit}\,(a) and \ref{fig:individual_assembly_kit}\,(b)). 

 In Fig.~\ref{fig:individual_assembly_kit}, we plot upper bounds on the minimum tile complexities of each target shape using different assembly maps. The very smallest rectangles can be constructed with fewer unique tile types using sequence-free assembly than with backbone-directed assembly, as sequence-free assembly is better able to utilize symmetries within the shape \cite{Greenbury2014}. As expected, the asymmetric bulged rectangles display no such behavior as they have no symmetries of which the base aTAM can take advantage. In both cases, the number of tiles required for the base aTAM increases unboundedly for large rectangles, as anticipated. By contrast, the backboned aTAM models show a growth in the number of tile types that plateaus, verifying our intuition that sequences facilitate the assembly of large shapes with a small number of building blocks. 

\subsection{The Sequenced aTAM has No Universal Assembly Kit}

 We have shown that the backboned aTAM possesses a universal assembly kit. We have also confirmed by example that, for large structures, one can achieve deterministic assembly of a single structure with smaller tile sets. At least in part, this difference arises because each instance of the backboned aTAM is associated with a sequence that acts as an information-containing program, in addition to the rules of interaction quantified by the glues. However, an additional difference is provided by the geometric constraints on growth provided by the backbone. 

 In an attempt to disentangle the role of sequence information and geometric constraints, we have introduced the sequenced aTAM defined in Section \ref{sec:model_def}.
We will now show that no universal assembly kit exists that allows the sequenced aTAM to deterministically assemble every shape in its shape space, regardless of constraints on neutral versus attractive inter-tile interactions. 


\begin{figure}
    \centering
    \includegraphics[width=0.8\textwidth]{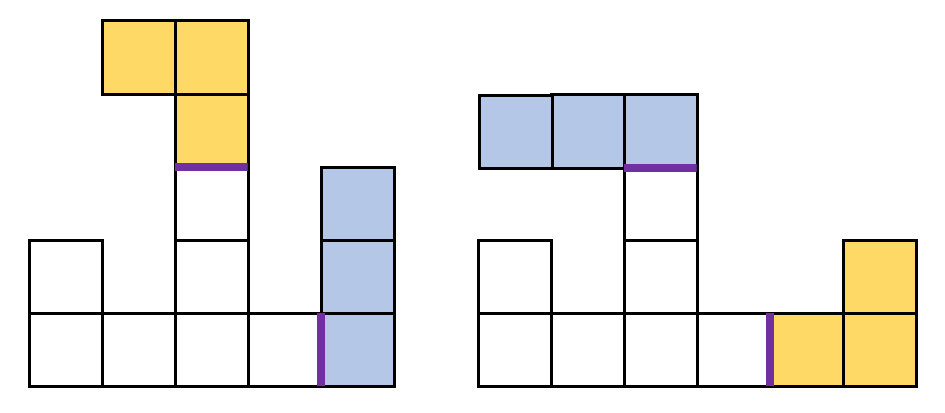}
    \caption{ Intuition for Lemma \ref{theorem:the_beginning}. \textbf{(Left)} A configuration $A$ given some starting configuration $A_0$ (white), and subconfigurations $A_\epsilon$ (blue) and $A_{\epsilon'}$ (yellow). \textbf{(Right)} Assume the faces in purple have the same glue, then subshapes that grow from the purple faces can be flipped while maintaining the same tile type sequence $Q$, forming an alternate configuration $A'$.}
    \label{fig:exchange}
\end{figure}

Our proof proceeds from the following intuition. For an instance of the sequenced aTAM, shapes that grow from a face of a configuration $A_t$ for times $t^\prime>t$ are encoded fully by the glue of the face and the tile type sequence past time $t$ (if we ignore the `blocking' of growth by preexisting tiles). Hence, if we have two tile faces with the same glue type at a time $t$, then (ignoring tile blocking), we cannot stop a shape that grows from one of the faces from also growing on the other face (Figure \ref{fig:exchange}). Hence, if we need to grow $N$ different shapes from $N$ different faces at a time $t$, we require at least $N$ different glue types. If the growth of certain shapes requires an arbitrarily large number of open faces, $N\rightarrow \infty$, then a finite universal assembly kit will be impossible. 

This intuition is incomplete for the following reasons:
\begin{enumerate}
    \item We haven't considered the effect of tile blocking, which can potentially allow many different shapes to grow from the same glue, based on the blocking pattern.
    \item We haven't shown that, for certain classes of shapes, a large number of open faces are unavoidable. A requirement for a large number of open faces is not trivial. For example, when constructing rectangles of any size, it is possible to avoid opening more than two non-neutral faces at any one time by constructing the rectangle row-wise. 
\end{enumerate}

Lemma \ref{theorem:the_beginning} formalizes this intuition. Using this lemma, Theorem \ref{theorem:infinite_interactions} in the main text builds a set of assumptions that allow us to avoid tile blocking and forces us to `open' at least $N$ distinct glue types when $g \geq 0$. This result is extended to unrestricted $g$ in Theorem \ref{theorem:InfiniteInteractions} in the appendix. We then develop a class of shapes that obey these assumptions for arbitrarily large $N$ in Definition \ref{def:treeangle}, and complete the proof in Theorem \ref{theorem:existence}. Readers interested only in the consequences of this result in terms of the efficiency with which the variant aTAM models exploit information within their sequence programs may skip directly to Section \ref{sec:Kolmogorov}. 

We first state a few additional definitions to aid in our proof. Let the \textbf{face coordinate} $ (z,k)$ for a 2D coordinate $z$ and orientation $k \in \{N,E,S,W\}$ denote the $k$-facing face of the coordinate $z$. Then, let $A(z,k)$ return the glue type of $(z,k)$. Two configurations $A$ and $A'$ are \textbf{adjacent} if there exist $z \in dom(A)$ and $z' \in dom(A')$ such that $z$ neighbours $z'$. It is always possible to define a \textbf{face set} of a configuration $face(A)$ as the set of faces of all coordinates in $dom(A)$. Two neighbouring faces $(z,k)$ and $(z',k')$ form a \textbf{face-pair}.  Two configurations $A$ and $A^\prime$ are said to be \textbf{uniquely adjacent} through a face-pair $((z,k),(z',k'))$ if it is the only face-pair where one face of the pair is in the face set of each configuration. 

Over the course of our proofs, we will be relying on sufficient and necessary conditions for some configuration $A_t$ to be part of a complete trajectory. We note that, from the definitions of the backboned and sequenced aTAM, it is clear that $g-validity$ is a necessary condition for any configuration in a complete trajectory. Together with the consistency of starting configurations and added tiles matching the sequence $Q$ (either to the end of $Q$ or until the point of premature termination), g-validity becomes sufficient in ensuring that a given trajectory is complete. We now proceed by formalizing our intuition that open faces growing distinct subshapes require distinct faces, under some set of assumptions:

\begin{lemma}
Consider a sequenced aTAM instance $\mathcal{P} = (A_0,Q,g)$. Let the following be true:
\begin{enumerate}
\item Let $A_0(z_\epsilon,k_\epsilon) = A_0(z_{\epsilon'},k_{\epsilon'})$ for two faces $(z_\epsilon,k_\epsilon)$ and $(z_{\epsilon'},k_{\epsilon'})$.
\item Let $A_{\epsilon}$ be a configuration adjacent to $A_{0}$ uniquely through face pair $(\dot{e}_{\epsilon},\ddot{e}_{\epsilon})$ ($\dot{e}_{\epsilon}$ in $A_\epsilon$ and $\ddot{e}_{\epsilon}$ in $A_0$), while $A_{\epsilon'}$ is a configuration adjacent to $A_{0}$ uniquely through face pair $(\dot{e}_{\epsilon'},\ddot{e}_{\epsilon'})$ and assume $A_{\epsilon}$ and $A_{\epsilon'}$ are not adjacent to each other. Assume further that some complete trajectory $\Psi = (A_0, A_1, A_2, ..., A_{\zeta}) $ generated by $T$ exists such that $A_{\zeta}  = A_0 + A_{\epsilon} + A_{\epsilon'} + A_c $ for some arbitrary configuration $A_c$ not adjacent to $A_{\epsilon}$ or $A_{\epsilon'}$.  
\item Let $A_{\epsilon \rightarrow \epsilon'}$ be an affine transformation of $A_\epsilon$ that maps $\dot{e}_\epsilon$ to $\dot{e}_{\epsilon'}$, and similarly for $A_{\epsilon' \rightarrow \epsilon}$. Then, assume that $A_{\zeta}' = A_0 + A_{\epsilon \rightarrow \epsilon'} +
A_{\epsilon' \rightarrow \epsilon} + A_{c}$ is $g$-valid.
\end{enumerate}
Then, there exists another trajectory $\Psi' = ( A_0', A_1', ..., A_{\zeta}',... )$ complete with respect to $\mathcal{P}$ such that the terminal configuration of $\Psi'$ is $A_{{\zeta}}'$ or has $A_{{\zeta}}'$ as a subconfiguration.
\label{theorem:the_beginning}
\end{lemma}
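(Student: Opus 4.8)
The plan is to prove the statement by explicitly constructing the trajectory $\Psi'$ step by step, showing at each stage that the configuration built so far can be extended in a way consistent with the sequence $Q$ and $g$-validity. The key observation is that a complete trajectory is characterized by just two things: the starting configuration equals $A_0$, and each added tile matches the corresponding entry of $Q$ while keeping $g$-validity (as the excerpt notes just before the lemma, $g$-validity plus consistency with $Q$ is both necessary and sufficient for completeness in the sequenced aTAM, since tiles need not neighbour the previous tile). So I do not need to worry about adjacency-to-previous-tile constraints — only about whether each intermediate configuration is $g$-valid and whether the tiles I place are the right types in the right order.

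First I would partition the index set $\{1,\dots,\omega\}$ according to which of the three pieces $A_\epsilon$, $A_{\epsilon'}$, $A_c$ each tile $a_t$ of the original trajectory $\Psi$ belongs to; call these index sets $I_\epsilon$, $I_{\epsilon'}$, $I_c$. This is well-defined because $A_\omega = A_0 + A_\epsilon + A_{\epsilon'} + A_c$ as a disjoint decomposition of domains. Now I build $\Psi'$ by processing $t = 1, 2, \dots, \omega$ in order: if $t \in I_c$, place the same coordinated tile $a_t$ as in $\Psi$; if $t \in I_\epsilon$, place the image of $a_t$ under the affine transformation $A_{\epsilon\to\epsilon'}$ (so the piece $A_\epsilon$ gets rebuilt at the location of $A_{\epsilon'}$); if $t \in I_{\epsilon'}$, place the image of $a_t$ under $A_{\epsilon'\to\epsilon}$. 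Since affine transformations (rotations/translations) preserve tile \emph{type} — a coordinated tile is a tile type in some orientation at some coordinate, and rotation just cycles the glue tuple — each placed tile still has type $\Theta_t$, so consistency with $Q$ is immediate. The terminal configuration is then exactly $A_\omega' = A_0 + A_{\epsilon\to\epsilon'} + A_{\epsilon'\to\epsilon} + A_c$, which is $g$-valid by hypothesis (3).

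The heart of the argument — and the main obstacle — is verifying $g$-validity of every \emph{intermediate} configuration $A_t'$, not just the terminal one. The worry is that swapping the two subconfigurations could create a forbidden overlap or a face with insufficient total strength at some intermediate step, even though the endpoints are fine. Here is where hypotheses (1) and (2) do the work: because $A_\epsilon$ and $A_{\epsilon'}$ are each \emph{uniquely} adjacent to $A_0$ through a single face-pair, and are not adjacent to each other or to $A_c$, the only interactions the transplanted piece $A_{\epsilon\to\epsilon'}$ can have with the rest of the assembly are internal ones (unchanged, since we moved the whole piece rigidly) plus the single contact with $A_0$ at the face-pair $(\dot e_{\epsilon'}, \ddot e_{\epsilon'})$. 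Assumption (1) guarantees $A_0(z_\epsilon,k_\epsilon)=A_0(z_{\epsilon'},k_{\epsilon'})$, so the glue that $\dot e_\epsilon$ used to see on $A_0$ is the same glue that its image now sees at $\dot e_{\epsilon'}$; hence every face's strength-sum is preserved under the swap. For the overlap issue, I would argue domains stay disjoint at every step because at time $t$ the domain of $A_t'$ is contained in $\mathrm{dom}(A_0)\cup\mathrm{dom}(A_{\epsilon\to\epsilon'})\cup\mathrm{dom}(A_{\epsilon'\to\epsilon})\cup\mathrm{dom}(A_c)=\mathrm{dom}(A_\omega')$, which is a legitimate (finite, non-overlapping) configuration by hypothesis (3). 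Finally, $g$-validity of a subconfiguration of a $g$-valid configuration holds whenever removing tiles only removes interactions — which is true here since $g(\sigma,0)=0$, so an exposed face contributes nothing — so each $A_t'$ inherits $g$-validity from $A_\omega'$; this also handles the case where the original $\Psi$ terminated prematurely, and it is why the statement allows the terminal configuration of $\Psi'$ to merely \emph{contain} $A_\omega'$ as a subconfiguration rather than equal it. I would close by remarking that the argument is symmetric in $\epsilon$ and $\epsilon'$, and that one could equally well swap only one of the two pieces (mapping $A_\epsilon$ onto the $\epsilon'$ site and leaving the rest), which is the form actually used downstream when deriving the glue-counting bound.
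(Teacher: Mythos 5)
Your construction is essentially the same as the paper's: build $\Psi'$ by processing the time steps of $\Psi$ in order, placing at each step the appropriately transformed tile (identity on $A_c$, $\epsilon\to\epsilon'$ or $\epsilon'\to\epsilon$ on the two swapped pieces), so that $A_t' = A_0 + A_{\epsilon\to\epsilon',t} + A_{\epsilon'\to\epsilon,t} + A_{c,t}$, and then argue by induction that each $A_t'$ is $g$-valid and matches $Q$. You supply more of the missing justification than the paper's very terse proof does; in particular, your central argument — that because $A_\epsilon$ and $A_{\epsilon'}$ are each \emph{uniquely} adjacent to $A_0$ through a single face-pair with matching glues (hypotheses 1--2), the swap leaves every tile's local strength sum unchanged, so $g$-validity of $A_t'$ follows from $g$-validity of $A_t$ — is exactly the step the paper only gestures at.

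However, your closing argument is wrong: the claim that ``$g$-validity of a subconfiguration of a $g$-valid configuration holds whenever removing tiles only removes interactions --- which is true here since $g(\sigma,0)=0$'' does not hold. Removing a tile replaces its (possibly attractive) contribution to each neighbour's strength sum with zero, which can drop a neighbour's total below the temperature $\tau = 1$; so $g$-validity is not in general inherited by subconfigurations, and you cannot derive $g$-validity of $A_t'$ from $g$-validity of $A_\omega'$. Fortunately you do not need this route: your earlier interaction-preservation argument already gives $g$-validity of each $A_t'$ from the corresponding $A_t$, and you should rely on that alone. The ``has $A_\omega'$ as a subconfiguration'' clause in the lemma's conclusion is not there to license a subconfiguration-inheritance argument; it merely allows that $\Psi'$ may be able to continue adding tiles past time $\omega$ in the case where $\Psi$ terminated prematurely and the new geometry no longer blocks the next tile of $Q$.
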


\begin{proof} We show that some selection of $A_t'$ for $\Psi'$ meets the definition of a complete trajectory. First, by setting $A_0 = A_0'$, starting configuration consistency is established. Configurations in the trajectory $\Psi$ can be decomposed as $A_t = A_0 + A_{\epsilon,t} + A_{\epsilon',t} + A_{c,t}$, where $A_{x,t}$ is a subconfiguration of $A_{x}$ (present at time t). We require that it is possible to construct a trajectory $\Psi'$ complete with respect to $T$, such that each entry $A_{t}' = A_0 +  A_{\epsilon \rightarrow \epsilon' ,t} +
A_{\epsilon' \rightarrow \epsilon,t} + A_{c,t}$ up to $t = \zeta$ is g-valid. This condition is always true at $t = 0$, and tile addition at time $t+1$ (following the tile type sequence $Q$) can always result in a g-valid
$A_{t+1}' = A_0 +  A_{\epsilon \rightarrow \epsilon' ,t+1} +
A_{\epsilon' \rightarrow \epsilon,t+1} + A_{c,t+1}$ up to $t+1 = \zeta$. Since the order of tiles added also matches the sequence $Q$, the condition for complete trajectories of sequenced aTAM is obeyed, and the theorem is thus true by induction.
\end{proof}

In essence, we have made the general argument that if the left hand configuration in Fig.~\ref{fig:exchange} is complete with respect to $T$, and swapping over the yellow and blue sub-configurations does not result in a clash, then the right hand configuration can also be formed under the dynamics of $T$.

Some additional definitions will be useful at this point as we leverage Lemma~\ref{theorem:the_beginning} to show that deterministic growth of certain shapes with $M$ protrusions from an initial subshape requires at least $\frac{M}{4}$ unique tiles. Let an \textbf{open face} be a face of some non-empty tile in a configuration that neighbours an empty face, and glue types belonging to some open face are similarly known as open glue types. Let $\mathbf{E(A)}$ return the oriented shape of a configuration.

\begin{theorem}
\label{theorem:infinite_interactions}
Consider an oriented shape $S = \bigcup_{i = 0}^M S_i$ such that: 
\begin{enumerate}
     \item For any $i,j \in {1,..,M}$, $S_i \not\simeq S_j$ for any $i \neq j$. That is, each $S_i$ is distinct including rotations and translations. \label{step:no_equivalence}
    \item For any $i,j \in {1,..,M}$, $S_i$ has no point that neighbours any point in any other $S_j$ where $ i \neq j$. \label{step:no_neighbour}
    \item For any $i \in {1,..,M}$, $S_i$ has exactly one coordinate that neighbours a coordinate in $S_0$, and $S_0$ has exactly one coordinate that neighbours this coordinate. We denote by $\dot{z}_i$ the coordinate in $S_i$ and by $\ddot{z}_i$ the neighbouring coordinate in $S_0$ (neighbouring faces are similarly labeled ($\dot{z}_i$, $\dot{k}_i$) and ($\ddot{z}_i$, $\ddot{k}_i$)). 
    \label{step:inter_connection}
    \item $S$ and $S_0$ are not rotationally symmetric. Additionally, there does not exist $S_t$, a subshape of $S$, such that $S_t \neq S_0$ but $S_t \simeq S_0$. \label{step:oriented_det}
\end{enumerate}
Consider a sequenced aTAM instance $\mathcal{P} = (A_0, Q, g)$ such that $E(A_0) = S_0$ and \textbf{the range of $\mathbf{g}$ is restricted to $\mathbf{g \geq 0}$}. If assembly is deterministic and the terminal configuration $A_{\zeta}$ is such that $E(A_{\zeta}) = S$ as defined above, then the number of unique tile types required in $A_0$ grows with  at least $\frac{M}{4}$.
\end{theorem}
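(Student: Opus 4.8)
The plan is to prove that the $M$ seed faces $(\ddot z_i,\ddot k_i)$ through which the protrusions $S_1,\dots,S_M$ attach must carry pairwise distinct glue types. Once this is established the bound is immediate: these $M$ faces all lie on tiles of $A_0$, a tile type has only four faces and so displays at most four distinct glue types, hence $A_0$ must contain at least $\lceil M/4\rceil$ distinct tile types.

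First I would record a localisation step. Because $g\ge 0$ and $\tau=1$, every tile added after the seed must form an attractive bond with an already-present neighbour. By condition~\ref{step:no_neighbour} the regions $S_i$ are mutually non-adjacent, and by condition~\ref{step:inter_connection} the only cell of $S_i$ adjacent to $S_0$ is $\dot z_i$, itself adjacent to $\ddot z_i$. Hence, in the deterministic trajectory $\Psi$ assembling $S$, the first tile ever placed inside region $S_i$ can only occupy $\dot z_i$, bonded attractively to the seed tile at $\ddot z_i$; every later tile on $S_i$ bonds to an already-placed tile of $S_i$; so the sub-configuration $A_{(i)}$ that $\Psi$ builds on $S_i$ is connected, rooted at $\dot z_i$, uniquely adjacent to $A_0$ through the face-pair $\big((\dot z_i,\dot k_i),(\ddot z_i,\ddot k_i)\big)$, with an attractive bond across it. This is exactly the setting of Lemma~\ref{theorem:the_beginning}.

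Now suppose, for contradiction, that $A_0(\ddot z_i,\ddot k_i)=A_0(\ddot z_j,\ddot k_j)$ for some $i\ne j$, and apply Lemma~\ref{theorem:the_beginning} with $A_\epsilon=A_{(i)}$, $A_{\epsilon'}=A_{(j)}$ and $A_c$ the sub-configuration $\Psi$ builds on $\bigcup_{l\ne 0,i,j}S_l$. The orientation-preserving isometry $\phi_{ij}$ sending the directed face $(\dot z_i,\dot k_i)$ to $(\dot z_j,\dot k_j)$ carries $\ddot z_i$ to $\ddot z_j$ and preserves all internal bonds of $A_{(i)}$, while the assumed equality of anchor glues makes the re-anchored bond attractive; thus the Lemma's hypotheses on affine images hold, and it only remains to check that $A_\omega':=A_0+A_{\epsilon\to\epsilon'}+A_{\epsilon'\to\epsilon}+A_c$ is $g$-valid, which (since $g\ge 0$, so that all internal bonds survive, the re-anchored bonds are attractive, and any new adjacency only adds nonnegative strength) amounts to $\phi_{ij}(S_i)$, $\phi_{ji}(S_j)$, $S_0$ and the remaining $S_l$ being pairwise disjoint. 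Granting this, the Lemma produces a complete trajectory of the same instance; since it uses as many tiles as $\Psi$, its terminal oriented shape is exactly $S':=S_0\cup\phi_{ij}(S_i)\cup\phi_{ji}(S_j)\cup\bigcup_{l\ne 0,i,j}S_l$. If $S'\not\subseteq S$ then $S'\ne S$ at once; otherwise $\phi_{ij}(S_i)$ is connected, contained in $S$, and disjoint from $S_0$ and from all $S_l$ with $l\ne i,j$, so it sits inside a single component of $S\setminus S_0$, namely $S_j$ (it contains $\dot z_j$), and symmetrically $\phi_{ji}(S_j)\subseteq S_i$, whence comparing sizes forces $\phi_{ij}(S_i)=S_j$ and $S_i\simeq S_j$, contradicting condition~\ref{step:no_equivalence}; so this case is impossible and again $S'\ne S$. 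Finally, condition~\ref{step:oriented_det} ($S$ and $S_0$ not rotationally symmetric, and no proper subshape of $S$ equivalent to $S_0$) forces any isometry mapping a representative of $S'$ onto $S$ to fix $S_0$ and hence to be the identity, so $S'\ne S$ upgrades to $S'\not\simeq S$. The instance then assembles two distinct shapes, contradicting determinism. Hence all $M$ anchor glues are distinct, and the counting in the first paragraph completes the proof.

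The step I expect to fight with is the $g$-validity of $A_\omega'$: guaranteeing that exchanging two protrusions never forces an overlap with the seed, with another protrusion, or between the two relocated copies. Here conditions~\ref{step:no_neighbour}--\ref{step:inter_connection} earn their keep, since each $S_i$ touches $S_0$ in a single cell and is otherwise ``in the open'' relative to $S_0$ and to the other protrusions; the easiest sub-case is that of two anchors pointing in the same direction, where $\phi_{ij}$ is a pure translation and the relocated protrusions are literal translates of the originals, and I would handle the general ($90^\circ$ or $180^\circ$) re-anchoring by the same geometric bookkeeping. Should collision-freeness fail in full generality, the fallback is the case where a relocated tile cannot be placed legally at all: the resulting prematurely-terminated complete trajectory has strictly fewer than $|S|$ tiles and hence a shape distinct from $S$, again violating determinism -- but one then has to rule out every alternative legal placement of that tile, which once more leans on the geometric separation imposed by conditions~\ref{step:no_neighbour}--\ref{step:inter_connection}.
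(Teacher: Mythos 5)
Your proof takes essentially the same route as the paper's: suppose two anchor faces $(\ddot z_i,\ddot k_i)$ and $(\ddot z_j,\ddot k_j)$ share a glue, invoke Lemma~\ref{theorem:the_beginning} to exchange the two protrusions grown over them, and derive a contradiction either with condition~\ref{step:no_equivalence} (if the exchange is collision-free and its terminal oriented shape is $S$) or with oriented determinism (if the exchange is collision-free but produces a different oriented shape). Your unpacking of the collision-free branch---the split into $S'\not\subseteq S$ and $S'\subseteq S$, followed by the cardinality argument forcing $\phi_{ij}(S_i)=S_j$ and hence $S_i\simeq S_j$---is more explicit than the paper's one-line jump to ``$S_i\simeq S_j$,'' and your use of condition~\ref{step:oriented_det} to promote $S'\ne S$ to a determinism violation matches the paper's opening remark that condition~\ref{step:oriented_det} plus determinism yields oriented determinism.

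The case you flag as the sticking point---when $A_\omega'$ fails $g$-validity because the relocated protrusions collide---is exactly where the paper is tersest: it dispatches this in one sentence, asserting that an overlap ``results in some point of $S_i$ (or $S_j$) neighbouring some additional point in $S$,'' without explaining why a clash in the hypothetical $A_\omega'$ constrains the actual target $S$. Your premature-termination fallback is a detour; the clean way to close the hole (for both your write-up and the paper's) is a ``stop at the first escape'' version of the swap. Run the exchange tile-by-tile and consider the first moment a transplanted tile of $\phi_{ij}(S_i)$ leaves $S_j$ (or symmetrically for $\phi_{ji}(S_j)$ and $S_i$). Conditions~\ref{step:no_neighbour}--\ref{step:inter_connection} insulate $S_j$ from every other $S_l$, and $\phi_{ij}$ carries $\ddot z_i\notin S_i$ to $\ddot z_j$, so the escaping tile cannot be at $\ddot z_j$; hence it lands strictly outside $S$, while everything placed before it lies in $S_0\cup S_i\cup S_j\cup\bigcup_{l\ne i,j}S_l$ and so no overlap has yet occurred. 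That $g$-valid partial configuration has a coordinate outside $S$, contradicting oriented determinism without ever reaching $A_\omega'$; and if no transplanted tile ever escapes, $\phi_{ij}(S_i)\subseteq S_j$ and $\phi_{ji}(S_j)\subseteq S_i$, which by cardinality again forces $S_i\simeq S_j$. You were right to distrust a hand-wave here: this bookkeeping is needed, it is missing from your draft, and the paper's published proof does not supply it either.
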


\begin{proof} 
Condition \ref{step:oriented_det} and determinism imply that $\mathcal{P}$ is oriented-deterministic. Assume $A_0$ has fewer than M distinct open glue types, so some $S_i$ and $S_j$ must be anchored at glues of the same type. Then, Theorem \ref{theorem:the_beginning} implies one of the following must be true:
\begin{enumerate}
\item $S_i \simeq S_j$.
\item An overlap of configurations occurs when transforming the configurations neighbouring the two faces $(\ddot{z}_i, \ddot{k}_i)$ and $(\ddot{z}_j, \ddot{k}_j)$ (in the language of Theorem \ref{theorem:the_beginning}, $A_{\zeta}' = \infty$ and is thus not $g$-valid).
\end{enumerate}

Each of these possibilities leads to a contradiction. The first possibility directly contradicts condition \ref{step:no_equivalence}. The second possibility results in some point of $S_i$ (or $S_j$) neighbouring some additional point in $S$ outside of $S_i$ (or $S_j$), contradicting condition \ref{step:no_neighbour} or \ref{step:inter_connection}. Hence, by contradiction, $A_0$ must have $M$ or more open faces. As each tile type has at most $4$ unique faces, the number of unique tile types in $A_0$ must grow with at least $\frac{M}{4}$. 
\end{proof}

The bolded restriction on $g$ in Theorem \ref{theorem:infinite_interactions} excludes the possibility of repulsive interactions. Thus we do not have to consider situations where a tile is blocked by some repulsive interaction, rather than overlapping. We have chosen to include this simplified form of the theorem in this text as it provides a better intuition for our proof. However, with a few additional assumptions, we show how this restriction can be lifted in Theorem \ref{theorem:InfiniteInteractions}, included in Appendix A, and the remainder of our results are consistent with this more general form.

We now construct a class of target shapes such that shapes obeying the assumptions in Theorem \ref{theorem:infinite_interactions} (and more strongly, those in Theorem \ref{theorem:InfiniteInteractions}) cannot be avoided in the assembly of these targets. To do so, we define a \textbf{branching point} as any coordinate of some oriented shape $S$ with three or more neighbours in $S$, while a \textbf{corner} is a coordinate in $S$ with two neighbours in $S$, such that the corner and its two neighbours do not form a straight line. Then, let a 
\textbf{straight line segment} be a line starting at some branching point or corner, and ending at the next branching point/corner. The \textbf{distance} between two points $(x_1,y_1)$ and $(x_2,y_2)$ is taken to be $|x_2-x_1|+|y_2-y_1|$.
We can now begin describing the fundamental components of our constructed target shape.

\begin{figure}
    \centering
    \includegraphics[width=\textwidth]{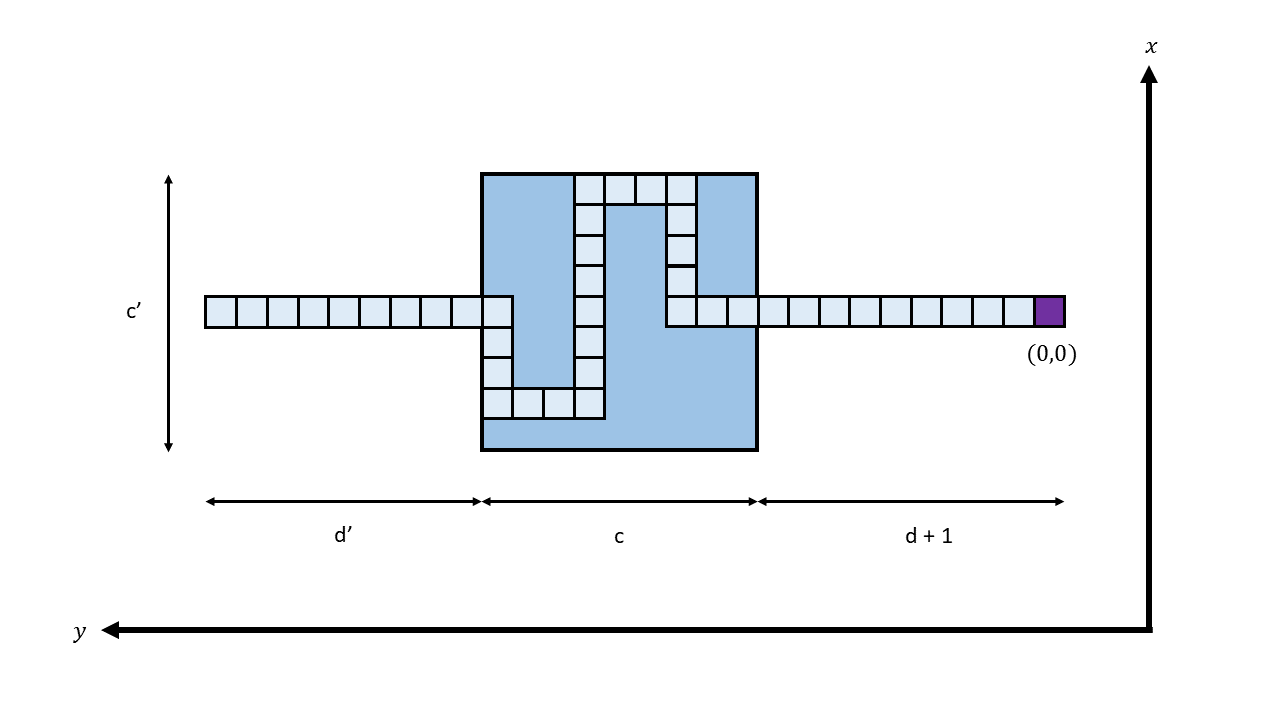}
    \caption{An oriented shape $W$ belonging to the set $\mathbb{W}_{9,9,9,9}$. The \textbf{start point} of the oriented shape is given in purple. The dark blue shaded box is the \textbf{canvas} of $W$, the region of $W$ containing an arbitrary self-avoiding walk.}
    \label{fig:straight_tipped_walk}
\end{figure}

\begin{definition}
\label{def:straightwalk}
An oriented shape $W$ is in the set $\mathbb{W}_{d,c,c',d'}$ for arbitrary positive integers $d$,$c$ and $d'$ and odd positive integer $c'$ if:
\begin{enumerate}
\item It is a 2-Dimensional self-avoiding walk constrained to a length of $c'$ in one dimension and a length of $d + c + d'$ in the other dimension. Irrespective of the true orientation of the walk in the 2D plane, $c'$ is called the width of the oriented shape (with corresponding dimension called the width dimension) and $d + c + d'$ is called the height of the shape (with corresponding dimension called the height dimension).
\item The first $d+1$ and last $d'$ coordinates (along the height dimension) form straight lines that, if extended, would cut through centre of the central $c \times c'$ rectangle, called the canvas.
\item The canvas contains some self-avoiding walk that doesn't leave the canvas and connects the two straight lines.
\item Any point in $W$ is of at least distance 3 from any other point in $W$, except for the 2 points immediately preceding it and the 2 points immediately succeeding it along the walk. 
\item Any straight line segments in the walk $W$ must be of at least length 4. 

\end{enumerate}
Refer to figure \ref{fig:straight_tipped_walk} for an example of an oriented shape in $\mathbb{W}_{d,c,c',d'}$.
\end{definition}

Note that features 3 and 4 are not necessary for Theorem \ref{theorem:infinite_interactions}, but are necessary for the target shape we are constructing to obey the assumptions of Theorem \ref{theorem:InfiniteInteractions}. We use these oriented shapes to build larger oriented shapes with the aim of creating a class of shapes that fulfill the assumptions laid out in Theorem \ref{theorem:infinite_interactions} (as well as Theorem \ref{theorem:InfiniteInteractions}). We call these larger oriented shapes treeangles, and we define them as follows.

\begin{figure}
    \centering
    \includegraphics[width=\textwidth]{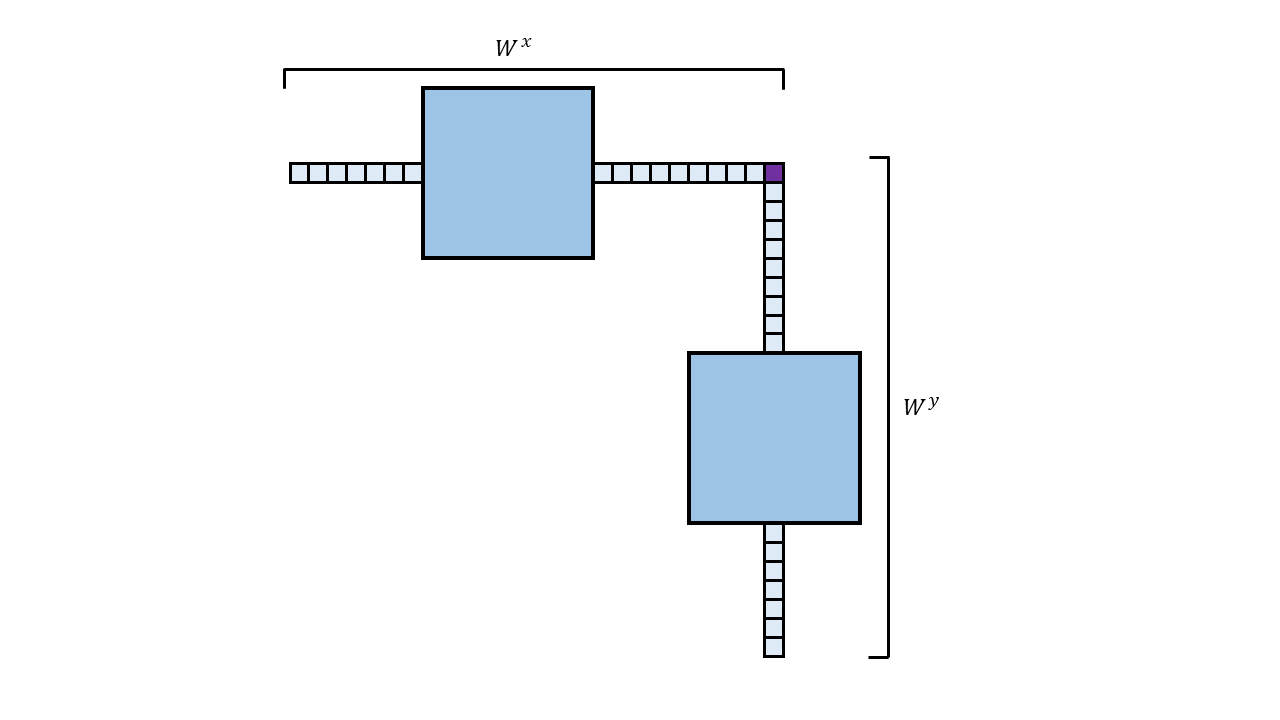}
    \caption{An oriented shape $R$ belonging to the set $\mathcal{R}^0_9$, consisting of $W^x$ a horizontally aligned member of $\mathbb{W}_{9,9,9,7}$ joined with $W^y$ a vertically aligned member of $\mathbb{W}_{9,9,9,7}$. Shaded blue boxes represent canvas regions of constituent walks. Each of these boxes are constrained to contain distinct self-avoiding walks. }
    \label{fig:R0L}
\end{figure}

\begin{definition}
A $treeangle$ of order $V$ with fundamental length $L$ is an oriented shape $S$ that can be constructed as follows.
\begin{enumerate}
    \item An oriented shape in the set $\mathcal{R}^0_L$ can be constructed by placing a horizontal oriented shape (height dimension along $x$) $W^x \in \mathbb{W}_{L,L,L,L-2}$ and a vertical oriented shape (height dimension along $y$) $W^y \in \mathbb{W}_{L,L,L,L-2}$ next to each other, with the two connected at their start points (Figure \ref{fig:R0L}).
    \item Given the Definition for $\mathcal{R}^0_L$, $\mathcal{R}^1_L$ can be constructed by connecting a horizontally aligned $W^x \in \mathbb{W}_{L,L,L,L}$ with a vertically aligned $W^y \in \mathbb{W}_{L,L,L,L}$, and then connecting two newly sampled oriented shapes from $\mathcal{R}^0_L$ at the endpoints of $W^x$ and $W^y$  (Figure \ref{fig:R12L}).
    \item For higher orders $v$, $\mathcal{R}^v_L$ is obtained in a similar way from 
    $\mathcal{R}^{v-1}_L$. However, to ensure that the oriented shapes fit in the 2D lattice, $W^y \in \mathbb{W}_{2^{v-1}L,2^{v-1}L+2^{v-1}-1,L,2^{v-1}L}$  (Figure \ref{fig:R12L}).
    \item To construct a treeangle $S$, sample two oriented shapes $R_V$ and $\hat{R}_V$ from $\mathcal{R}^V_L$. $\hat{R}_V$ is horizontally flipped and translated, forming $\tilde{R}_V$ and $I \in \mathbb{W}_{2^{v-1}L,2^{v-1}L+2^{v-1}-1,L,2^{v-1}L}$ is sampled to connect $R_V$ and $\tilde{R}_V$ (Figure \ref{fig:treeangle}).   
    \item During construction, every walk sampled from some $\mathbb{W}_{d,c,c',d'}$ is sampled without replacement. That is, each section of the treeangle (defined as any contiguous path between two branching points) has a unique walk. 
    \item A shape $\overline{S}$ consisting of rotations and translations of some (oriented) treeangle $S$ is also known as a treeangle. The order and fundamental length of $\overline{S}$ are equal to the order and fundamental length of $S$.
\end{enumerate}
\label{def:treeangle}
\end{definition}

\begin{figure}
    \centering
    \includegraphics[width=\textwidth]{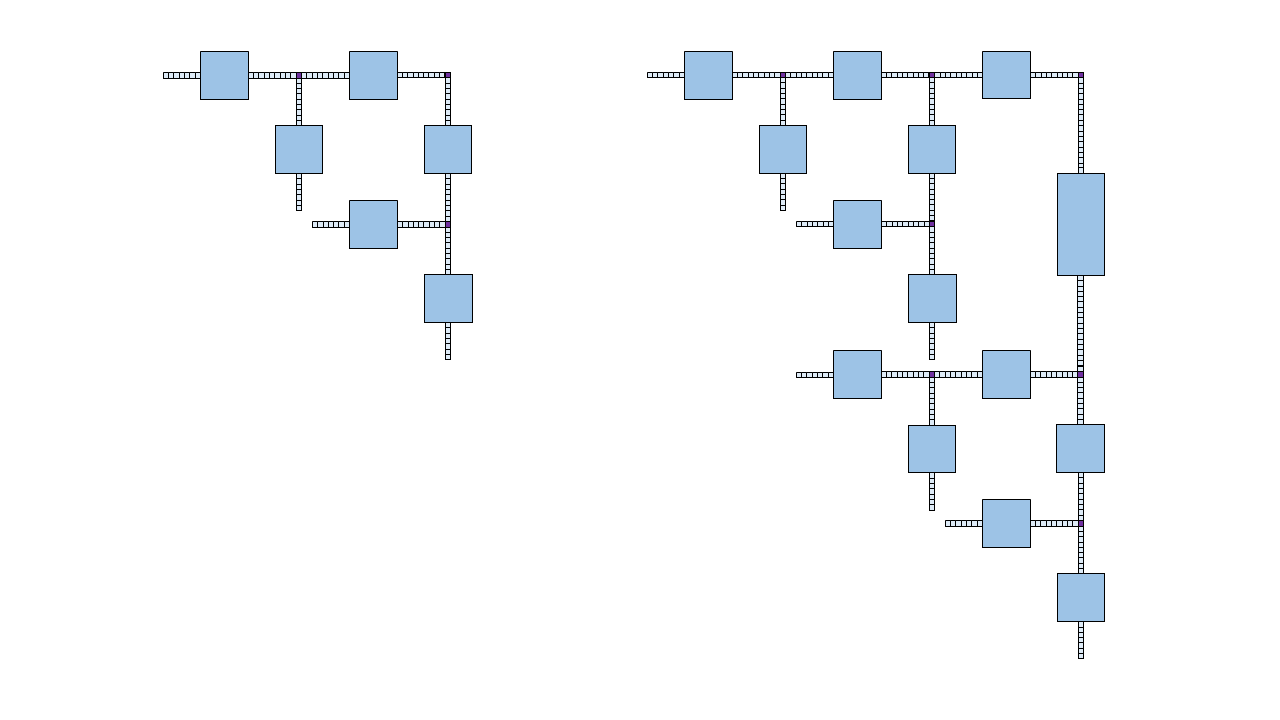}
    \caption{ Oriented shapes $R_1$ (left) and $R_2$ (right) belonging to $\mathcal{R}^1_9$ and $\mathcal{R}^2_9$ respectively. Oriented shapes in $\mathcal{R}^2_9$ can be obtained starting from two oriented shapes in $\mathcal{R}^1_9$ by joining them via a horizontal walk $W^x \in \mathbb{W}_{9,9,9,9}$ and a vertical walk $W^y \in \mathbb{W}_{18,19,9,18}$. Shaded blue boxes represent canvas regions containing distinct self-avoiding walks. }
    \label{fig:R12L}
\end{figure}

\begin{figure}
    \centering
    \includegraphics[width=\textwidth]{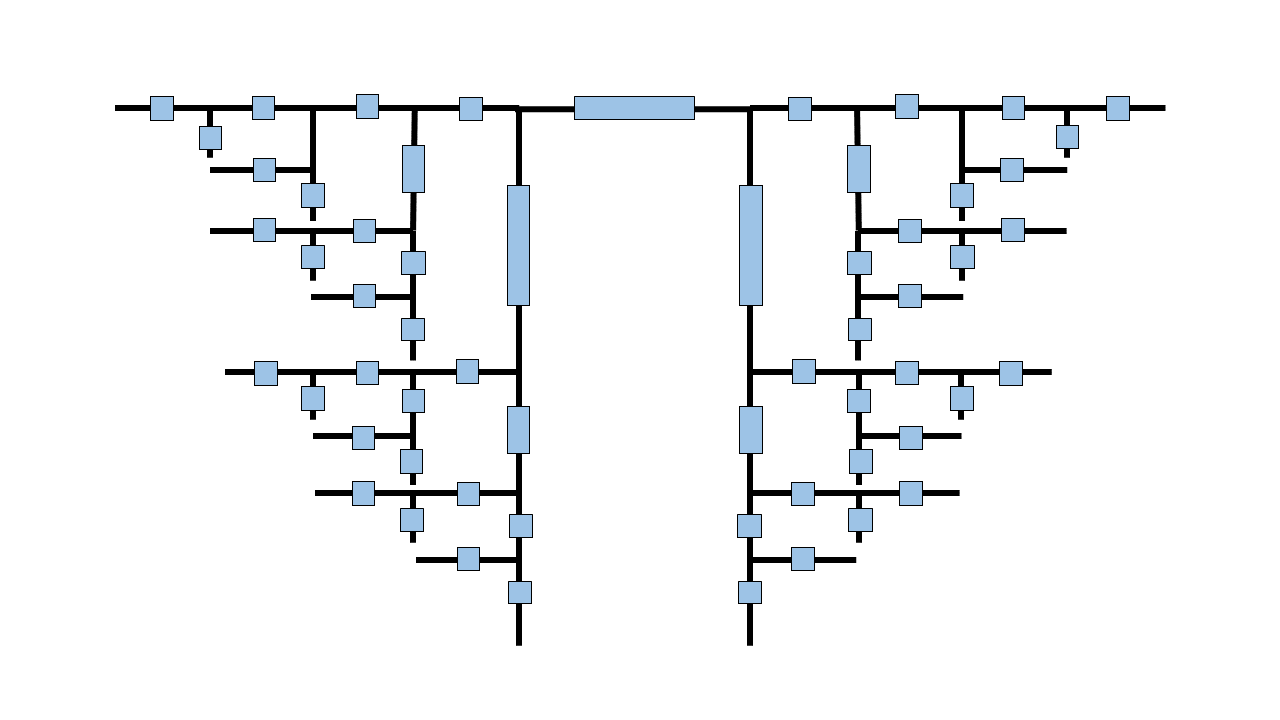}
    \caption{ An abstract schematic of a treeangle of order 2. Black lines represent straight segments of tiles, while blue rectangles represent canvases, each containing distinct self-avoiding walks. }
    \label{fig:treeangle}
\end{figure}

Using the treeangle, we can now set out to prove the inexistence of a universal assembly kit for the sequenced aTAM.  Before doing so, we lay out a few final definitions. Let $\mathbf{G_S(S)}$ return a graph whose vertices represent coordinates of $S$, and where edges are drawn between neighbouring coordinates. Let a \textbf{terminal branching point} be a branching point $v$ such that a path $P$ on the graph $G_S(S)$ can be drawn from some leaf node (a node with only one edge in a graph) to $v$ such that $P$ contains no other branching point (that is, it a branching point where at least one of the paths from it to an end of the branch includes no further branching points). Consider an oriented shape $S' \subset S$. An \textbf{external neighbour} of $S'$ is a coordinate in $S$ but not in $S'$ that neighbours a coordinate in $S'$. We may now proceed to our final set of proofs in which we finally prove that the sequenced aTAM requires an unbounded number of tile types to assemble arbitrary treeangles, and thus cannot admit a universal assembly kit.


\begin{lemma}
Consider a sequenced aTAM instance $\mathcal{P} = (A_{empty}, Q, g)$ that deterministically assembles a treeangle $\overline{S}$ of order $V$. Then, the number of unique tile types required in $Q$ grows with at least $\frac{V}{4}$. 
\label{theorem:scale}
\end{lemma}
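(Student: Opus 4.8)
The plan is to reduce Lemma~\ref{theorem:scale} to Theorem~\ref{theorem:infinite_interactions} (and its repulsive-strength counterpart Theorem~\ref{theorem:InfiniteInteractions}) by \emph{extracting a non-empty sub-instance} from a complete trajectory of the given assembly. Suppose $\mathcal{P} = (A_{\rm empty}, Q, g)$ deterministically assembles a treeangle $\overline{S}$ of order $V$; fix an oriented representative $S$ and any complete trajectory $\Psi = (A_{\rm empty}, A_1, \dots, A_\omega)$ with $E(A_\omega) = S$. I will pick a time $t$ such that, writing $S_0 := E(A_t)$, the connected components $S_1,\dots,S_M$ of $S\setminus S_0$ satisfy the four hypotheses of Theorem~\ref{theorem:infinite_interactions} with $M\geq V$. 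The tail $(A_t, A_{t+1},\dots,A_\omega)$ of $\Psi$ is a complete trajectory of the instance $\mathcal{P}' = (A_t, Q', g)$, where $Q'$ is the suffix of $Q$ starting at step $t{+}1$; conversely, any complete trajectory of $\mathcal{P}'$ prepended with $(A_{\rm empty},\dots,A_t)$ is a complete trajectory of $\mathcal{P}$, so determinism of $\mathcal{P}$ forces $\mathcal{P}'$ to be deterministic, with $E(A_t)=S_0$ and $E(A_\omega)=S$ witnessed by the tail of $\Psi$. Theorem~\ref{theorem:infinite_interactions} (in the regime $g\geq 0$; Theorem~\ref{theorem:InfiniteInteractions} covers general $g$ under the extra assumptions that features 3--4 of Definition~\ref{def:straightwalk} and Definition~\ref{def:treeangle} are designed to meet) then gives at least $M/4\geq V/4$ distinct tile types in $A_t$, and every tile type used in $A_t$ occurs in $Q$; this is the lemma.

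It remains to locate $t$ and check the hypotheses. First, by $g$-validity every tile after the seed must realise total interaction strength at least $\tau=1$, so for $g\geq 0$ each $A_t$ is \emph{connected}; hence $E(A_0)\subseteq E(A_1)\subseteq\dots\subseteq E(A_\omega)$ is a chain of connected subshapes of $S$, each obtained from its predecessor by adding one lattice coordinate. By feature~4 of Definition~\ref{def:straightwalk} the only lattice adjacencies in $\overline S$ are between consecutive walk coordinates, so the graph $G_S(\overline S)$ is a tree whose long ``edges'' are the walk sections and whose branching points are the corner junctions of the nested $\mathcal{R}^v_L$ pieces; by the recursive doubling in Definition~\ref{def:treeangle} this tree carries a balanced binary branching structure of depth $V$ (a route from the central connector down to a leaf passes the junctions of $\mathcal{R}^V_L,\mathcal{R}^{V-1}_L,\dots,\mathcal{R}^1_L$ in turn, and at each of these $V$ junctions an entire sibling sub-treeangle branches off). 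The combinatorial core of the proof is a node-search-number lower bound: a connected, monotonically growing blob on such a tree must at some step $t$ leave a complement $S\setminus E(A_t)$ with at least $V$ connected components. Informally, the blob is eventually forced to contain a full route from the seed to a deepest $\mathcal{R}^0_L$ leaf, and at the first step this happens every sibling branch hanging off that route is a separate unbuilt component; because a connected blob cannot ``peel off'' leaves one at a time (that would need disconnected intermediate configurations), there is no admissible growth order that keeps the complement connected while exhausting a depth-$V$ balanced tree.

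Because $E(A_t)$ is connected and $G_S(\overline S)$ is a tree, each component $S_i$ of $S\setminus S_0$ is a pendant subtree attached to $S_0$ through exactly one tree edge, hence through exactly one lattice face-pair (feature~4 again), so condition~\ref{step:inter_connection} holds, and distinct components cannot be lattice-adjacent, giving condition~\ref{step:no_neighbour}. Since every walk section of a treeangle carries its own self-avoiding walk, sampled without replacement (feature~5 of Definition~\ref{def:treeangle}), any two distinct sub-regions of $\overline S$ are inequivalent as shapes, which gives condition~\ref{step:no_equivalence} for $S_1,\dots,S_M$ and also rules out any rotational symmetry of $S$ or of $S_0$ and any $S_t\subset S$ with $S_t\neq S_0$ but $S_t\simeq S_0$ (the canvassed distinct walks inside $S_0$ pin its location), giving condition~\ref{step:oriented_det}. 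With $M\geq V$ the reduction in the first paragraph now goes through.

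I expect the main obstacle to be the combinatorial step: proving rigorously that \emph{every} admissible (connected, sequence-respecting) growth order of a treeangle of order $V$ is forced to open at least $V$ simultaneous pendant components, for an arbitrary seed position and allowing for growth that stalls partway along a walk section. The recursive, balanced, long-walk design of Definition~\ref{def:treeangle} — in particular the doubling of sub-treeangles at each order together with the distance-$\geq 3$ and length-$\geq 4$ straight-segment constraints of Definition~\ref{def:straightwalk} — is precisely what should make this fragmentation unavoidable while keeping each fragment distinct, non-adjacent and singly attached; converting that design into a clean lower bound of $V$ on the number of complement components (and confirming that the regime with repulsive $g$ inherits connectivity of the intermediates under the extra assumptions of Theorem~\ref{theorem:InfiniteInteractions}) is the delicate part of the argument.
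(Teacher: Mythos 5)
Your overall reduction is the same as the paper's: pick a time $t$ in some complete trajectory, set $S_0=E(A_t)$, observe that $(A_t,Q',g)$ with $Q'$ the suffix of $Q$ is a deterministic sequenced aTAM instance producing the same treeangle, verify the hypotheses of Theorem~\ref{theorem:infinite_interactions}/Theorem~\ref{theorem:InfiniteInteractions} for the pendant components $S_1,\dots,S_M$ of $S\setminus S_0$, and conclude $\geq M/4 \geq V/4$ tile types. That is exactly what the paper does, including the observation that connectivity of $E(A_t)$ plus the tree structure of $G_S(S)$ guarantees each $S_i$ is a pendant subtree attached through a single face-pair.

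The one place your sketch diverges from, and is weaker than, the paper's argument is the step you yourself flag as delicate. You propose to take $t$ to be ``the first step the blob contains a full route from the seed to a deepest $\mathcal{R}^0_L$ leaf'' and claim that at that step ``every sibling branch hanging off that route is a separate unbuilt component.'' That claim is false for an arbitrary admissible order: the trajectory could consume many (or all) of those sibling branches before the route to the chosen leaf is completed, in which case they are not unbuilt at time $t$. The paper avoids this by choosing $t$ differently: it first applies a WLOG on which side ($R_V$ versus $\tilde R_V$) the seed lies, and then takes $t$ to be the step at which the \emph{first terminal branching point on the far side} is incorporated. At that moment the blob contains a path from the seed through $\geq V$ branching junctions of the far copy, and crucially contains no other terminal branching point of that copy; since every sibling subtree hanging off the path contains such a terminal branching point, none of those siblings can yet be fully absorbed, so each one necessarily contributes at least one pendant component of $S\setminus E(A_t)$. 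The paper also explicitly notes that extending the path $P$ to the full $S'=E(A_t)$ can only increase, never decrease, the count of external neighbours so long as no terminal branching point of the far copy is included. You would need to replace your ``first route to a deepest leaf'' invariant by this ``first far-side terminal branching point'' invariant (together with the WLOG on seed side and the persistence argument) for the combinatorial step to actually close; as written, it does not. Your related remark about ``node-search number'' and ``keeping the complement connected'' is also not the mechanism used — nothing in the proof requires or tracks connectivity of the complement.
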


\begin{proof} For simplicity, consider a specific oriented shape $S \in \overline{S}$ and consider the graph $G_S(S)$. Let $R_V$, $I$ and $\tilde{R}_V$ be defined as in Definition \ref{def:treeangle}. Then, the subshape defined by any path $P$ from $R_V$ or $I$ to a terminal branching point of $\tilde{R}_V$, excluding the terminal branching point, must have at least $V$ external neighbours. Consider then a shape $S'$ with $P$ as a subshape, such that $S'$ does not contain any terminal branching points of $R_V$. Without loss of generality, starting assembly at any coordinate of $I$ or $R_V'$, any complete trajectory $\Psi$ of $\mathcal{P}$ that assembles $S$ must admit an element $A'$ such that $E(A') = S'$ for some $S'$ (an analogous argument applies for trajectories starting at a coordinate of $R_V$), since there must exist some time point where the first terminal branching point is incorporated onto $S$. Furthermore, $S'$ can have no fewer than $V$ external neighbours, because the inclusion of further branching points of $R_V$ can only ever increase the number of external neighbours of $S'$, and since $S'$ does not contain a terminal branching point of $R_V$, then the number of external neighbours cannot be decreased by the addition of more coordinates onto $S'$. By the construction rules of the treeangle, a partition of $S$ as $S = \bigcup_{i = 0}^M S_i$ following Theorem \ref{theorem:infinite_interactions} (as well as Theorem \ref{theorem:InfiniteInteractions} in the Appendix) is then possible with $S_0 = S'$ and $M \geq V$, and the remaining rules are guaranteed by our treeangle construction. Letting $Q'$ be the subsequence of $Q$ after the assembly time in which $A'$ appears, the sequenced aTAM instance $(A',Q',g)$ requires at least $\frac{V}{4}$ unique tile types in $A'$ to assemble $\overline{S}$ deterministically. Since all non-empty tile types in $A'$ must appear in $Q$, $Q$ requires at least $\frac{V}{4}$ unique tile types, completing the proof. 
\end{proof}



\begin{theorem}
The sequenced aTAM does not admit a universal assembly kit.  
\label{theorem:existence}
\end{theorem}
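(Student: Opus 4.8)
The plan is to combine Lemma~\ref{theorem:scale} with the fact that treeangles of arbitrarily high order lie in the shape space of the sequenced aTAM, and then close by a pigeonhole argument. The skeleton is: (i) for every order $V$ there is a treeangle shape $\overline{S}_V$ that some sequenced aTAM instance assembles, so $\overline{S}_V$ belongs to the shape space; (ii) hence any purported universal assembly kit must deterministically assemble $\overline{S}_V$; (iii) by Lemma~\ref{theorem:scale} that forces at least $V/4$ distinct tile types, which is impossible for a fixed finite kit once $V$ is large.

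First I would verify step (i). A convenient route is a generic ``all-distinct-glues'' construction showing that every finite connected oriented shape is assembled deterministically by some sequenced aTAM instance. Given a shape with cells $z_1,\dots,z_n$, pick a spanning tree of its adjacency graph and order the cells so that each $z_i$ (for $i\ge 2$) has its tree-parent among $z_1,\dots,z_{i-1}$ (e.g.\ a breadth-first order). Give each tree edge its own attractive glue pair of strength $1$, put the neutral glue $0$ on every remaining face, let $\Theta_i$ be the tile type so obtained at $z_i$, and take $Q=(\Theta_1,\dots,\Theta_n)$ with this strength function $g$. At step $i$ the only already-placed tile carrying a glue that matches a glue of $\Theta_i$ is the parent of $z_i$ (the children of $z_i$ come later in a tree order), and that single matching glue pins down both the coordinate $z_i$ and the orientation of $\Theta_i$; no spurious bond and no alternative orientation is possible because the glues are pairwise distinct, and no repulsive interface arises since $g$ takes only the values $0$ and $1$. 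Hence $(A_{\rm empty},Q,g)$ assembles exactly the shape in question and nothing else, so instantiating this with a treeangle $\overline{S}_V$ places it in the shape space. I would also pause to confirm that Definition~\ref{def:treeangle} is non-vacuous for every $V$: allowing the fundamental length $L$ to grow with $V$, one checks that the recursively doubled canvas dimensions embed in $\mathbb{Z}^2$ without branches intersecting and that each $\mathbb{W}_{d,c,c',d'}$ contains enough distinct self-avoiding walks to honour the sampling-without-replacement clause. This is routine geometric bookkeeping rather than a genuine obstacle.

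Then I would run the contradiction. Suppose $(T,g)$ were a universal assembly kit for the sequenced aTAM with $|T|=n<\infty$. Since $\overline{S}_V$ lies in the shape space by step (i), the definition of a universal assembly kit supplies a sequence $Q_V$ of tile types drawn from $T$ such that $(A_{\rm empty},Q_V,g)$ deterministically assembles $\overline{S}_V$. Lemma~\ref{theorem:scale} (in the general form, which via Theorem~\ref{theorem:InfiniteInteractions} tolerates arbitrary $g$) then forces $Q_V$ to contain at least $V/4$ distinct tile types, all of which belong to $T$; so $n\ge V/4$. Choosing $V>4n$ gives a contradiction, and therefore no finite universal assembly kit exists.

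The main obstacle is not the final counting step, which is immediate, but making sure that arbitrarily large treeangles are honest members of the shape space: essentially all of the real difficulty has already been pushed into the treeangle definition and into Lemma~\ref{theorem:scale}, so the remaining care is (a) the embeddability and combinatorial feasibility of the treeangle family for every order, and (b) the verification that the generic distinct-glue instance assembles a given shape deterministically, with misorientations and accidental bonds ruled out by glue distinctness.
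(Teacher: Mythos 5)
Your proposal follows the paper's proof exactly: establish that treeangles of arbitrary order $V$ exist, then invoke Lemma~\ref{theorem:scale} (backed by Theorem~\ref{theorem:InfiniteInteractions}, which tolerates arbitrary $g$) to force any purported finite kit to fail on a sufficiently high-order treeangle. Your step~(i) — the distinct-glue spanning-tree construction placing every finite connected shape, hence every treeangle, in the sequenced aTAM's shape space — is a worthwhile clarification that the paper leaves implicit in its proof of Theorem~\ref{theorem:existence}, only supplying the corresponding claim (that the shape space coincides with the set of all polyominoes) in the later Section~\ref{sec:hamiltonian_sequenced}.
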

\begin{proof} We only need to show that given any positive integer $V$, some treeangle $\overline{S}$ of order $V$ exists. The number of walks in any $\mathbb{W}_{L,L,L,L}$ increases monotonically with $L$. Each $\mathbb{W}_{L,L,L,L}$ is non-empty as a simple straight line fulfilling length constraints will be in this set. Increasing $L$ monotonically increases the number of shapes in $\mathbb{W}_{L,L,L,L}$, as we can simply extend the walk in the canvas with another straight segment or some number of curved walks. Hence, there is always some $L$ that will provide sufficient walks to generate a treeangle $\overline{S}$ of arbitrary order $V$. This argument extends to other sets $\mathbb{W}_{d,c,c',d'}$. The proof then follows from Lemma \ref{theorem:scale}. 
\end{proof}

\subsection{Geometric Constraints, Shape Space Size and Kolmogorov Complexity}
\label{sec:Kolmogorov}

 We now frame our results in the context of the shape space and worst-case Kolmogorov complexity scaling of our considered assembly maps. Sequences of length $N$, drawing from an alphabet of size $M$, have $M^N$ unique variants, and hence can in principle deterministically encode up to $M^N$ shapes. Similarly, an arbitrary sequence of length $N$ requires $N \log{M}$ bits, and hence the ``program'' defined by a backboned or sequenced aTAM exceeds $N \log{M}$ in length, scaling as $O(N)$ if $M$ is fixed.

 The backboned aTAM's shape space is equal to the set of self-avoiding walks, which grows with shape size $N$ as $O(2.638^N)$ from \cite{Clisby2012}. Further, as self-avoiding walks can be encoded by left-right-forward moves, the scaling in worst-case Kolmogorov complexity is upper bounded by $O(N)$. Consistent with Theorems \ref{theorem:walk_main} and \ref{theorem:interacting_walk}, these scaling results do not preclude a universal assembly kit of size $M \geq 3$. Indeed, the backboned aTAM without restrictions gets very close to a universal assembly kit with $M=4$, perhaps unsurprisingly as it simply routes the underlying hamiltonian path.  Constraining the assembly mechanism so that all interactions must be attractive does not prevent the existence of a universal assembly kit, but our upper bound on the kit size --  over $200$ tiles -- suggests that the information in the sequence is used much less efficiently. Although the true minimal universal assembly kit size may be smaller than 200 tiles, it will likely share the same two sources of inefficiency: firstly, many sequences will form equivalent structures; and secondly, others will not deterministically form a well-defined structure. 
 
 It is interesting to note the interplay between these two sources of inefficiency. Imposing more demanding rules on the assembly mechanism forces more tiles to be used to specify the same set of backbone routings, in order to avoid non-deterministic or incomplete assemblies. In turn, this larger set of tiles leads to a redundancy, whereby many sequences deterministically assemble the same structure. 

 The shape space of the sequenced aTAM is equal to the set of free polyominoes (polyominoes equivalent under rotation), with a scaling upper bound given by $O(4.5252^N)$  \cite{Barequet2022}  (this upper bound is presented for fixed polyominoes, but note the numbers of fixed and free polyominoes are related by a constant factor in the large $N$ limit \cite{Redelmeier1981}). We now also argue that the scaling of the worst case Kolmogorov complexity of shapes in the shape space of sequenced aTAM is upper bounded by $O(N)$. 
 
 While to our knowledge this argument for the $O(N)$ scaling has not been explicitly made in literature, it follows trivially from Klarner's algorithm for polyomino enumeration \cite{Klarner1967}. A brief review of Klarner's algorithm is provided as Algorithm \ref{alg:Klarner}. Starting at the lowest tile of the leftmost column of a fixed polyomino $S$, Klarner's algorithm iteratively connects new tiles to neighbouring ``parent'' tiles, creating a queue of tiles as it goes. The tiles in the queue are successively tested for new neighbours. Moreover, each tile is labelled by going clockwise around the tile, starting from (but excluding) the given tile's parent.  The label is a 3-bit description, storing a value `$1$' if a tile's neighbor is present in $S$, and `$0$' otherwise, for each neighbour (Figure \ref{fig:Klarner}). 

\begin{algorithm}
\caption{Klarner's algorithm for uniquely labelling polyominoes. A $3N$-bit description of a given input polyomino of size $N$ is produced as output.} \label{alg:Klarner}
\SetKwFunction{GetNeigh}{getNeighbors}
\SetKwProg{Fn}{Function}{:}{}

\Fn{\GetNeigh{$S,\mathrm{current},\mathrm{parent},\&\mathrm{tileQueue},\&\mathrm{parentMap}$}}{
    triple $\leftarrow (0,0,0)$\;
    \For{$i \leftarrow 1$ \KwTo $3$}{
        neighbor $ \leftarrow$ neighbor $i$ of current, counting clockwise with parent at $i = 0$\;
        \If{$\mathrm{neighbor} \in S \mathrm{\;and\;} \mathrm{parentMap}(\mathrm{neighbor}) = \mathrm{empty}$}{
            triple$[i] \leftarrow 1$\;
            parentMap$($neighbor$) \leftarrow $current\;
            enqueue$($tileQueue,neighbor$)$
        }
    }
    \Return{$\mathrm{triple}$}\;
}

\KwIn{Fixed polyomino $S$}
$N \leftarrow$ size$(S)$\;
current $\leftarrow$ lowest tile position of the leftmost column of $S$\;
parent $\leftarrow$ position below current\;
tileQueue $\leftarrow \emptyset$\;
parentMap $\leftarrow \mathrm{Map}\{\mathrm{current'} \to \mathrm{parent'}\}$\;
result $\leftarrow $ array of $N$ triples $(0,0,0)$\;

\For{$i \leftarrow 1$ \KwTo $N$}{
    result$[i] \leftarrow$ \GetNeigh{$S,\mathrm{current},\mathrm{parent},\mathrm{tileQueue},\mathrm{parentMap}$}\;
    current $\leftarrow$ dequeue$($tileQueue$)$\;
    parent $\leftarrow$ parent\_map$($current$)$
}
\Return{$\mathrm{result}$}\;
\end{algorithm}

\begin{figure}
\centering
\includegraphics[width=\linewidth]{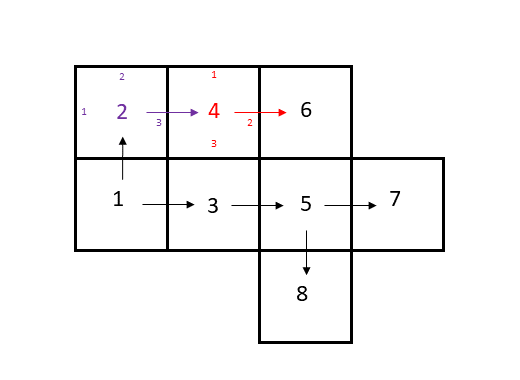}
\caption{  Illustration of Klarner's rules for enumerating squares in a polyomino. Labeling begins at the lowest square in the leftmost column. The next square is selected by considering neighbors of the lowest labeled square, and going clockwise from the face connecting the existing square to the original polyomino (or from the north facing face for the first square; examples in purple and red provided in the figure). A bit description for each tile is then written down, recording tiles which `grow' from that particular tile. For example, tile $2$ has the description $(0,0,1)$, while tile $4$ has the description $(0,1,0)$.}
\label{fig:Klarner}
\end{figure}

The output of Klarner's algorithm is therefore a series of bits of length $3N$ (note the first bit is omitted in Klarner's original argument, leading to $3N-1$ bits. For convenience, this bit exists in our implementation but is always `0'). Importantly, every polyomino is assigned a unique bit description, and there exists an inverse algorithm to recover a given fixed polyomino from Klarner's bit description, and the length of this program does not scale with $N$ (Algorithm \ref{alg:KlarnerInverse}). Hence, the scaling of the worst case Kolmogorov Complexity of a an algorithm for consturcting any given shape of size $N$ is upper bounded by $O(N)$ (note that adding equivalence between rotations at worst adds a constant to the program length). 

Once again, these bounds do not preclude a universal assembly kit, this time with $M \geq 5$. For some finite $M\geq 5$ there are more sequences of length $N$ than shapes of size $N$ in the shape space; for some finite $M$ there are more bits stored in a sequence of length $N$ than bits required to write a program that constructs any shape of size $N$ in the shape space. Despite these facts, however, as Theorem \ref{theorem:existence} shows, no universal assembly kit exists for the sequenced aTAM. The inefficiency of the assembly rules are so strong that, not only is the number of tiles far larger than $M=5$, no finite $M$ is sufficient at all.


Klarner's algorithm itself gives us some insight into the cause of this inefficiency. If one were able to use a queue, as in Klarner's algorithm, so that new tiles could only be added adjacent to the tile in a configuration that was at the top of a queue, it would be straightforward to identify a universal assembly kit. This queue-based geometric constraint would effectively be the generalization of the backbone to a process that allows for branching points, rather than requiring a single self-avoiding walk. However, implementing such a queuing system in a molecular asssembly process rather than {\it in silico} would be highly non-trivial.

\begin{algorithm}
\caption{Klarner's inverse algorithm for obtaining the coordinates of a given polyomino from its bit description.}\label{alg:KlarnerInverse}
\KwIn{array of bit triples tripleList}
$N \leftarrow$ size$($tripleList$)$\;
$S \leftarrow$ array$(\{(0,0)\})$\;
parentMap $\leftarrow \mathrm{Map}\{\mathrm{current'} \to \mathrm{parent'}\}$\;
parentMap$((0,0)) \leftarrow (0,-1)$\;
\For{$i \leftarrow 1$ \KwTo $N$}{
    current $\leftarrow  S[i]$ \;
    parent $\leftarrow $ parentMap$($current$)$ \;
    \For{$j \leftarrow 1$ \KwTo $3$}{
        \If{$\mathrm{tripleList}[i][j]$}{
            neighbor $ \leftarrow$ neighbor $j$ of current,  clockwise with parent at $j = 0$\;
            parentMap(neighbor) $\leftarrow$ current \;
            $S$.pushBack(neighbor)
        }
    }
}
\Return{$S$}\;
\end{algorithm}

\subsection{Applying the sequenced aTAM to the shape space of the backboned aTAM}
\label{sec:hamiltonian_sequenced}

Geometric restrictions in tile placement therefore appear to convey some {\it advantages} in allowing the assembly of a complex shape with a small number of tiles.  For a concrete example, in addition to the discussion in Section \ref{sec:Kolmogorov}, consider the configuration shown in Figure~\ref{fig:universal_failure}\,(a); the sequence of tiles shown and interaction function in Theorem \ref{theorem:interacting_walk} will deterministically produce the given shape with the backboned aTAM, starting from an empty configuration. By contrast, using the same model inputs $( A_{empty}, Q, g)$ but employing the placement rules of the sequenced aTAM, assembly is not deterministic. The shape in Figure~\ref{fig:universal_failure}\,(b) can form if the first blue tile binds to a face on the far side of the growing assembly.
 
\begin{figure}
\centering
\begin{subfigure}{0.45\textwidth}
\caption{}
\includegraphics[width=\linewidth]{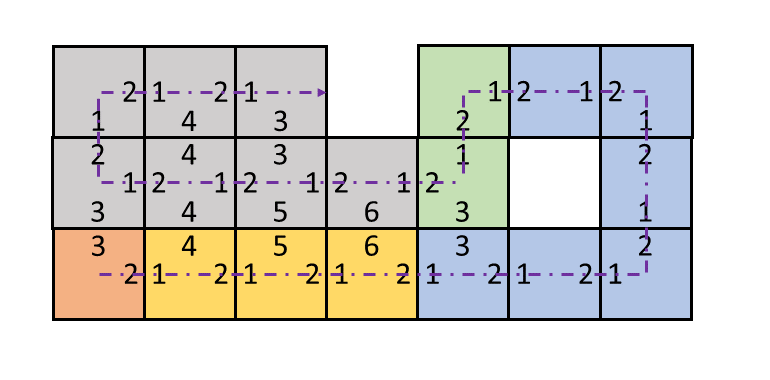}
\end{subfigure}
\begin{subfigure}{0.45\textwidth}
\caption{}
\includegraphics[width=\linewidth]{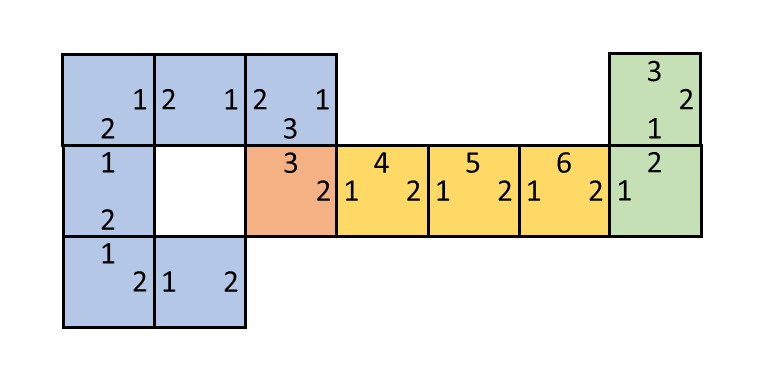}
\end{subfigure}
\caption{ Geometric constraints due to the backbone can improve determinism. \textbf{(a)} The sequence shown (starting with the orange tile) will deterministically produce the shape shown in the backboned aTAM, given an empty initial configuration and the interaction function in Equation~\ref{eq:g_2}. \textbf{(b)} The sequenced aTAM , using the same initial configuration, interaction function and sequence can produce both (a) and (b), and hence is not deterministic. }
\label{fig:universal_failure}
\end{figure}

 If the geometrical restrictions of the backbone appear to provide an advantage, it is natural to ask whether a finite assembly kit exists that allows the sequenced aTAM to deterministically assemble any shape in the shape space of the backboned aTAM. If we allow for non-attractive inter-tile interactions, identifying such a kit is trivial, as our construction in Theorem \ref{theorem:walk_main} would still satisfy the sequenced aTAM. However, as we have argued, assembly with only attractive inter-tile  interactions is conceptually more interesting. Clearly, the construction in Theorem \ref{theorem:interacting_walk} does not work for the sequenced aTAM. When attempting to produce rectangles of $m$ rows and $n$ columns (as in Section \ref{section: individual_kits}) with the sequenced aTAM $\mathcal{P}_{m,n} = ( A_{empty}, Q_{m,n}, g)$ using a sequence $Q_{m,n}$ and interaction function $g$ derived from the constructive proof of Theorem \ref{theorem:interacting_walk}, $\mathcal{P}_{m,n}$ fails to deterministically produce the desired rectangle for any $n > 2$ (except when $(m,n) = (2,3)$). However, we have thus far been unable to prove or disprove the existence of a finite tile type set for the sequenced aTAM that can deterministically assemble any Hamiltonian path shape with only attractive inter-tile interactions. We now present a partial result towards such a proof. 

We propose a scheme for assembling any enclosed shape in the sequenced aTAM without relying on non-attractive inter-tile interactions. We define an \textbf{enclosed shape} as any shape with a well defined Hamiltonian-cycle boundary, that additionally also contains all points within this boundary. Additionally, we also require that the Hamiltonian cycle boundary has no coordinate that neighbours more than two coordinates within the boundary.

\begin{figure}
   \centering
    \begin{subfigure}{0.4\textwidth}
\centering
  \caption{\label{fig:rectangle_fill}}
  \includegraphics[width=\linewidth]{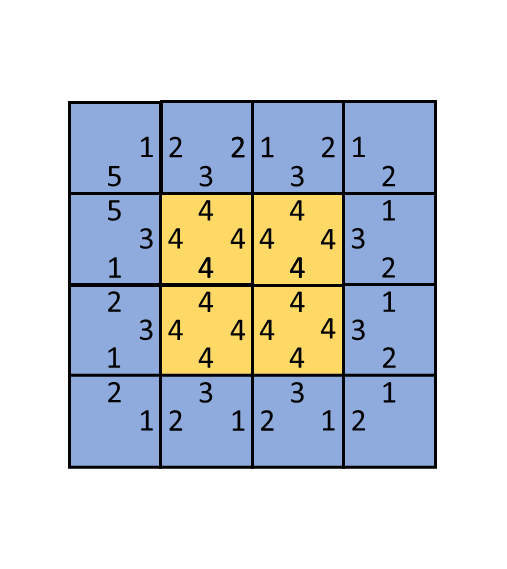}
\end{subfigure}
\begin{subfigure}{0.4\textwidth}
\centering
  \caption{\label{fig:bulge_rectangle_fill}}
  \includegraphics[width=\linewidth]{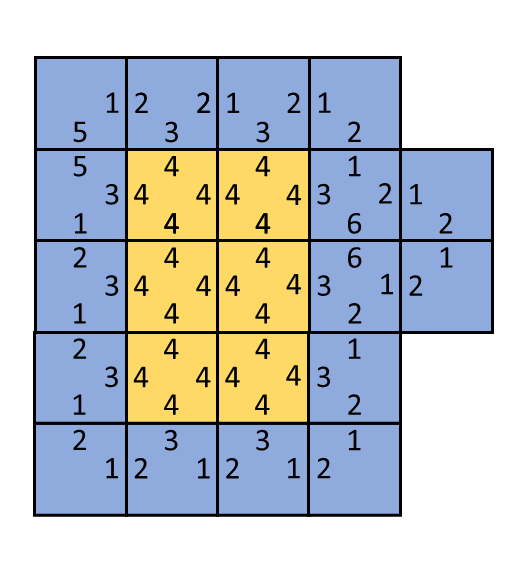}
\end{subfigure}
    \caption{ Scheme for assigning a sequence of tile types to assemble \textbf{(a)} rectangles and \textbf{(b)} bulged rectangles, in the sequenced aTAM with only attractive inter-tile interactions. First, an outer boundary is assembled using interacting directed tiles \textbf{(blue)}. Then, the interior of the shape is filled \textbf{(yellow)}. Note the bulged rectangle itself is not an enclosed shape, but the same shape filling approach can be applied after adjusting for faces in the bulge. }
    \label{fig:fill_shape}
\end{figure}

\begin{theorem}
\label{theorem:}
There exists a finite tile type set with associated interaction function $g$ such that, for any enclosed shape, a sequenced aTAM instance $(A_{empty},Q,g)$ with elements of $Q$ drawn from this tile set that deterministically assembles this shape with only attractive inter-tile interactions.
\label{theorem:enclosed_shape}
\end{theorem}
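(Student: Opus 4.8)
The plan is to assemble any enclosed shape by having the backbone sequence trace out the Hamiltonian-cycle boundary first, and then fill the interior in a controlled ``scanline'' order; the finite tile set must distinguish enough local contexts to guarantee that the sequenced aTAM — which permits a tile to bind anywhere a strong enough interaction is available — cannot misplace any tile. First I would establish the boundary. Since the boundary is a Hamiltonian cycle in which no boundary coordinate has more than two interior neighbours, I can use a small fixed palette of ``directed'' interacting tiles, as in Theorem~\ref{theorem:interacting_walk}, augmented with extra non-backbone glue types on the faces pointing \emph{into} the enclosed region; these inward-pointing glues will serve as anchors and as ``coordinates'' for the subsequent interior fill. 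The key point is that the boundary, being traced contiguously by the backbone, is assembled exactly as in the backboned case, so the only genuinely new work is (i) showing the sequenced placement rules do not allow a boundary tile to bind prematurely to an interior anchor, and (ii) filling the interior.

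Second I would fill the interior. After the cycle closes, every interior cell is surrounded (eventually) on all four sides, so I would add interior tiles one row (scanline) at a time: the leftmost interior cell of the topmost unfilled row binds to the already-present boundary tile above and to the left, then each successive tile in that row binds to its left neighbour and to whatever lies above it (boundary or previously placed interior tile). The crucial trick, analogous to the ``unknown glue'' inequality counting in Theorem~\ref{theorem:interacting_walk}, is that when tile $a_t$ is placed, all of its neighbours that are \emph{already present} pin down its matching faces, and all of its neighbours that are \emph{not yet present but will be} must be given a glue type distinct from the open glues of the tiles that could bind to them out of order. Because the interior is filled in a fixed scanline order, the set of ``dangerous'' open glues that $a_t$ must avoid is again bounded by a small constant (the open faces of its not-yet-placed neighbours, of which there are at most two or three given the scanline discipline and the no-more-than-two-interior-neighbours condition on the boundary), so a constant number $N$ of attractive non-backbone glue types suffices. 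One then checks that no tile in $Q$ can bind anywhere except its intended cell: a boundary tile cannot bind inside because interior anchor glues are chosen disjoint from boundary-matching glues, and an interior tile cannot bind to the wrong interior cell or to the wrong boundary face because the scanline ordering ensures its two ``committed'' faces uniquely determine its location among the cells that are ``ready'' at that moment.

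I would organise the write-up as: (1) a lemma that an enclosed shape's boundary is a length-$|{\partial S}|$ self-avoiding walk that can be traced by a backboned-style directed tile subsequence with interior-facing anchor glues, deterministically, even under sequenced placement (here the no-more-than-two-interior-neighbours hypothesis is what keeps a boundary tile from having an interior face that coincides with a backbone face and thus from being pulled inward); (2) a description of the canonical scanline ordering of the interior cells and the induced subsequence of $\hat{\Theta}_H$-type (interior, all-attractive) tiles, with glue assignments given by rules mirroring those in Theorem~\ref{theorem:interacting_walk}; (3) the counting argument bounding the number of simultaneous ``forbidden'' open glues, yielding a finite $N$ and hence a finite tile set; (4) a determinism check enumerating, for a generic time $t$, every lattice position where $\Theta_t$ could bind with interaction sum $\ge 1$ and showing the intended cell is the unique option; and (5) handling of the first interior tile and the final tile as degenerate cases, exactly as the penultimate-tile discussion in Theorem~\ref{theorem:interacting_walk}.

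The hard part will be the determinism check in the presence of the sequenced aTAM's nonlocal placement: unlike the backboned case, I cannot appeal to ``$a_t$ must touch $a_{t-1}$'' to localise placement, so I must argue that the glue types carried by the \emph{already-assembled} portion of the structure never offer $\Theta_t$ a second viable binding site. This forces the glue-numbering scheme to encode not just the immediate predecessor/successor context (as in Theorem~\ref{theorem:interacting_walk}) but enough of the scanline ``frontier'' that two geometrically distinct frontier positions never present the same pair of committed glues simultaneously; making precise that a constant number of glues genuinely suffices — rather than a number growing with the width of the shape — is the delicate step, and it is exactly where the enclosed-shape restrictions (interior fully present ``below and to the left'' of the frontier, boundary touching $\le 2$ interior cells) are doing the real work.
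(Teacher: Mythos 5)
Your first step (trace the Hamiltonian-cycle boundary with interacting directed tiles, inward-pointing anchor glues, and argue that during the boundary phase only one position is ever available) matches the paper closely. The gap is in your second step, and you have in fact put your finger on it yourself. You propose to fill the interior in a fixed scanline order with a non-uniform, inequality-numbered family of filler tiles, and then to argue that a constant number of glues suffices to make each filler tile's pair of committed faces globally unique among all currently open face-pairs. This is precisely what does \emph{not} hold: for a wide interior (say a $w$-column band), the scanline frontier together with the already-exposed faces of the row below presents $\Theta(w)$ open locations with the same qualitative local context (boundary-above / filler-to-left, etc.), and with $k$ glue types you have only $k^2$ distinguishable (left,top) pairs, so $k$ would have to grow like $\sqrt{w}$. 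Worse, if a filler tile does bind at the wrong frontier position, it leaves glues there that later filler tiles cannot match, so the scheme does not merely lose determinism --- it can dead-end. The inequality-counting that works in Theorem~\ref{theorem:interacting_walk} is intrinsically local because the backbone forces $a_t$ to touch $a_{t-1}$; it has no analogue in the sequenced setting, which is exactly what you flag as ``the hard part'' without resolving it.

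The paper resolves it by giving up on ordering the interior at all: after the boundary closes (first/last tiles matched via a dedicated $(5,5)$ glue pair, interior-facing boundary faces all set to a single glue $3$), the remainder of $Q$ is a run of one fully symmetric filler tile $\{4,4,4,4\}$, with $g(3,4)=g(4,4)=1$ and every other pairing strongly repulsive. Because the filler is interchangeable and all its faces are identical, it does not matter which empty interior cell a given copy attaches to; any copy can bind only in the interior (exterior-facing boundary glues and backbone glues repel $4$), the interior of a Jordan-curve region is connected, and with the right count of fillers in $Q$ every trajectory ends with the interior exactly tiled. So the move you are missing is to replace ``make tile placements distinguishable'' by ``make tile placements indistinguishable'': determinism of the final shape does not require determinism of each placement.

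Your boundary lemma is essentially the paper's; I would keep it. For the interior, drop the scanline/inequality machinery and substitute the uniform-filler argument, which is both shorter and actually closes the proof.
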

\begin{proof} Let $g$, with domain $[0,5] \times [0,5]$ and $g(\sigma,\sigma') = g(\sigma',\sigma)$, be defined as follows:
\begin{equation}
    g(\sigma,\sigma')=
    \left\{
    \begin{array}{ll}
      1, & \mbox{if}\ (\sigma,\sigma') = (1,2) \mbox{, } (\sigma,\sigma') = (3,4) \mbox{, } (\sigma,\sigma') = (4,4) \mbox{ or } (\sigma,\sigma') = (5,5)\\
      -3, & \mbox{otherwise}.
    \end{array}
    \right.
\label{eq:g_2_enclosed}
\end{equation}
Then, consider a the subset of the interacting directed tiles from Theorem \ref{theorem:interacting_walk} that only contain $3$ as a non-backbone glue type. Any enclosed shape can be constructed by drawing the boundary with interacting directed tiles, with the non-backbone glue 3 facing inwards towards the shape interior. Finally, the interaction between the first and last added boundary tile is encoded by $(5,5)$. Then, the remainder of the sequence $Q$ is filled with the tile type $\{4,4,4,4\}$ to fill in the shape (For an example, refer to Figure \ref{fig:rectangle_fill}).\end{proof}

\begin{figure}[t]
\centering
\begin{subfigure}{0.49\textwidth}
\centering
  \caption{\label{fig:rectangle_seq}}
  \includegraphics[width=\linewidth]{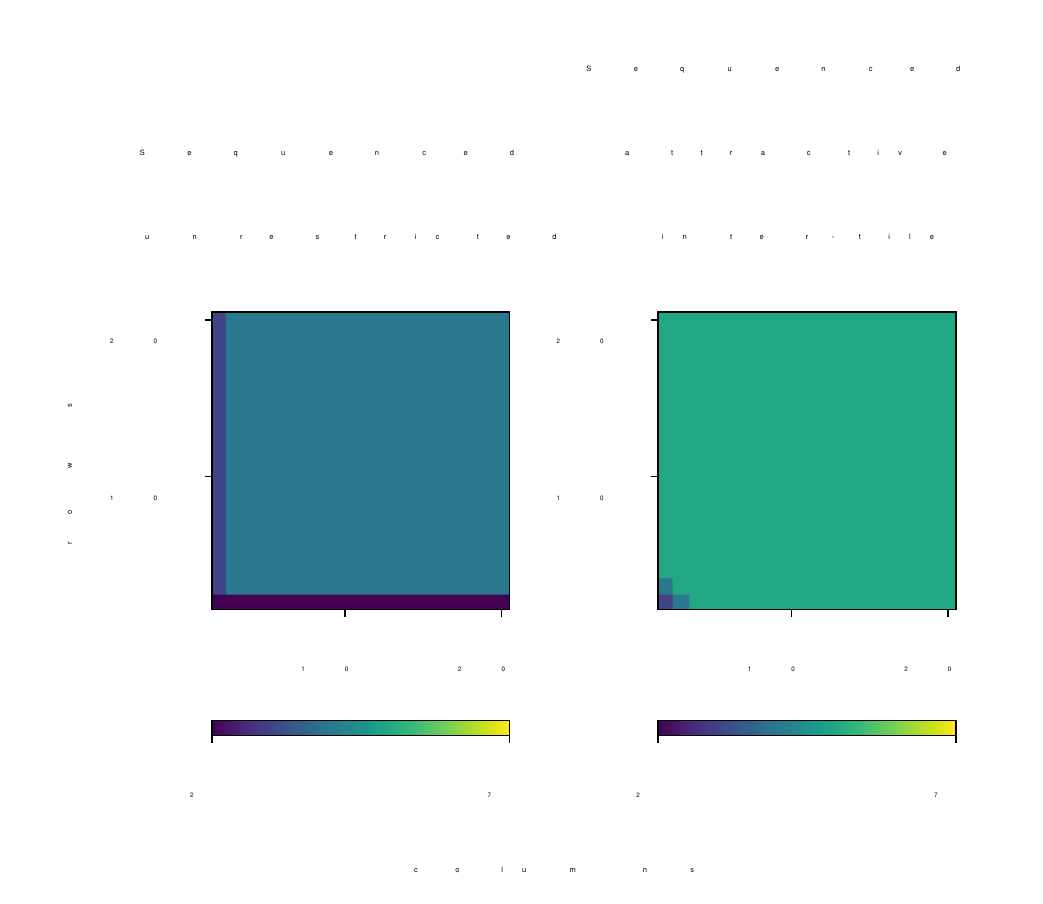}
\end{subfigure}
\begin{subfigure}{0.49\textwidth}
\centering
  \caption{\label{fig:bulge_rectangle_seq}}
  \includegraphics[width=\linewidth]{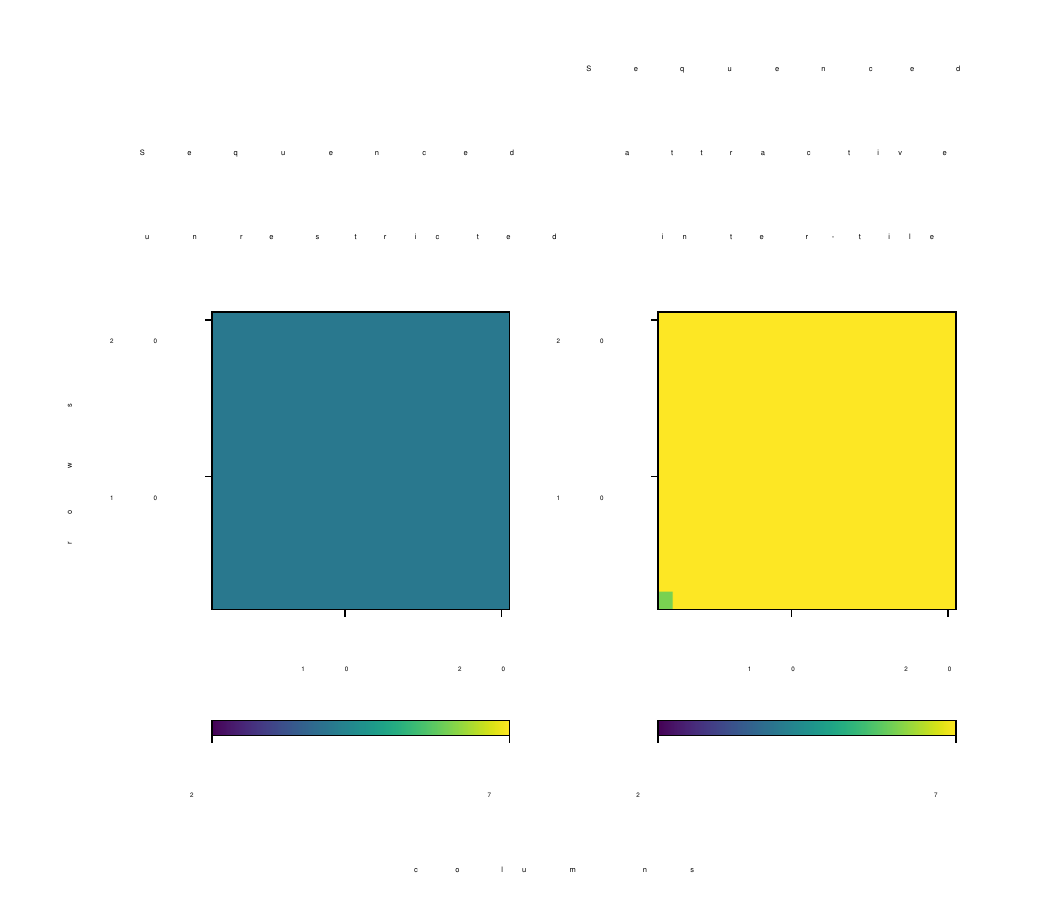}
\end{subfigure}
\caption{ Upper bounds on the tile complexity, or the number of tiles needed to construct \textbf{(a)} rectangles and \textbf{(b)} bulged rectangles of various row and column numbers using sequenced aTAM, with and without interaction restrictions. There is an increase in the number of tiles needed to assemble both shapes when inter-tile interactions are forced to be attractive, but in general this increase is small. }
\label{fig:individual_assembly_kit_sequenced}
\end{figure}

We now consider upper bounds on the number of tiles required to produce the rectangle and bulged rectangle shapes in Figures \ref{fig:rectangle_seq} and \ref{fig:bulge_rectangle_seq} with unconstrained Sequenced aTAM and Sequenced aTAM with only attractive inter-tile interactions. With the former, it is possible to assemble these shapes in the same way as backboned aTAM, using left-right-forward tiles to follow a hamiltonian path. With the latter, we apply the shape filling approach in Theorem \ref{theorem:enclosed_shape}, making slight adjustments for the bulge in the bulged rectangle (note the bulged rectangle itself is not an enclosed shape, but the shape filling approach can be applied after this adjustment). Results are provided in Figure \ref{fig:individual_assembly_kit_sequenced}. For shapes where shape filling is possible, it is quite efficient, only requiring a few more tiles than unconstrained sequenced aTAM. 

Moreover, we end by noting that, just as tile sequences that deterministically assemble a shape in the backboned aTAM may fail if used in the corresponding sequenced aTAM instance, sequenced aTAM instances cannot generally be directly converted into a backboned aTAM instance that deterministically assembles a shape. For instance, using the scheme in Theorem \ref{theorem:enclosed_shape} may fail for a backboned aTAM instance. For a large enough shape, the trajectory may take an arbitrary route inside the interior yellow portion of the shape instead of filling it completely.  

\section{Conclusion}\label{sec:conclusion}

That sequence-directed assembly can help reduce the number of tiles required to assemble shapes is intuitively understood. Here, we show that the details of a sequence-directed assembly map have important implications on the ability of the given assembly map to assemble all shapes in its shape space with a finite set of building blocks. Through our results on the sequenced aTAM, we showed that keeping the sequence encoding of tiles but removing geometric restrictions on the positions of added tiles allows a larger space of possible shapes, but excludes the possibility of a universal assembly kit for these shapes. Our results tentatively suggest that the backbone constraint not only restricts the space of shapes that can be assembled, but also actively facilitates deterministic assembly with a small tile set, since ambiguities of where to place tiles can be avoided. So far, our results have explored rotatable but not flippable tiles, and it would be interesting to consider whether allowing tiles to flip would have any significant impacts on our results. Alternatively, taking the model into 3 dimensions would, presumably, inhibit the effectiveness of the shape-filling approach for the sequenced aTAM and may result in less contrived structures than treeangles that cannot be assembled with a finite assembly kit. Another key question that remains open is whether the sequenced aTAM can deterministically assemble any shape accessible to the backboned aTAM using a finite assembly kit (under the assumption of attractive inter-tile interactions), and if so, how the minimal complexity of this finite assembly kit compares to that of the backboned aTAM.

An initially surprising result was the degree of inefficiency  with which the information in the sequence program is used to direct assembly in the models with only attractive inter-tile interactions. The mechanistic nature of the assembly process has a large effect, and the failure of many sequences to form deterministic, well-defined structures necessitates more tile types, which in turn results in redundancy, further increasing inefficiency.

This highly abstract study was initially motivated by the biologically relevant question of what a pre-formed backbone contributes to the self-assembly of RNA and proteins. Our results suggest that, at least in simplified models, the presence of a backbone has a qualitative, rather than merely quantitative, effect on the number of structures that can be reliably formed with a finite set of building blocks, and that both the sequence and the geometric constraints imposed by the backbone are important in directing successful assembly. These results are likely to generalise beyond the specific details of the model considered here. Moreover, it is interesting to note that both proteins \cite{Li1996} and RNA \cite{Dingle2015} have large neutral spaces, within which many sequences fold into the same structure, and also that many sequences do not produce well-defined folds \cite{Sidl2025}. Both imply some degree of inefficiency in terms of converting the sequence program into structure formation.

Nonetheless, it would also be instructive to explore more realistic models. In particular, biological assembly occurs at finite temperature with interactions of moderate strength. Contacts can form and break, and cooperativity between units is essential in forming long-lasting bonds. Indeed, the need for cooperative interactions may explain how RNA folding can operate with only 4 unique bases, and protein folding can occur with only 20 different side chains; in practice, cooperatively interacting domains likely provide a much larger effective set of building blocks to direct assembly. Testing this hypothesis with a model of backboned assembly that incorporates finite interaction strengths and cooperative bonding would be a natural next step.


\section*{Ethical Approval} Not applicable, as no human or animal subjects were involved in this work.

\section*{Competing Interests} All authors certify that they have no affiliations with or involvement in any organization or entity with any financial interest or non-financial interest in the subject matter or materials discussed in this manuscript.

\section*{Funding} JG was supported by an Imperial College President’s PhD Scholarship, and TEO by a Royal Society University Research Fellowship.

\section*{Authors' Contributions} 
JG and TEO planned the research. JG performed the research. JG and TEO wrote the manuscript.

\section*{Data Access Statement} 
Data and code may be found at 10.5281/zenodo.15020244.

\ack
We thank Ard Louis, Jordan Juritz and Benjamin Qureshi for their helpful discussions.

\section*{References}
\bibliography{main}

\begin{appendices}

\section{Proof that an Unbounded Number of Tile Types are Required for the Sequenced aTAM with Repulsive Interactions}\label{secA1}

Theorem \ref{theorem:InfiniteInteractions} below replaces Theorem \ref{theorem:infinite_interactions} in the main text for the case where repulsive interactions are permitted. Theorem \ref{theorem:InfiniteInteractions} has additional assumptions on the properties of the shape in question relative to Theorem \ref{theorem:infinite_interactions}, but the treeangle shape defined in the main text was constructed to satisfy this extended set of assumptions. Hence, Theorem  \ref{theorem:InfiniteInteractions} leads directly into Lemma \ref{theorem:scale} in the main text.

Now, we extend the arguments from Theorem \ref{theorem:infinite_interactions} for cases where the strength function $g$ is allowed to take on the values $g < 0$. The only difference is we now need to consider tile placement being blocked by repulsion, rather than just overlap of tiles, which could in principle result in the growth of distinct shapes from the same glue type. To understand how this may happen, consider the following set up. Let $\Psi$ be a complete trajectory of $\mathcal{P}$, with entries $A_t = A_{0,t} + A_{1,t} + A_{2,t} + ... + A_{M,t}$ where $E(A_i) = S_i$ and $A_{i,t}$ is the subconfiguration of $A_i$ appearing at time $t$ in $\Psi$. Assuming that two `growth' faces have the same glue $A_0(\ddot{z}_i,\ddot{k}_i) = A_0(\ddot{z}_j,\ddot{k}_j)$, using the approach in Lemma \ref{theorem:the_beginning}, we can construct another `partial' trajectory $\Psi'$ with the subconfigurations $A_{i,t}$ and $A_{j,t}$ swapped that stops at the first time point $t_s$ when repulsion due to an interaction with $g<0$ precludes a tile from being added to an empty coordinate for the first time. As with Theorem \ref{theorem:the_beginning}, we can write $A_{t}' = A_0 +  A_{i \rightarrow j ,t} + A_{j \rightarrow i,t} + A_{c,t}$ as the entries of the trajectory  $\Psi'$. (As a reminder, $A_{j \rightarrow i}$ represents the configuration $A_j$ transformed so that it now `grows' from $(\ddot{z}_i,\ddot{k}_i)$ instead).

For this trajectory $\Psi'$, at some time $t_s$, a single tile configuration $a_{i}^{des}$ with tile consistent with entry $t_s$ of $Q$ forms an attractive interaction with configuration $A_{j \rightarrow i ,t_s}$, but is prevented from being added because it forms some repulsive interaction with another tile. We call $a_i^{des}$ a \textbf{destabilizing tile}, and its corresponding coordinate $z_i^{des}$ a \textbf{destabilizing coordinate}. The trajectory $\Psi'$ can then proceed, and eventually terminate either prematurely or when the last tile of $Q$ is reached, thus $\Psi'$ is a complete trajectory. In this complete trajectory, say $A_{j \rightarrow i,t_s}$ grows into a configuration $A_{j \rightarrow i}'$. As a result of the tile blocking event arising from $a_{i}^{des}$, it is possible that deterministic growth can occur that results in $E(A_{j \rightarrow i}') = E(A_i) \not \simeq E(A_{j})$ and $E(A_{i \rightarrow j}') = E(A_j) \not \simeq E(A_{i})$.
Repulsion-derived tile blocking can thus cause the reasoning in Theorem \ref{theorem:infinite_interactions} to fail.  

The remainder of this appendix is dedicated to illustrating how we can get around this potential failure mode. We begin with a few additional definitions. The \textbf{distance} between two points $(x_1,y_1)$ and $(x_2,y_2)$ is taken to be $|x_2-x_1|+|y_2-y_1|$. A \textbf{subtree subshape} $s$ of an oriented shape $S$ is a subshape of $S$ with a unique \textbf{root} in $s$ and its neighbouring \textbf{origin} outside of $s$ but in $S$ (Figure \ref{fig:subtree_expl}). The adjacency graph $G_S(s)$ forms a tree and all non-root leaf nodes in $G_S(s)$ are leaf nodes in $G_S(S)$. Additionally, each subtree subshape is uniquely specified by its \textbf{root} and \textbf{origin}. Now, we proceed to tighten our definition of the target shape to ensure that tile blocking doesn't invalidate our line of reasoning.

\begin{definition}
    Consider an oriented shape $S = \bigcup_{i = 0}^M S_i$. $S$ is a \textbf{target shape} with \textbf{starting shape} $S_0$ if it obeys the following assumptions: 
\begin{enumerate}
    \item For any $i,j \in {1,..,M}$, $S_i$ is not connected to any $S_j$ if $ i \neq j$. \label{step:noneighbour}
    \item For any $i \in {1,..,M}$, $S_i$ has exactly one coordinate that neighbours a coordinate in $S_0$, and $S_0$ has exactly one coordinate that neighbours this coordinate. We denote by $\dot{z}_i$ the coordinate in $S_i$ and by $\ddot{z}_i$ the neighbouring coordinate in $S_0$. 
    \label{step:exactly1}
    \item For any $i \in {1,..,M}$, $G_S(S_i)$ is a tree with at least one branching node, and with branching points in $S_i$ labeled $v_{i,n}$. Then, every subtree subshape with origin at some $v_{i,n}$ has a unique shape (Figure \ref{fig:subtree_expl}). \label{step: subtree}
    \item For any $i \in {1,..,M}$, $j \in {1,..,M}$ and $i \neq j$, $S_i$ and $S_j$ have no coordinates  within a distance 2 of each other if $\ddot{z}_i \neq \ddot{z}_j$. If $\ddot{z}_i = \ddot{z}_j$, then $\dot{z}_i$ is exactly distance 2 from $\dot{z}_j$, but no other coordinates in $S_i$ are within distance 2 of any other coordinate in $S_j$ (Figure \ref{fig:assume_4_5}.a). \label{step:s_i_j} 
    \item For any $i \in {1,..,M}$, $S_i$ has no coordinate which is within distance 2 from any coordinate in $S_0$, except for $\dot{z}_i$ or its neighbours. There is exactly one coordinate in $S_0$ at a distance of 2 from $\dot{z}_i$, which must be some neighbour of $\ddot{z}_i$. $\ddot{z}_i$ is the only coordinate in $S_0$ that can be distance 2 from any neighbour of $\dot{z}_i$ within $S_i$ (Figure \ref{fig:assume_4_5}.b). 
    \label{step:s_i_0} 
    \item Every straight line segment in $S$ is of at least length 4. \label{step:straight}
    \item $S$ and $S_0$ have rotational symmetry 1. Additionally, there does not exist $S_t$, a subshape of $S$, such that $S_t \neq S_0$ but $S_t \simeq S_0$. \label{step:strong_weak}
    \label{def:targetshape}
\end{enumerate}
\end{definition}

Having expanded our assumptions on our target shape, we wish to consider when repulsion-induced blocking may or may not be significant for our arguments. Consider a sequenced aTAM instance $\mathcal{P} = (A_0, Q, g)$ that produces a target shape $S$ from starting shape that $S_0 = E(A_0)$. The terminal configuration $A_{\zeta}$ of a  trajectory of $\mathcal{P}$ has subconfigurations $A_i$ such that $E(A_i) = S_i$. Assume that partial sub-configurations $A_{j,t_s}$ and $A_{i,t_s}= A_{j \rightarrow i,t_s}$ could be reached during complete trajectories at some time $t_s$. We say that $j$ and $i$ are \textbf{differentially blocked} if, for some single tile configuration $a_j$, $A_{j,t_s}+a_{j}$ can be reached during a complete trajectory but $A_{i,t_s}+a_{j \rightarrow i}$ cannot (or vice versa), due to repulsion-induced blocking
(Hence $a_i^{des} = a_{j \rightarrow i}$ is the destabilizing tile with coordinate $z_i^{des}$). In the following lemma, we show that differential blocking is essential to break the arguments in Theorem $\ref{theorem:infinite_interactions}$.

\begin{lemma}
   Consider a sequenced aTAM instance $\mathcal{P} = (A_0, Q, g)$ that oriented-deterministically produces a target shape $S$ (obeying assumptions in Definition \ref{def:targetshape}) from starting shape that $S_0 = E(A_0)$.  In the absence of differential blocking between two subconfigurations anchored to the same glue type, the number of unique tile types required in $A_0$ grows with at least $\frac{M}{4}$.
   \label{theorem:need_differential}
\end{lemma}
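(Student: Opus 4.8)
The plan is to run the same contradiction scheme as in Theorem~\ref{theorem:infinite_interactions}, but to exploit the extra geometric slack built into Definition~\ref{def:targetshape} so that repulsion-induced blocking cannot sabotage the exchange argument. First I would observe that, exactly as in Theorem~\ref{theorem:infinite_interactions}, condition~\ref{step:strong_weak} together with plain determinism forces $\mathcal{P}$ to be oriented-deterministic, so that every complete trajectory of $\mathcal{P}$ terminates in a configuration of oriented shape precisely $S$. Then I would suppose for contradiction that $A_0$ exposes fewer than $M$ distinct open glue types. Each growth face $(\ddot z_i,\ddot k_i)$ of $A_0$ is open (its neighbour $\dot z_i$ lies in $S_i$, not in $S_0$), and these $M$ faces are pairwise distinct (a coincidence would force $\dot z_i=\dot z_j$, contradicting condition~\ref{step:noneighbour}); so by pigeonhole there are indices $i\neq j$ with $A_0(\ddot z_i,\ddot k_i)=A_0(\ddot z_j,\ddot k_j)$.

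Next I would instantiate Lemma~\ref{theorem:the_beginning} with $A_\epsilon,A_{\epsilon'}$ the subconfigurations of the terminal configuration whose oriented shapes are $S_i,S_j$, the anchoring face-pairs being those at $\ddot z_i$ and $\ddot z_j$, and $A_c$ the remainder. Its first hypothesis is the shared glue, which we have. Its second hypothesis (unique adjacency to $A_0$, together with pairwise non-adjacency of $A_\epsilon$, $A_{\epsilon'}$ and $A_c$) follows from conditions~\ref{step:noneighbour}, \ref{step:exactly1} and~\ref{step:s_i_0}. Its third hypothesis asks that the swapped configuration $A_\omega'=A_0+A_{\epsilon\to\epsilon'}+A_{\epsilon'\to\epsilon}+A_c$ be $g$-valid; here I would argue that the distance-$2$ separations imposed by conditions~\ref{step:s_i_j}, \ref{step:s_i_0} and~\ref{step:straight} guarantee that the lattice isometry carrying $S_j$ onto the $\ddot z_i$-slot (and $S_i$ onto the $\ddot z_j$-slot) moves no coordinate to within distance $2$ of any coordinate of $A_0$, $A_c$ or the other transported subconfiguration, except at the single anchoring face-pair. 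Consequently the swap creates neither an overlap nor any new glue adjacency, so $A_\omega'$ is $g$-valid for the same reason $A_\omega$ is. This verification is the step I expect to be the main obstacle: it requires treating the generic case $\ddot z_i\neq\ddot z_j$ and the degenerate same-anchor case $\ddot z_i=\ddot z_j$ of condition~\ref{step:s_i_j} separately, and checking carefully that the transformation named in Lemma~\ref{theorem:the_beginning} really is a rigid lattice motion respecting all of these separations.

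With $A_\omega'$ shown $g$-valid, Lemma~\ref{theorem:the_beginning} yields a trajectory $\Psi'$ complete with respect to $\mathcal{P}$ whose terminal configuration is $A_\omega'$ or has $A_\omega'$ as a subconfiguration --- and it is precisely here that I would invoke the hypothesis of no differential blocking. In the inductive construction of $\Psi'$, the only way a step can fail is if the tile prescribed by $Q$, which is added to an empty coordinate in $\Psi$, is instead obstructed by a $g<0$ interaction in $\Psi'$; such an obstruction is by definition a differential blocking event between the two subconfigurations anchored to the shared glue, which we have excluded, so the induction carries through up to $A_\omega'$.

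Finally I would extract the contradiction: the terminal configuration of $\Psi'$ carries, hanging off the coordinate $\ddot z_i$ of $S_0$, a rotated and translated copy of $S_j$, whereas oriented determinism together with condition~\ref{step:exactly1} forces the unique subshape of $S$ attached at $\ddot z_i$ --- namely $S_i$ --- to appear there. Because the target-shape axioms (in particular the uniqueness of subtree subshapes demanded by condition~\ref{step: subtree}) imply $S_i\not\simeq S_j$, no copy of $S_j$ can be a subshape of $S_i$, so the terminal oriented shape of $\Psi'$ cannot be $S$; this contradicts oriented determinism. Hence $A_0$ exposes at least $M$ distinct open glue types, and since a tile type carries at most $4$ faces, $A_0$ needs at least $\frac{M}{4}$ distinct tile types, as claimed.
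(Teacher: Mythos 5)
Your overall architecture matches the paper's: pigeonhole on open glue types, invoke the exchange lemma (Lemma~\ref{theorem:the_beginning}), and use the no-differential-blocking hypothesis to carry the swapped trajectory to its end, deriving a contradiction with the target-shape axioms. The paper itself argues more compactly (it defers the ``no repulsion'' case to Theorem~\ref{theorem:infinite_interactions} and then shows non-differential blocking leaves that argument intact), but the underlying tool and target contradiction are the same.

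However, there is a genuine gap where you yourself flag ``the main obstacle.'' You assert that conditions~\ref{step:s_i_j}, \ref{step:s_i_0} and~\ref{step:straight} of Definition~\ref{def:targetshape} guarantee that transporting $S_j$ to the $\ddot z_i$ slot (and vice versa) produces no overlap or new adjacency, so that $A_\omega'$ is automatically $g$-valid. This does not follow. Those conditions constrain distances between subshapes \emph{in the original} $S$; they say nothing about where the rigid motion carrying $S_j$ onto the $\ddot z_i$ slot places the rest of $S_j$ relative to $S_0$, $A_c$, or the other transported arm. Since $S_i\not\simeq S_j$, the transported $S_j$ occupies a genuinely different region of the lattice than $S_i$ did, and nothing in Definition~\ref{def:targetshape} forbids it from colliding with (or coming within distance $2$ of) other coordinates of the swapped configuration. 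The paper does not claim this; instead, Theorem~\ref{theorem:infinite_interactions} treats ``$A_\omega'=\infty$'' (i.e., the swap \emph{does} overlap) as a second, separate case and argues that an overlap itself forces a violation of the neighbouring/adjacency constraints on $S$. Your proof needs that branch: either show the overlap case also yields a contradiction with Definition~\ref{def:targetshape}, or drop the unsupported claim that the swap is always overlap-free. As written, a referee would stop you at exactly the sentence ``Consequently the swap creates neither an overlap nor any new glue adjacency.''
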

\begin{proof}  Consider whether it is possible to detemrinistically produce an assembly with $S_i \not \simeq S_j$ if $S_i$ and $S_j$ are anchored to the same glue type.
First, if the addition of a tile to partial subconfigurations corresponding to $i$ and $j$ is never blocked due to some repulsive interaction $g < 0$, then Theorem \ref{theorem:infinite_interactions} is sufficient to  argue that $S_i \simeq S_j$ necessarily, since the assumptions of Definition \ref{def:targetshape} are a subset of those of Theorem \ref{theorem:InfiniteInteractions} and the contradictions that arise in Theorem \ref{theorem:InfiniteInteractions} remain upheld. 

We now allow for non-differential repulsion-induced tile blocking. Consider (without loss of generality) a non-differential repulsion-induced tile block that occurs when trying to add a tile to a partial configuration $A_{i,t_s}$, where partial subconfiguration $A_{j,t_s}= A_{i \rightarrow j,t_s}$ can also be reached in a complete trajectory.
This non-differential tile block does not result in  partial subconfigurations $A_{j,t_s^\prime}$ whose equivalents $A_{i,t_s^\prime} = A_{j\rightarrow i,t_s^\prime}$ cannot be reached in a complete trajectory. Unless differential tile blocking occurs elsewhere, the assumed properties of the target shape then imply that for all partial subconfigurations of $A_j$ that can be reached in a complete trajectory, the equivalent partial subconfiguration of $A_i$ can also be reached and the proof of Theorem \ref{theorem:InfiniteInteractions} still applies. Either $S_i \simeq S_j$ and a violation of assumption \ref{step: subtree} of our target shape arises, or growth must be non-deterministic.

Hence, in the absence of differential blocking, each $S_i$ must be anchored to a unique glue type, so at least $M$ glue types are required in $A_0$. Since each tile type can contain at most $4$ unique glue types, at least $\frac{M}{4}$ tile types, are needed in $A_0$ to assemble $S$ deterministically.
\end{proof}

Lemma \ref{theorem:need_differential} has the following corollary, which states that we would still need $M/4$ tile types if tile blocking only occurs within given subshapes $S_i$ rather than in between them. 

\begin{corollary}
   Consider a sequenced aTAM instance $\mathcal{P} = (A_0, Q, g)$ that oriented-deterministically produces a target shape $S$ (obeying assumptions in Definition \ref{def:targetshape}) from starting shape that $S_0 = E(A_0)$. If the only tile blocking events are such that each destabilizing coordinate $z_{i}^{des}$ only neighbours one $S_i$, the number of unique tile types required in $A_0$ grows with at least $\frac{M}{4}$.
   \label{theorem:need_junction}
\end{corollary}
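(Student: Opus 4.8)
The plan is to reduce Corollary~\ref{theorem:need_junction} to Lemma~\ref{theorem:need_differential} by showing that the hypothesis of the corollary -- that every destabilizing coordinate $z_i^{des}$ neighbours exactly one $S_i$ -- rules out differential blocking \emph{between} distinct subshapes, which is the only form of blocking that Lemma~\ref{theorem:need_differential} does not already handle. First I would recall the setup: suppose $A_0$ has fewer than $M$ distinct open glue types, so by pigeonhole two subshapes $S_i \not\simeq S_j$ are anchored to faces of the same glue type. By Lemma~\ref{theorem:the_beginning} (in the form used in the appendix discussion preceding Lemma~\ref{theorem:need_differential}), we can construct a partial trajectory $\Psi'$ with the sub-configurations for $i$ and $j$ swapped, and the only way to avoid a contradiction with the target-shape assumptions (Definition~\ref{def:targetshape}, in particular \ref{step: subtree}) is for differential blocking to occur: some destabilizing tile $a_i^{des}=a_{j\to i}$ is admissible in the $A_j$-context but blocked (by a repulsive $g<0$ interaction) in the swapped $A_i$-context, or vice versa.

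Next I would argue that such a destabilizing event \emph{forces} the destabilizing coordinate to neighbour two distinct components of the target decomposition. The destabilizing tile is blocked because it would form a repulsive interaction with some tile already present; that tile belongs either to $S_0$, to the subshape whose growth is being blocked, or to a \emph{different} subshape. The first two cases cannot produce \emph{differential} behaviour: if the blocking partner lies within the same subshape being grown, or within $S_0$, then after the affine swap $A_{\epsilon\to\epsilon'}$ the same geometric and glue configuration is reproduced (the swap is an isometry taking $\dot e_i$ to $\dot e_j$ and acting on the whole grown piece), so the tile is blocked in both the $i$- and $j$-anchored trajectories or in neither -- this is exactly the non-differential case already dispatched by Lemma~\ref{theorem:need_differential}. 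Here I would lean on assumptions \ref{step:s_i_j} and \ref{step:s_i_0} of Definition~\ref{def:targetshape}, which control the distance-2 neighbourhoods, to ensure that the blocking partner within $S_0$ or within $S_i$ is in the ``same'' position before and after the swap, so the destabilizing coordinate in that case neighbours only the one subshape $S_i$. Therefore, for the blocking to be \emph{differential}, the repulsive partner must sit in a subshape distinct from the one being grown, i.e. $z_i^{des}$ must neighbour (a coordinate of) $S_i$ and also a coordinate belonging to some other $S_\ell$ -- contradicting the corollary's hypothesis that each destabilizing coordinate neighbours only one $S_i$.

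Having excluded differential blocking, I would conclude that the situation falls exactly within the scope of Lemma~\ref{theorem:need_differential}: in the absence of differential blocking between two subconfigurations anchored to the same glue type, each $S_i$ must be anchored to a unique open glue type, so $A_0$ requires at least $M$ open glue types, and since each tile type carries at most $4$ faces, at least $M/4$ unique tile types. That gives the stated bound.

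The main obstacle I anticipate is making airtight the claim that a non-differential-versus-differential dichotomy is genuinely captured by ``destabilizing coordinate neighbours one vs.\ more than one $S_i$.'' The subtlety is bookkeeping: a destabilizing coordinate $z_i^{des}$ is a coordinate \emph{not yet filled} at time $t_s$, and ``neighbours one $S_i$'' refers to its neighbours in the \emph{final} shape $S$; I need to verify that the repulsive partner tile that actually causes the block is one of those final neighbours (it is, since it is a tile already present in the configuration, hence present in $A_\omega$, hence occupying a coordinate of $S$ adjacent to $z_i^{des}$), and that after the affine swap the \emph{only} way the admissibility can change is if that partner lies in a component that was itself relocated by the swap -- which, given assumptions \ref{step:noneighbour}, \ref{step:s_i_j}, \ref{step:s_i_0} bounding how close distinct $S_i$ come to each other and to $S_0$, forces that partner to be in a \emph{different} subshape and hence forces $z_i^{des}$ to have neighbours in two components of $S$. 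Pinning down this last geometric implication carefully, using the distance-$\geq 2$ separation guarantees of the target shape, is where the real work lies.
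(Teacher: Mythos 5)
Your reduction to Lemma~\ref{theorem:need_differential} -- showing that the hypothesis on $z_i^{des}$ forces blocking to be non-differential -- is the paper's strategy, but your treatment of the $S_0$ case is flawed. You claim that if the repulsive partner lies in $S_0$, the affine swap preserves the blocking (``the same geometric and glue configuration is reproduced''). This does not follow: the swap of Lemma~\ref{theorem:the_beginning} relocates only the grown sub-configurations $A_\epsilon$ and $A_{\epsilon'}$ and leaves $A_0$ fixed. A blocking tile sitting in $A_0$ near $\ddot{z}_i$ is therefore \emph{not} carried along with $z_i^{des}$ under the swap, and nothing in Definition~\ref{def:targetshape} forces the local shape of $S_0$ near $\ddot{z}_j$ to mirror that near $\ddot{z}_i$ -- assumptions~\ref{step:s_i_j} and~\ref{step:s_i_0} bound how many $S_0$ coordinates lie within distance~$2$ of $\dot{z}_i$, not the arrangement of those coordinates. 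Blocking by $S_0$ can in principle be differential; the paper handles that case not here but in Lemma~\ref{theorem:target_unachievable} (taking $k=0$ in ``$z_i^{des}$ neighbours $S_i$ and $S_k$, $k\neq i$''), with a different conclusion: such a differential block contradicts the shape assumptions, not that it is automatically non-differential.

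The corollary's hypothesis should be read, as the paper's proof reads it, as saying that \emph{all} $S$-coordinates adjacent to $z_i^{des}$ lie in the single subshape $S_i$ -- which excludes $S_0$. Under that reading, the blocking partner is a neighbour of $z_i^{des}$ present at time $t_s$, hence belongs to $A_{i,t_s}$, and the swap carries it together with $z_i^{des}$ onto the $j$-anchor, preserving the block. Your argument is essentially correct once you drop the $S_0$ branch of the case analysis (it is vacuous under the corollary's hypothesis) and keep only the observation that a blocking partner inside $S_i$ is relocated by the swap along with the destabilizing tile.
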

\begin{proof}
     We once again assume that $S_i$ and $S_j$ are anchored to the same glue type. Consider (without loss of generality) a tile blocking event in which $z_{i}^{des}$ only neighbours $S_i$, and that the tile is being added to the partial subconfiguration $A_{i,t_s}$. Unless a previous differential blocking event has occurred, from the properties of the target shape, the equivalent partial subconfiguration $A_{j,t_s}= A_{i \rightarrow j,t_s}$ can also be reached in a complete trajectory if assembly is to be oriented-deterministic. $z_{i\rightarrow j}^{des}$ would therefore also experience repulsion-induced blocking, and hence blocking events in which the destabilizing coordinate $z_{i}^{des}$ only neighbours one $S_i$ are non-differential. This Corollary then follows directly as a result of Lemma \ref{theorem:need_differential}.
    
\end{proof}

\begin{lemma} Consider a sequenced aTAM instance $\mathcal{P} = (A_0, Q, g)$ that oriented-deterministically produces a target shape $S$ (obeying assumptions in Definition \ref{def:targetshape}) from starting shape that $S_0 = E(A_0)$. A differential blocking event with  destabilizing coordinate $z_{i}^{des}$ that neighbours $S_i$ and $S_k$ with $i \neq k$ cannot occur. 
\label{theorem:target_unachievable}
\end{lemma}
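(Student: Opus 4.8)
The plan is to argue by contradiction. Suppose such a differential blocking event occurs. Following the set-up preceding this lemma, let $S_i$ be the subshape grown, in the swapped trajectory $\Psi'$, as the transported configuration $A_{j\rightarrow i}$, let $S_j$ be the subshape exchanged with it, and let $T$ be the associated $(j\rightarrow i)$-transformation, so that $T(\ddot{z}_j)=\ddot{z}_i$ and $T(\dot{z}_j)=\dot{z}_i$. By the construction of $\Psi'$ from Lemma \ref{theorem:the_beginning}, the destabilizing tile $a_i^{des}$ sits at $z_i^{des}=T(z_j)$, where $z_j$ is the coordinate at which the tile added at the corresponding time $t_s$ of the original trajectory $\Psi$ was placed; since (without loss of generality) the first repulsion-block occurs while extending $A_{j\rightarrow i}$, that tile belongs to $A_j$, so $z_j\in dom(A_j)=S_j$.

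First I would pin down $z_i^{des}$. By hypothesis it is adjacent to a coordinate $p\in S_i$ and a coordinate $q\in S_k$ with $k\neq i$, and then $p,q$ lie at distance at most $2$; assumption \ref{step:s_i_j} of Definition \ref{def:targetshape} forces $\ddot{z}_i=\ddot{z}_k=:\ddot{z}$ and $\{p,q\}=\{\dot{z}_i,\dot{z}_k\}$, with $\dot{z}_i,\dot{z}_k$ at distance exactly $2$. Now $z_i^{des}$ is a common neighbour of $\dot{z}_i$ and $\dot{z}_k$, and it is distinct from $\ddot{z}$ because $z_i^{des}$ must be empty at time $t_s$ while $\ddot{z}\in S_0$; two opposite neighbours of $\ddot{z}$ share no common neighbour besides $\ddot{z}$, so $\dot{z}_i,\dot{z}_k$ must instead be perpendicular neighbours of $\ddot{z}$ and $z_i^{des}$ the fourth corner of the unit square on $\{\ddot{z},\dot{z}_i,\dot{z}_k\}$ -- equivalently, the neighbour of $\dot{z}_i$ perpendicular to the edge $\ddot{z}\,\dot{z}_i$. (If instead $k=0$, so $z_i^{des}$ also abuts $S_0$, the same conclusion is reached from assumptions \ref{step:exactly1} and \ref{step:s_i_0}, the only other possibilities placing $z_i^{des}$ onto an already-occupied coordinate.)

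Next I would pull this description back through $T$. By the treeangle construction (Definition \ref{def:treeangle}) a subshape attaches to $S_0$ at the end of a straight segment, and every straight segment of $S$ has length at least $4$ (assumption \ref{step:straight}, Definition \ref{def:straightwalk}); hence $S_i$ leaves $\dot{z}_i$ straight in the direction $\ddot{z}\rightarrow\dot{z}_i$, and likewise $S_j$ leaves $\dot{z}_j$ straight in the direction $\ddot{z}_j\rightarrow\dot{z}_j$. Since $T$ preserves adjacency and orthogonality and sends $\dot{z}_j\mapsto\dot{z}_i$, $\ddot{z}_j\mapsto\ddot{z}$, it carries the opening straight segment of $S_j$ onto that of $S_i$ and the transverse direction to the transverse direction; so $z_j=T^{-1}(z_i^{des})$ is the neighbour of $\dot{z}_j$ that lies off the opening segment of $S_j$, and in particular $z_j\neq\ddot{z}_j$ and $z_j$ is not the next coordinate of $S_j$ along that segment. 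But $z_j\in S_j$, which is impossible: $S_j$ does not branch at $\dot{z}_j$ (its opening segment has length $\geq 4$), and the self-avoidance clause of Definition \ref{def:straightwalk} -- every coordinate of a constituent walk is at distance at least $3$ from every other coordinate of that walk, apart from the two immediately before and the two immediately after it -- forbids any coordinate of $S_j$ other than the next segment coordinate from lying within distance $1$ of $\dot{z}_j$. This contradiction proves that no such differential blocking event can occur.

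The step I expect to be the main obstacle is the geometric bookkeeping in the first two paragraphs: one must verify that the distance-$2$ clauses of Definition \ref{def:targetshape} really do force $z_i^{des}$ onto the single perpendicular-corner coordinate -- excluding the ``collinear'' (opposite-neighbour) placement, which would put $z_i^{des}$ on $S_0$, and handling the $k=0$ variant -- and that the endpoint-attachment structure of treeangles genuinely makes $S_j$, and therefore the transported copy $A_{j\rightarrow i}$, leave its anchor along a straight segment, so that $z_j$ is transverse to $S_j$'s opening segment rather than lying on it. Once that is in hand the clash with self-avoidance is immediate, and together with Corollary \ref{theorem:need_junction} it removes the possibility of any differential blocking, which is what the overall argument needs.
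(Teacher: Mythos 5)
Your proof has a structural gap: it invokes properties of treeangles and of the self-avoiding walks used to build them, but the Lemma is stated for \emph{any} target shape obeying Definition \ref{def:targetshape}, which is a strictly weaker hypothesis. Two steps in your argument depend on facts not supplied by Definition \ref{def:targetshape}. First, you assert that $S_j$ leaves $\dot{z}_j$ along a straight segment, citing the treeangle attachment structure (Definition \ref{def:treeangle}). Definition \ref{def:targetshape} only gives assumption \ref{step:straight} (straight segments of length at least $4$); this does not rule out $\dot{z}_j$ being a corner, in which case the transverse neighbour $z_j = T^{-1}(z_i^{des})$ of $\dot{z}_j$ may legitimately be the next coordinate of $S_j$ and no contradiction arises. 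This is precisely the case the paper treats separately as ``$z_j^{nb}\notin S_j$'', where the contradiction is derived instead from forcing $\{\ddot{z}_i,\dot{z}_i\}$ to be a length-$2$ segment (violating assumption \ref{step:straight}), from the blocking being non-differential, or from a violation of assumption \ref{step:exactly1}. Second, your final step rests on the distance-$3$ self-avoidance clause of Definition \ref{def:straightwalk} to exclude $z_j\in S_j$; but Definition \ref{def:targetshape} imposes distance constraints only \emph{between} distinct $S_i$, $S_j$, and $S_0$, not within a single $S_j$. In the paper's ``$z_j^{nb}\in S_j$'' branch the contradiction comes from assumption \ref{step: subtree} (uniqueness of subtree subshapes), not from self-avoidance. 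You also compress the $k=0$ variant into a parenthetical, whereas most of the paper's second case is devoted to it (whether $\ddot{z}_i$ is branching, whether $z_i^{des}$ abuts $\dot{z}_i$ or one of its neighbours), and you do not address the short preliminary sub-case in which $z_i^{des}$ is filled later in $\Psi'$, which the paper dismisses immediately via assumptions \ref{step:noneighbour} and \ref{step:exactly1}. The transverse-pullback idea is an efficient way to see the contradiction when the target really is a treeangle, but it does not prove the Lemma under its stated hypotheses.
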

\begin{proof} We proceed by assuming that there is such a differential blocking event. Then, we show that one of the assumptions on $S$ laid out in Definition \ref{def:targetshape} necessarily is necessarily violated as a result of this differential blocking event. 

First, assume such a tile blocking event occurs, but $z_i^{des}$ is subsequently filled in at some later time (in the trajectory $\Psi'$). Then $S_i$ would neighbour $S_k$, and hence a contradiction arises with either assumption \ref{step:noneighbour} or assumption \ref{step:exactly1}. Hence, we only need to consider differential tile blocks where here $z_i^{des}$ is adjacent to $S_i$ as well as some $S_k$ with $i \neq k$, and $z_i^{des}$ remains unoccupied
in $A_{\zeta}'$, the terminal configuration of $\Psi'$.

We now proceed to argue that in this final case, features of $A_{\zeta}'$ result in a final shape $S$ that necessarily breaks one of the assumptions \ref{step:exactly1}-\ref{step:straight} of Definition \ref{def:targetshape}. $a^{des}_{i}$ neighbours some tile $a^{att}_{j \rightarrow i}$ in $A_{j \rightarrow i, t_s}$, onto which it may be attracted, and some other tile $a^{rpl}$ that blocks its addition. Hence, $a^{att}_{j \rightarrow i}$ and $a^{rpl}$ must be of distance $2$ from each other. The assumptions on $S$ that we have laid out forbid points belonging to any two distinct $S_i$ and $S_j$ from being within distance 2 of each other, with exceptions around the vicinity of junctions where $S_0$ meets some $S_i$. More specifically, in order to fulfill assumptions \ref{step:s_i_j} and \ref{step:s_i_0}, $a^{att}_{j \rightarrow i}$ can only occupy the position $\dot{z}_i$ or some neighbour of $\dot{z}_i$. 

Next we show that by occupying one of these positions, one or more of the assumptions regarding $S$ must be broken. We define some coordinate $z_i^{nb}$ that forms a straight line with $\dot{z}_i$ and $\ddot{z}_i$ and neighbours $\dot{z}_i$. Consider now the following cases, which exhaust the possibilities of differential blocking. We will assume, without loss of generality, that any differential blocking occurs due to a tile being blocked from addition to the subshape $S_i$.

\begin{enumerate}
\item Assume $z_j^{nb} \in S_j$. By assumption \ref{step:s_i_0} and lattice placement rules (as illustrated in Figure \ref{fig:edge_cases_2}.a), there is no way for $z_i^{nb}$ to neighbour some other $S_k$ with $k \neq i$, and hence $z_i^{nb} \neq z_i^{des}$. Hence, there is no way to block an incoming tile from occupying $z_i^{nb}$, and given $S_i$ and $S_j$ are anchored onto the same starting glue types, $z_i^{nb} \in S_i$ if growth is deterministic. Consider then $s_{i,nb}$, the subtree subshape of $S_i$ rooted at $z_i^{nb}$ with origin at $\dot{z}_i$. $s_{i,nb}$ cannot include neighbours of $\dot{z}_i$ other than the root, otherwise $S_i$ as a whole would not be a tree. Hence, all positions in $s_{i,nb}$ other than $z_i^{nb}$ are at a distance greater than 2 from any other $S_k$ for $k \neq i$ (noting that in assumptions \ref{step:s_i_j} and \ref{step:s_i_0}, exceptions only cover $\dot{z}_i$ and its neighbours, so neighbours of $z_i^{nb}$, their neighbours, and so on cannot invoke either exception). So, no position in $s_{i,nb}$ can neighbour any $z_i^{des}$ that neighbours some $S_k \neq S_i$ (incoming tiles onto positions in and adjacent to $s_{i,nb}$ cannot be blocked except by tiles in $s_{i,nb}$).  Hence $s_{i,nb} \simeq s_{j,nb}$ if growth is deterministic. Then, if  $\dot{z}_i$ and $\dot{z}_j$ are branching points, assumption \ref{step: subtree} is violated. Otherwise, as assumption \ref{step: subtree} also requires that each $S_i$ contains a branching point, $s_{i,nb}$ and $s_{j,nb}$ also contain branching points, and so some rooted subtree $s_{i,nb}' \subseteq s_{i,nb} \subseteq S_i$ and $s_{j,nb}'  \subseteq s_{j,nb} \subseteq S_j$ are such that $s_{i,nb}' \simeq s_{j,nb}'$, also violating assumption \ref{step: subtree}.
     \item Assume $z_j^{nb} \notin S_j$. For the reasons above, $z_j^{nb}$ cannot be the position of a blocked tile, so $z_i^{nb} \notin S_i$. If $\ddot{z}_i$ is a branching point, the segment $\{ \ddot{z}_i, \dot{z}_i \}$ must bend onto a third coordinate that is not $z_i^{nb}$, resulting in $\{ \ddot{z}_i, \dot{z}_i \}$ being a straight line segment of length 2 (Figure \ref{fig:edge_cases_2}.b), contradicting assumption \ref{step:straight}. If $\ddot{z}_i$ is not a branching point, then by assumption \ref{step:s_i_j}, $z_i^{des}$ must neighbour $S_0$ (since it cannot neighbour any other $S_k$ for $k \neq i$). $z_i^{des}$ also neighbours either $\dot{z}_i$ or one of its neighbours. Assume first it neighbours $\dot{z}_i$. The position that neighbours $z_i^{des}$ in $S_0$ must be within distance 2 of $\dot{z}_i$, so this position necessarily bends onto the segment $\{\ddot{z}_i,\dot{z}_i\}$. This segment, in turn, bends onto a third coordinate that is not $z_i^{nb}$, and hence $\{\ddot{z}_i,\dot{z}_i\}$ is once again a straight line segment of length 2 (Figure \ref{fig:edge_cases_2}.b), contradicting assumption \ref{step:straight}. Finally, if $z_i^{des}$ neighbours a neighbour of $\dot{z}_i$, it must neighbour $\ddot{z}_i$ as well by assumption \ref{step:s_i_0}. Similarly, the equivalent position $z_{i \rightarrow j}^{des}$ must neighbour $S_0$. 
     Thus either blocking is non-differential,  or the structure violates assumption \ref{step:exactly1}.
\end{enumerate}

Hence, a contradiction in one of our assumptions arises if a differential block with $z_i^{des}$ that neighbours $S_i$ and $S_k$ with $k \neq i$ occurs.
\end{proof}

\begin{theorem}
Consider a sequenced aTAM instance $\mathcal{P} = (A_0, Q, g)$ that deterministically produces a target shape $S$ (obeying assumptions in Definition \ref{def:targetshape}) from a starting shape $S_0 = E(A_0)$. The number of unique tile types required in $A_0$ grows with at least $\frac{M}{4}$.
\label{theorem:InfiniteInteractions}
\end{theorem}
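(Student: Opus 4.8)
Since this theorem lifts the restriction $g\geq 0$ of Theorem~\ref{theorem:infinite_interactions}, the only new phenomenon to control is repulsion-induced tile blocking, and the plan is to obtain the bound by combining the three preceding results of this appendix --- Lemma~\ref{theorem:need_differential}, Corollary~\ref{theorem:need_junction} and Lemma~\ref{theorem:target_unachievable} --- once the hypothesis of plain determinism has been upgraded to oriented-determinism. For the upgrade I would argue exactly as at the start of the proof of Theorem~\ref{theorem:infinite_interactions}: assumption~\ref{step:strong_weak} of Definition~\ref{def:targetshape} (trivial rotational symmetry of both $S$ and $S_0$, together with the absence of any proper subshape of $S$ equivalent to $S_0$) forces the seed $A_0$ to occupy a unique position and orientation inside every assembled configuration, so a deterministic instance producing $S$ is in fact oriented-deterministic --- which is precisely the setting in which the three preceding statements are phrased.

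The heart of the argument is to show that \emph{no differential blocking event} occurs between two subconfigurations anchored to glues of the same type; granting this, Lemma~\ref{theorem:need_differential} immediately gives at least $M$ distinct open glue types in $A_0$, hence at least $\frac{M}{4}$ tile types. I would rule out differential blocking by induction on the time at which blocking events happen. Suppose for contradiction that some differential block occurs, and let $t_s$ be the earliest such time, with destabilizing coordinate $z_i^{\rm des}$; by minimality no differential blocking has happened before $t_s$, so up to that point any two subconfigurations anchored to a common glue evolve identically under the exchange argument of Lemma~\ref{theorem:the_beginning}. Now $z_i^{\rm des}$ is a common neighbour of the attracting tile $a^{\rm att}$, which belongs to the subshape $S_i$ being extended, and of the repelling tile $a^{\rm rpl}$, which belongs to some $S_k$ of the cover $S = \bigcup_{i=0}^{M} S_i$. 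If $z_i^{\rm des}$ neighbours only $S_i$ among $S_0,\dots,S_M$ --- in particular when $a^{\rm rpl}\in S_i$ itself --- then the argument in the proof of Corollary~\ref{theorem:need_junction} shows that the matching destabilizing coordinate of the partner subconfiguration is blocked in the same way, so the event is non-differential, contradicting the choice of $t_s$. Otherwise $z_i^{\rm des}$ neighbours $S_i$ and a distinct $S_k$ with $k\neq i$ (possibly $k=0$), and Lemma~\ref{theorem:target_unachievable} rules out any differential block of this form for a target shape, again a contradiction. Hence no differential blocking ever occurs, and the theorem follows from Lemma~\ref{theorem:need_differential}.

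I expect the main obstacle to be organisational rather than conceptual: one must verify that the case split on the location of the repelling tile $a^{\rm rpl}$ is exhaustive --- that every destabilizing coordinate genuinely neighbours its growing subconfiguration via $a^{\rm att}$, that every occupied coordinate lies in some $S_k$ of the cover, and that the two cases dovetail with the exact ``unless a previous differential blocking event has occurred'' hypotheses on which Corollary~\ref{theorem:need_junction} and Lemma~\ref{theorem:target_unachievable} rely when deducing lock-step growth of equal-glue subconfigurations up to time $t_s$. All the genuinely geometric work --- showing that a differential block straddling two subshapes would violate the distance, branching, and straight-segment conditions of Definition~\ref{def:targetshape} --- is already discharged inside Lemma~\ref{theorem:target_unachievable}, so this theorem should need no new geometric estimates, only a careful assembly of the pieces.
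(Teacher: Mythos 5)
Your proposal follows essentially the same route as the paper's proof: upgrade to oriented-determinism via assumption~\ref{step:strong_weak} of Definition~\ref{def:targetshape}, then combine Lemma~\ref{theorem:need_differential}, Corollary~\ref{theorem:need_junction} and Lemma~\ref{theorem:target_unachievable} to rule out every form of differential blocking and conclude that at least $M$ distinct open glue types (hence $M/4$ tile types) are required. The explicit induction on the earliest differential-blocking time $t_s$ is a modest tightening of the paper's ``unless a previous differential blocking event has occurred'' bookkeeping, but the decomposition, the case split on where $z_i^{\rm des}$ sits, and the appeal to the three preceding appendix results are identical to the paper's argument.
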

\begin{proof}
Assumption \ref{step:strong_weak} and determinism imply that $\mathcal{P}$ is oriented-deterministic. As per Lemma \ref{theorem:need_differential} and Corollary \ref{theorem:need_junction}, only differential blocking where $z_i^{des}$ neighbours $S_i$ and $S_k$ with $k \neq i$ will allow $A_0$ to require fewer than $\frac{M}{4}$ tile types. However, Lemma \ref{theorem:target_unachievable} shows that such a differential tile block necessarily results in $S$ failing one of assumptions \ref{step:exactly1}- \ref{step:straight} in Definition \ref{def:targetshape}. Hence, a target shape fulfilling the assumptions of Definition \ref{def:targetshape} cannot be deterministically assembled by $\mathcal{P}$ if $A_0$ has fewer than $\frac{M}{4}$ tile types. 
\end{proof}

The treeangle shape defined in the main text was constructed such that it can be partitioned in a way that satisfies the assumptions of Definition \ref{def:targetshape}. Hence this Theorem can be used in the proof of Lemma \ref{theorem:scale}, taking the place of Theorem \ref{theorem:infinite_interactions} which is only valid if $g \geq 0$.

\begin{figure}
 \includegraphics[width=\linewidth]{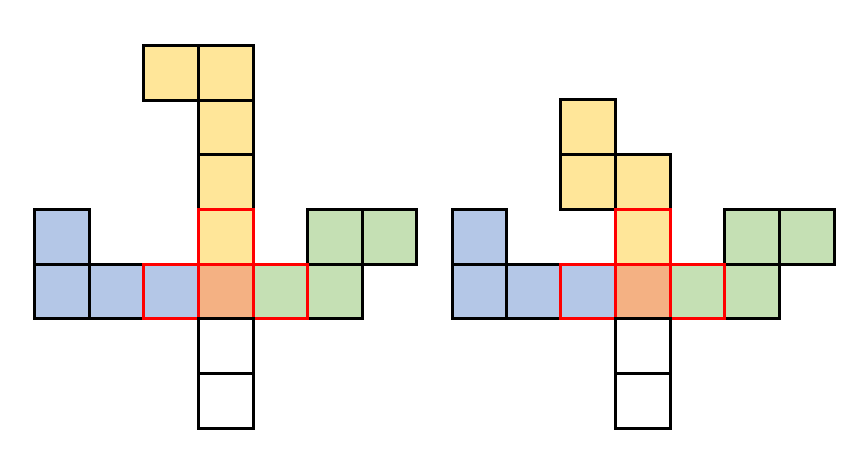}
  \caption{An illustration of assumption \ref{step: subtree} in the definition of shapes $S_i$ used in Theorem \ref{theorem:InfiniteInteractions}. The two shapes are two instances of $S_i$, where the lower white tiles connect to some larger $S_0$. A single branching point (also the \textbf{origin tile} of the subtree subshapes) is given in orange, while \textbf{root tiles} are outlined in red. The left oriented shape obeys assumption \ref{step: subtree}, as each subtree subshape (coloured in blue, yellow and green) are distinct, while the right oriented shape violates this assumption as the yellow rooted subtree subshape is equivalent to the green under a $90 ^{\circ}$ clockwise rotation followed by a $(1,-1)$ translation. }
  \label{fig:subtree_expl}
\end{figure}

\begin{figure}
\begin{subfigure}[b]{0.45\textwidth}
    \caption{}
      \includegraphics[width=\textwidth]{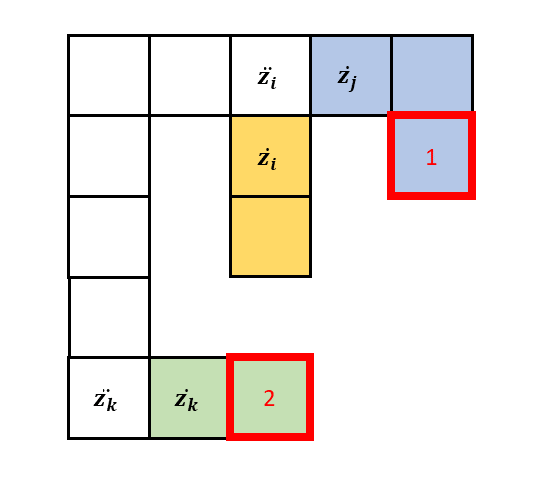}
    \end{subfigure}
    \begin{subfigure}[b]{0.45\textwidth}
    \caption{}
      \includegraphics[width=\textwidth]{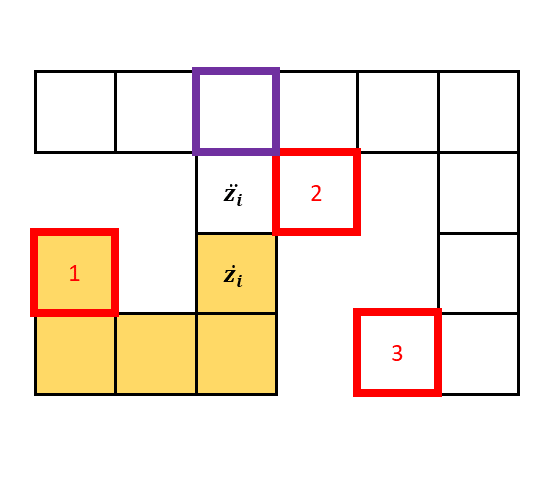}
    \end{subfigure}
 
  \caption{Illustrations of assumptions \ref{step:s_i_j} and \ref{step:s_i_0} in  the definition of shapes in Theorem \ref{theorem:InfiniteInteractions}. Tiles belonging to $S_0$ are in white, and tiles belonging to each $S_i$ are assigned a single color. Tiles violating assumptions are outlined in red. \textbf{a.} An illustration of how assumption \ref{step:s_i_j} can be violated. Violation 1 is due to a tile of $S_j$ (with $\ddot{z}_j = \ddot{z}_i$) that is not $\dot{z}_j$ being within distance $2$ of $\dot{z}_i$, while violation 2 is due to a tile of $S_k$ ($\ddot{z}_k \neq \ddot{z}_i$) being within distance $2$ of an arbitrary tile of $S_i$. \textbf{b.} An illustration of how assumption \ref{step:s_i_0} can be violated. Violation 1 is due to a tile of $S_0$ being within distance 2 of a tile of $S_i$, with both tiles being far from $\ddot{z}_i$ and $\dot{z}_i$. Violation 2 is due to two tiles (the violating tile and the purple outlined tile) in $S_0$ being within distance 2 of $\dot{z}_i$. Finally, violation $3$ is due to a tile of $S_0$, far from $\ddot{z}_i$, being within distance 2 of a neighbour of $\dot{z}_i$. }
  \label{fig:assume_4_5}
\end{figure}

\begin{figure}
\begin{subfigure}[b]{0.45\textwidth}
    \caption{}
      \includegraphics[width=\textwidth]{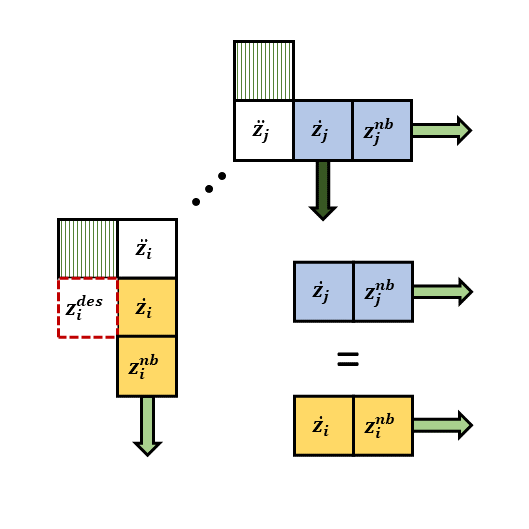}
    \end{subfigure}
    \begin{subfigure}[b]{0.45\textwidth}
    \caption{}
      \includegraphics[width=\textwidth]{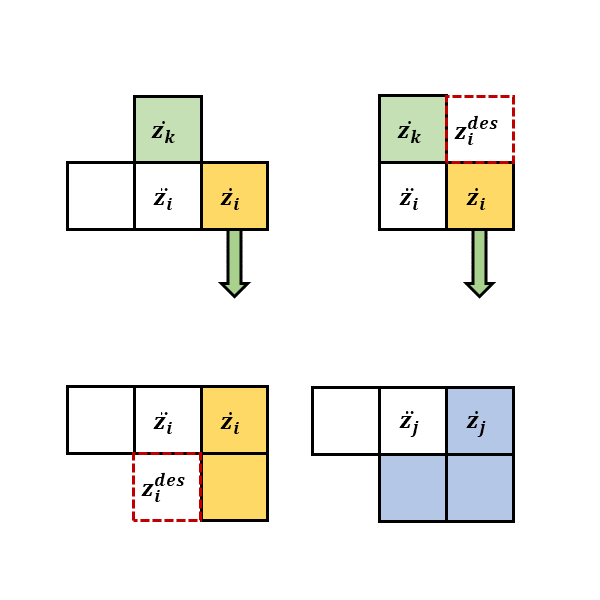}
    \end{subfigure}
 
  \caption{Theorem \ref{theorem:InfiniteInteractions} lays out two possibilities in the case where a tile-blocked coordinate $z^{des}_i$ remains unfilled at the end of a complete trajectory. We show here fragments of the terminal configuration $A_{\zeta,c}'$. $\mathbf{A_{j \rightarrow i}}$ is given in yellow, $\mathbf{A_{j}}$ is given in blue and \textbf{destabilizing tiles} $\mathbf{z_i^{des}}$ are in dotted red,. We illustrate how the two possibilities violate assumptions on $S$ laid out at the beginning of Theorem \ref{theorem:InfiniteInteractions}. \textbf{a.} An illustration of how case 1 violates assumptions on $S$. The striped tile can be either a \textbf{tile of $\mathbf{A_0}$ or $\mathbf{\dot{z}_k}$.}  \textbf{Arrows} represent some arbitrary tree-shaped configurations, where arrows of the same color represent configurations with equivalent (up to rotations and translations) shapes. The tile at $z_j^{nb}$ cannot be destabilized upon transformation into $z_i^{nb}$ as it is too far away from any coordinate of $S_0$ or any other $S_k$, and hence $z_i^{nb} \neq z_i^{des}$. Since branches that grow from $z_i^{nb}$ are too far away from any $S_k$ for $k \neq i$ to be blocked, The branches that grow from $z_i^{nb}$ and $z_j^{nb}$ must have identical shapes, violating assumptions on the shape $S$. \textbf{b.} An illustration of how case 2 violates assumptions on $S$. Green tiles are $\dot{z}_k$ for some $k\neq i,j$. The green arrows represent \textbf{arbitrary tree-shaped configurations}. If $\ddot{z}_i$ is branching (Top left), or if $z_i^{des}$ neighbours $\dot{z}_i$ (Top right), $\{ \ddot{z}_i, \dot{z}_i  \}$ is a straight line segment of length 2, breaking assumption \ref{step:straight}. Otherwise, if the destabilizing tile neighbours $\ddot{z}_i$ and a neighbour of $\dot{z}_i$ (Bottom left), but its equivalent is present in $S_j$, then $S_j$ has two coordinates neighbouring $S_0$ (Bottom right). }
  \label{fig:edge_cases_2}
\end{figure}
\end{appendices}

\end{document}